\newcommand{\ket}[1]{|#1\rangle}                        
\newcommand{\bra}[1]{\langle #1|}                       
\newtheorem{theorem}{Theorem}
\newtheorem{lemma}{Lemma}
\newtheorem{corollary}{Corollary}
\newcommand{\Tr}{\text{Tr}}
\newcommand\id{\leavevmode\hbox{\small1\kern-3.3pt\normalsize1}}
\newcommand{\ack}{\subsection*{\normalsize \sf \textbf{Acknowledgment}}}
\newcommand{\bit}{\begin{itemize}}
	\newcommand{\eit}{\end{itemize}\par\noindent}
\newcommand{\ben}{\begin{enumerate}}
	\newcommand{\een}{\end{enumerate}\par\noindent}
\newcommand{\beq}{\begin{equation}}
	\newcommand{\eeq}{\end{equation}\par\noindent}
\newcommand{\beqa}{\begin{eqnarray*}}
	\newcommand{\eeqa}{\end{eqnarray*}\par\noindent}
\newcommand{\beqn}{\begin{eqnarray}}
	\newcommand{\eeqn}{\end{eqnarray}\par\noindent}
\newtheorem{definition}{Definition}
\begin{document}

\title{Nonclassicality in correlations without causal order
}
	
\author{Ravi Kunjwal}
\email{quaintum.research@gmail.com}
\affiliation{Universit\'e libre de Bruxelles, QuIC, Brussels, Belgium}
\affiliation{Aix-Marseille University, CNRS, LIS, Marseille, France}

\author{Ognyan Oreshkov}
\email{ognyan.oreshkov@ulb.be}
\affiliation{Universit\'e libre de Bruxelles, QuIC, Brussels, Belgium}
	
\date{\today}                                           
\maketitle
\begin{abstract}

Bell scenarios are multipartite scenarios that exclude any signalling between parties. This leads to a strict hierarchy of classical, quantum, and non-signalling correlations in such scenarios.
Here we consider a minimal relaxation of non-signalling: each party is allowed to receive a system once, implement any local intervention on it, and  send out the resulting system once. 
Crucially, unlike Bell, we make no \textit{global} assumption about causal relations \textit{between} parties, \textit{e.g.}, they could be embedded in some exotic spacetime with indefinite causal order. We do make a causal assumption \textit{local} to each party, \textit{i.e.}, the input received by it causally precedes the output it sends out.  We then ask: Can we device-independently certify the nonclassicality of multipartite correlations in such scenarios, just as Bell inequality violations do so in Bell scenarios? \textit{A priori}, this is not clear: without some assumptions on the underlying physics (\textit{e.g.}, non-signalling), parties can realize arbitrary correlations. We therefore make a minimal assumption of logical consistency on the underlying physics, \textit{i.e.}, it must be free of time-travel antinomies \textit{without} imposing any restrictions 
on the local interventions of the parties. We then define antinomicity as a device-independent notion of nonclassicality and prove a strict hierarchy between correlation sets based on their antinomicity. An antinomic correlation cannot be explained by a classical physical theory compatible with free local interventions on pain of logical contradictions in the theory. On the other hand, parties exchanging quantum systems can witness antinomicity while respecting logical consistency. Antinomicity reduces to Bell nonlocality for non-signalling parties. It also resolves a conceptual puzzle, namely, the failure of causal inequalities as witnesses of nonclassicality: antinomicity implies causal inequality violations, but not conversely.
	\end{abstract}
\tableofcontents
	
\section{Introduction}

The scientific enterprise hinges on uncovering causal explanations for observed correlations. Usually these explanations assume definiteness of causal order between the relevant events. If, however, causal order is subject to quantum indefiniteness in a similar sense as physical properties like position and momentum, what would such explanations look like?

Aside from its intrinsically foundational motivations, this question also stems from Hardy's observation that a deeper theory that recovers both quantum theory and general relativity in appropriate physical regimes must meet the challenge of combining the radical aspects of both theories: the intrinsically \textit{probabilistic} character of quantum theory and the intrinsically \textit{dynamical} nature of causal structure in general relativity  \cite{Hardy05,Hardy07,Hardy16}.

The process-matrix framework \cite{OCB12} is a relatively mild extension of standard quantum theory that models a scenario with local quantum operations, but without assuming a global causal order between these operations.
The framework asks for the most general way of assigning probabilities to the outcomes of such operations: this leads to the central object of the framework, namely, the process matrix \cite{OCB12,SCM18}. The process matrix is a generalization of states and channels to a higher-order object that encodes correlations between local quantum operations. It can be conceptualized as a description of the environment that surrounds the local operations and establishes causal links between them. It will be useful in subsequent discussion to refer to each local operation as being specific to a `party' that implements this operation in a closed `lab', with an input port to receive quantum systems from the environment and an output port to send quantum systems to the environment.\footnote{A `lab' here is simply a placeholder for the locus of a freely chosen local operation.} We will then refer to correlations in the process-matrix framework as multi\textit{partite} correlations.\footnote{This is in keeping with the usual tradition of associating parties to such local operations, \textit{e.g.}, in Bell scenarios. This does not, of course, mean that one necessarily needs to think in terms of parties: one could, more generally, adopt a circuit-based perspective \cite{WBO23} and avoid all mention of `parties'. For our purposes, however, thinking in terms of parties is useful when drawing analogies with Bell scenarios.}

A key feature of the process-matrix framework is that process matrices never lead to correlations that can generate logical contradictions, even though the local interventions are assumed to be freely chosen \cite{BCR19}. In general, dropping the assumption of a definite causal order can generate grandfather or information antinomies \cite{BT21}, but the condition of \textit{logical consistency} in the framework excludes the possibility of such contradictions \cite{OCB12}. Perhaps the most surprising thing about process matrices is that, while satisfying logical consistency, they permit the violation of \textit{causal inequalities}, namely, constraints on multipartite correlations that follow from assuming a definite (but possibly unknown) causal order between the labs. The first example of a causal inequality violation was demonstrated in a bipartite scenario where each party receives a qubit from the environment, implements a local quantum instrument, and sends out a qubit to the environment \cite{OCB12}. It was also shown that in the limit where each set of local quantum operations is diagonal in a fixed basis, 
correlations with definite causal order are recovered, \textit{i.e.}, in this classical limit of the framework, causal inequalities cannot be violated. This led to the conjecture that definite causal order always arises in such a classical limit in any multipartite scenario \cite{OCB12}.

Interestingly, it was shown soon after that beyond the bipartite scenario, the classical limit of the process-matrix framework \textit{does} admit classical processes that violate causal inequalities \cite{BFW14, BW16}. In fact, already in a tripartite scenario, it becomes possible to violate causal inequalities in this classical limit. Furthermore, a causal inequality violation occurs in the classical \textit{deterministic} limit of the framework \cite{BW16}. This then puts into question the status of a causal inequality violation as a signature of nonclassicality, \textit{e.g.}, as an operational signature of an underlying theory that combines the aforementioned radical aspects of quantum theory and general relativity. Our goal is to address this deficiency of causal inequalities as witnesses of nonclassicality. Before we describe our approach, we motivate it by considering nonclassicality in the case of Bell scenarios.

In the case of Bell scenarios, non-signalling is a condition imposed on multipartite correlations that follows from the impossibility of signalling across spacelike separation. Signalling correlations are therefore excluded in Bell scenarios because they violate relativistic causality. However, non-signalling on its own is famously unhelpful \cite{PR94} in pinning down the set of quantum correlations: these form a strict subset of the set of non-signalling correlations. Yet non-signalling \textit{quantum} correlations are still strong enough to resist classical explanation, \textit{i.e.}, they can violate Bell inequalities \cite{Bell64,CHSH}. Here the relevant notion of classicality---namely, \textit{local causality}---for non-signalling correlations singles out the Bell polytope. Bell inequalities bound the strength of correlations within this polytope.

In the case of multipartite correlations without any non-signalling constraints, as long as they are realizable by operations localised in a spacetime that respects  relativistic causality, they are (by definition) causal.
This constraint---realizability in a spacetime that respects relativistic causality---leads to causal inequalities, which are respected by quantum theory \cite{WDA21, PS21}. However, as we noted, the process-matrix framework surprisingly allows for a violation of these causal inequalities, both in the case where the local operations are quantum operations \cite{OCB12} and in the case where they are classical stochastic operations \cite{BFW14,BW16}.
Hence, although causal inequalities are often considered similar in spirit to Bell inequalities \cite{OCB12}, this analogy fails in the sense of certifying nonclassicality: any non-signalling correlation that violates a Bell inequality is nonclassical, but there exist signalling correlations that violate causal inequalities without requiring any nonclassical resources. Put differently, in a world without definite causal order, the difference between a classical description and a nonclassical one is not always witnessed by causal inequality violations.

This inability of causal inequalities to generically separate quantum from classical correlations in the process-matrix framework means that these inequalities do not directly probe \textit{those aspects} of these multipartite correlations that are key to their \textit{nonclassicality}.
The notion of classicality we propose aims to pin down the quantum/classical distinction in a scenario without global causal assumptions in the same sense that Bell inequalities do so in scenarios where the causal assumption is one of a global common cause in the past of all the non-communicating parties.

We refer to our notion of classicality as \textit{deterministic consistency} or \textit{nomicity} \cite{KO24}. This notion is motivated by a particular way of taking the classical limit of correlations in the process-matrix framework and is intimately related to the classical \textit{deterministic} limit of processes in the framework \cite{BW16}. Taking this notion seriously leads us to the recognition that not every causal inequality is an appropriate analogue of a Bell inequality when viewed as a witness of nonclassicality. For example, we will see that the causal inequality violated by the Araújo-Feix/Baumeler-Wolf (``AF/BW" or ``Lugano") process \cite{BW16} is, from this perspective, not a device-independent witness of nonclassicality even as it witnesses noncausality device-independently. On the other hand, all bipartite causal inequalities serve as such witnesses of nonclassicality. The inequalities that follow from our notion of classicality are respected by causal correlations but are not, in general, saturated by them.

We now outline the structure of the paper. In Section \ref{sec:2}, we introduce the relevant notions from the process-matrix framework as well as introduce some definitions that are needed for our purposes. In Section \ref{sec:3}, we formally introduce our notion of classicality, define different sets of correlations with different degrees of nonclassicality (which we term \textit{antinomicity}), prove some general properties of these sets, and propose a weight-based measure of antinomicity that we term \textit{antinomy weight} of a correlation. We prove fundamental limits on the strength of correlations realizable in the process-matrix framework, showing that they cannot be arbitrarily antinomic. Specifically, we show that if a deterministic correlation is achievable in the (quantum) process-matrix framework, it is achievable with a deterministic classical process (or process function \cite{BT21}) whose causal structure faithfully corresponds to the signalling exhibited by the correlation (Theorem \ref{thm:det}). In particular, this implies that the set of process-matrix correlations is strictly contained in the set of all possible correlations (Corollary \ref{cor:strictincl1}) \cite{BAB19}. 
We also prove a \textit{sufficient} condition for the antinomicity of deterministic correlations (Corollary \ref{cor:siboncyc}) and a \textit{sufficient} condition for the causality of deterministic correlations (Corollary \ref{cor:causalityofvertices}), building on results of Ref.~\cite{TB22}. Furthermore, we establish a strict hierarchy of antinomicity between four different correlation sets that naturally arise in our approach (\textit{cf.}~Eq.~\eqref{eq:strictinclusions}). In Section \ref{sec:4}, we consider the bipartite scenario with a binary setting and a binary outcome per party. We use signalling graphs introduced in Section \ref{sec:2} to represent and classify vertices of the bipartite correlation polytope, followed by the identification of vertices that maximally violate the facet causal inequalities (obtained in Ref.~\cite{BAF15}) in this scenario. We then consider process-matrix violations of these inequalities from Ref.~\cite{BAF15} and compute the antinomy weight of the correlation achieved by the process-matrix constructions. After thus training our intuitions in the bipartite scenario, in Section \ref{sec:5}, we move on to the tripartite scenario with a binary setting and a binary outcome per party. We classify the vertices in this scenario via signalling graphs before moving on to the question of characterizing those noncausal vertices that admit classical realizations. These latter vertices are termed \textit{nomic vertices} since their convex hull defines the \textit{nomic polytope}, \textit{i.e.}, the polytope of nomic (classical) correlations. In the case where the parties discard their output systems (\textit{i.e.}, a non-signalling scenario), the nomic polytope is exactly the Bell polytope. Unlike the bipartite scenario, where causal inequalities are in one-to-one correspondence with our witnesses of nonclassicality, in the tripartite scenario, we identify an antinomicity witness, namely, the `Guess Your Neighbor's Input, or NOT' (GYNIN) inequality. This inequality is not saturated by any causal strategy and it displays a gap between the causal, classical, and nonclassical correlations in our approach. In particular, the inequality reveals the nonclassicality of the Baumeler-Feix-Wolf (BFW) process \cite{BFW14} compared to the AF/BW process \cite{BW16}. We conclude with open questions and outlook for future work in Section \ref{sec:discussion}.

\section{Preliminaries}\label{sec:2}
We introduce some preliminary notions that we will use in the rest of the paper.

\subsection{The operational paradigm}

The scenario where our notion of classicality is applicable is described by the following operational paradigm (following Ref.~\cite{OCB12}): there exist $N$ labs, labelled by $k\in\{1,2,\dots,N\}$, embedded in some environment such that each lab receives an input system from the environment and subsequently sends out an output system to the environment; party $S_k$ (in the $k$-th lab) receives a classical setting (or question) denoted $a_k\in A_k$ and reports a classical outcome (or answer) denoted $x_k\in X_k$. To determine the answer for a given question, $S_k$ can implement a local intervention depending on $a_k$ on the input system it receives from the environment and depending on the result of this local intervention report the corresponding answer $x_k$. 
The communication between the labs is mediated entirely by the environment, with the parties limited to  local interventions in their respective labs. The central object of investigation is the multipartite correlation $p(\vec{x}|
\vec{a})$ that the $N$ parties can achieve using their local interventions. Note that the parties can execute at most a \textit{single-round} communication protocol in the operational paradigm we envisage since each party receives and sends out a system exactly once.\footnote{
	Our scenario specializes to a Bell scenario when the environment is such that it does not allow signalling between the parties for any local interventions on the local systems.}
\subsection{Process-matrix framework}
In keeping with the general operational paradigm, we denote the different parties by $\{S_k\}_{k=1}^N$, but we now assume that they perform local \textit{quantum} operations, \textit{i.e.}, party $S_k$ has an incoming quantum system denoted $I_k$ with Hilbert space $\mathcal{H}^{I_k}$, an outgoing quantum system $O_k$ with Hilbert space $\mathcal{H}^{O_k}$, and the party can perform arbitrary quantum operations from $I_k$ to $O_k$. A local quantum operation is described by a quantum instrument, \textit{i.e.}, a set of completely positive (CP) maps $\{\mathcal{M}^{S_k}_{x_k|a_k}:\mathcal{L}(\mathcal{H}^{I_k})\rightarrow\mathcal{L}(\mathcal{H}^{O_k})\}_{x_k\in X_k}$, where the setting $a_k\in A_k$ labels the instrument and $x_k\in X_k$ labels the classical outcome associated with each CP map.\footnote{Without loss of generality, we assume that the outcome set $X_k$ is identical for all settings $a_k\in A_k$: this can be ensured by including, if needed, outcomes that never occur, \textit{i.e.}, those represented by the null CP map, for settings that have fewer non-null outcomes than some other setting.} Summing over the classical outcomes yields a completely positive and trace-preserving (CPTP) map $\mathcal{M}^{S_k}_{a_k}:=\sum_{x_k}\mathcal{M}^{S_k}_{x_k|a_k}$. The correlations between the classical outcomes of the different labs given their classical settings are given by
\begin{align}\label{eq:corrpm}
	&p(x_1,x_2,\dots,x_N|a_1,a_2,\dots,a_N)\nonumber\\
	=&\Tr(W M^{I_1O_1}_{x_1|a_1}\otimes M^{I_2O_2}_{x_2|a_2}\otimes\dots\otimes M^{I_NO_N}_{x_N|a_N}),
\end{align}
where 
\begin{align}
	&M_{x_k|a_k}^{I_kO_k}:=[\mathcal{I}^{I_k}\otimes\mathcal{M}^{S_k}_{x_k|a_k}(d_{I_k}\ket{\Phi^+}\bra{\Phi^+})]^{\rm T}\\
	&\in\mathcal{L}(\mathcal{H}^{I_k}\otimes\mathcal{H}^{O_k})
\end{align}
is the Choi-Jamiołkowski (CJ) matrix associated to the CP map $\mathcal{M}^{S_k}_{x_k|a_k}$, $\ket{\Phi^+}\in\mathcal{H}^{I_k}\otimes\mathcal{H}^{I_k}$ being the maximally entangled state, \textit{i.e.}, $\ket{\Phi^+}=\frac{1}{\sqrt{d_{I_k}}}\sum_{j=1}^{d_{I_k}}\ket{j}\ket{j}$, and $\mathcal{I}^{I_k}:\mathcal{L}(\mathcal{H}^{I_k})\rightarrow\mathcal{L}(\mathcal{H}^{I_k})$ being the identity channel. We have, for the CJ matrix associated to the CPTP map $\mathcal{M}^{S_k}_{a_k}$,
\begin{align}
	M_{a_k}^{I_kO_k}\geq 0, \Tr_{O_k}M_{a_k}^{I_kO_k}=\id_{I_k}
\end{align}

The operator $W\in\mathcal{L}(\mathcal{H}^{I_1}\otimes\mathcal{H}^{O_1}\otimes\mathcal{H}^{I_2}\otimes\mathcal{H}^{O_2}\otimes\dots\otimes\mathcal{H}^{I_N}\otimes\mathcal{H}^{O_N})$ is called a process matrix and it establishes correlations between the local interventions of the labs. $W$ satisfies the following constraints:
\begin{align}\label{eq:validprocess1}
	&W\geq 0,\\
	&\forall M^{S_k}\geq 0\textrm{ where } M^{S_k}\in\mathcal{L}(\mathcal{H}^{I_k}\otimes\mathcal{H}^{O_k})\nonumber\\
	&\textrm{and }\Tr_{O_k}M^{S_k}=\id_{I_k}: \quad\Tr(W \bigotimes_{k=1}^N M^{S_k})=1.\label{eq:validprocess2}
\end{align}

\subsection{Process functions, classical processes, and classical quasi-processes}\label{subsec:classproc}
Although the classical process framework arises in the `diagonal limit' of the process-matrix framework, namely, when all the inputs and outputs of the parties are taken to be diagonal in a fixed basis, it is helpful for our purposes to present it from first principles, \textit{a priori} independent of the process-matrix framework, following Ref.~\cite{BW16}.

In the classical process framework, each party $S_k$ has an incoming classical system represented by a random variable $I_k$ that takes values $i_k\in\{0,1,\dots,d_{I_k}-1\}$ and an outgoing classical system represented by a random variable $O_k$ that takes values $o_k\in\{0,1,\dots,d_{O_k}-1\}$.\footnote{With a slight but standard abuse of notation, we will often also use $I_k$ and $O_k$ to represent the respective sets in which these random variables take values.} The local operations of party $S_k$ are specified by the conditional probability distribution $P_{X_k,O_k|A_k,I_k}: X_k\times O_k\times A_k\times I_k\rightarrow[0,1]$, where the random variables $A_k$ and $X_k$ denote, respectively, the setting and outcome for party $S_k$. In the following, we will denote $P_{X_k,O_k|A_k,I_k}(x_k,o_k|a_k,i_k)$ as $p(x_k,o_k|a_k,i_k)$ and follow similar shorthand for all other probability distributions over random variables. Using the notation $\vec{X}:=(X_1,X_2,\dots,X_N)$, which takes values $\vec{x}:=(x_1,x_2,\dots,x_N)$, and $\vec{A}:=(A_1,A_2,\dots,A_N)$, which takes values $\vec{a}:=(a_1,a_2,\dots,a_N)$, the multipartite correlations, $P_{\vec{X}|\vec{A}}$, are then given by 
\begin{align}\label{eq:clcorr}
	&p(x_1,x_2,\dots,x_N|a_1,a_2,\dots,a_N)\nonumber\\
	=&\sum_{\vec{i},\vec{o}}\left(\prod_{k=1}^Np(x_k,o_k|a_k,i_k)\right)p(\vec{i}|\vec{o}),
\end{align}
where $P_{\vec{I}|\vec{O}}:\vec{O}\times\vec{I}\rightarrow[0,1]$ denotes the \textit{classical process} describing the environment, where $\vec{I}:=(I_1,I_2,\dots,I_N)$ takes values $\vec{i}:=(i_1,i_2,\dots,i_N)$, $\vec{O}=(O_1,O_2,\dots,O_N)$ takes values $\vec{o}:=(o_1,o_2,\dots,o_N)$, and we denote $P_{\vec{I},\vec{O}}(\vec{i}|\vec{o})$ by $p(\vec{i}|\vec{o})$.

The classical process $P_{\vec{I}|\vec{O}}$ cannot be an arbitrary conditional probability distribution: to be a classical process, $P_{\vec{I}|\vec{O}}$ must satisfy the condition of \textit{logical consistency}, \textit{i.e.}, for any set of local interventions, $P_{\vec{X},\vec{I},\vec{O}|\vec{A}}$ defined via
\begin{align}\label{eq:lc}
	p(\vec{x},\vec{i},\vec{o}|\vec{a}):=\left(\prod_{k=1}^Np(x_k,o_k|a_k,i_k)\right)p(\vec{i}|\vec{o})
\end{align}
should be a valid conditional probability distribution: $p(\vec{x},\vec{i},\vec{o}|\vec{a})\geq 0$ and $\sum_{\vec{x},\vec{i},\vec{o}}p(\vec{x},\vec{i},\vec{o}|\vec{a})=1$. This is a non-trivial requirement: it rules out the possibility of time-travel antinomies \cite{BW16,BT21} by excluding arbitrary conditional probability distributions $P_{\vec{I}|\vec{O}}$ from being classical processes. Indeed, it was shown in Ref.~\cite{BW16} that the condition of logical consistency for $P_{\vec{I}|\vec{O}}$ in Eq.~\eqref{eq:lc} is equivalent to the requirement that Eq.~\eqref{eq:clcorr} defines a valid multipartite correlation (\textit{i.e.}, satisfying positivity and normalization) for every set of local interventions, providing the classical counterpart of Eqs.~\eqref{eq:validprocess1} and \eqref{eq:validprocess2}.

In this paper, we will refer to an arbitrary conditional probability distribution $P_{\vec{I}|\vec{O}}$ as a \textit{classical quasi-process} and when this distribution is deterministic, we will represent it via a \textit{quasi-process function} $\omega:\vec{O}\rightarrow\vec{I}$, where  $\omega:=(\omega_1,\omega_2,\dots,\omega_N)$, $\omega_k:\vec{O}\rightarrow I_k$ for all $k\in\{1,2,\dots,N\}$, and $P_{\vec{I}|\vec{O}}=\delta_{\vec{I},\omega(\vec{O})}=\prod_{k=1}^N\delta_{i_k,\omega_k(\vec{o})}$.\footnote{This is non-standard terminology, but we will later find it useful in describing the most general set of correlations in multipartite scenarios.} Hence, a classical quasi-process that satisfies logical consistency will be called a \textit{classical process} \cite{BW16}. If a classical process is deterministic, \textit{i.e.}, $P_{\vec{I}|\vec{O}}:\vec{I}\otimes\vec{O}\rightarrow \{0,1\}$, then the quasi-process function associated with it is called a \textit{process function} \cite{BT21}.

Since we will often refer to the the fixed-point chacterization of process functions obtained in Ref.~\cite{BW16fp}, we recall this characterization below.

\begin{theorem}[Fixed point characterization of process functions \cite{BW16fp}]\label{thm:fixedpt}
	A function $\omega:\vec{O}\rightarrow\vec{I}$ is a process function if and only if it satisfies the following constraint: for every set of local interventions  $h:=(h_1,h_2,\dots,h_N)$, where $\forall k, h_k:I_k\rightarrow O_k$, the composition of $\omega$ with $h$, \textit{i.e.}, $\omega\circ h$, admits a unique fixed point, \textit{i.e.},
	\begin{align}
		\forall h, \exists!\vec{i}: \vec{i}=\omega\circ h(\vec{i}).
	\end{align}	
\end{theorem}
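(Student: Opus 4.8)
The plan is to prove both implications by reducing the logical consistency condition for the deterministic process $\omega$ to a counting statement about the fixed points of $\omega\circ h$. The central observation is that, once we substitute $p(\vec{i}|\vec{o})=\prod_{k}\delta_{i_k,\omega_k(\vec{o})}$ into Eq.~\eqref{eq:lc} and sum over all variables, the normalization $\sum_{\vec{x},\vec{i},\vec{o}}p(\vec{x},\vec{i},\vec{o}|\vec{a})$ collapses to a non-negative integer equal to the number of $\vec{i}$ satisfying $\vec{i}=\omega\circ h(\vec{i})$ for an appropriately chosen $h$. Logical consistency then reads as the demand that this integer equal $1$.

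For the forward direction ($\omega$ a process function $\Rightarrow$ unique fixed point), I would specialize to deterministic local interventions: given $h=(h_1,\dots,h_N)$ with $h_k:I_k\rightarrow O_k$, choose the single-outcome operation $p(x_k,o_k|a_k,i_k)=\delta_{o_k,h_k(i_k)}$ for each party, which is a valid (trace-preserving) local operation. Marginalizing over $\vec{o}$ forces $\vec{o}=h(\vec{i})$, leaving $\sum_{\vec{i}}\prod_k\delta_{i_k,\omega_k(h(\vec{i}))}$, which counts exactly the fixed points of $\omega\circ h$. Since non-negativity is automatic and logical consistency requires this sum to equal $1$, there is precisely one fixed point; as $h$ ranges over all deterministic interventions, the fixed-point condition follows.

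For the converse (unique fixed point $\Rightarrow$ $\omega$ a process function), I would take an arbitrary, possibly stochastic, set of local operations, sum out the outcomes $\vec{x}$ to obtain the marginal maps $p(o_k|a_k,i_k)$, and invoke the fact that each such column-stochastic map is a convex combination $\sum_{h_k}\lambda^{(k)}_{h_k}\delta_{o_k,h_k(i_k)}$ of deterministic functions (the vertices of the stochastic-matrix polytope). The product over parties is then itself a convex combination $\sum_{h}\lambda_h\prod_k\delta_{o_k,h_k(i_k)}$ over tuples $h=(h_1,\dots,h_N)$, with weights $\lambda_h=\prod_k\lambda^{(k)}_{h_k}$ summing to $1$. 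By linearity, the normalization of the joint distribution equals $\sum_h\lambda_h$ times the fixed-point count for each $h$, each equal to $1$ by hypothesis, so the total is $1$; non-negativity holds termwise. Hence logical consistency holds for every intervention and $\omega$ is a process function.

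The step I expect to require the most care is the converse, specifically the decomposition of the stochastic local maps into deterministic ones and the verification that summing over $\vec{x}$, decomposing, and then interchanging the sum over deterministic tuples $h$ with the sum over $(\vec{i},\vec{o})$ all preserve the bookkeeping, so that the normalization genuinely reduces to a weighted average of fixed-point counts. The forward direction is comparatively immediate once one recognizes deterministic interventions as a special case and identifies the normalization with the fixed-point count.
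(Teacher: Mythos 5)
Your proof is correct. Note that the paper itself states Theorem~\ref{thm:fixedpt} without proof, recalling it from Ref.~\cite{BW16fp}, so there is no in-paper argument to compare against; your argument is, however, precisely the standard one behind that result: testing logical consistency on deterministic interventions reduces the normalization $\sum_{\vec{x},\vec{i},\vec{o}}p(\vec{x},\vec{i},\vec{o}|\vec{a})$ to the number of fixed points of $\omega\circ h$ (so consistency forces exactly one), and the extension to arbitrary stochastic interventions follows because every column-stochastic local map decomposes as a convex mixture of deterministic functions, making the normalization a convex average of fixed-point counts, each equal to $1$. One cosmetic remark: in the forward direction your ``single-outcome operation'' should carry an explicit trivial outcome, \textit{e.g.}, $p(x_k,o_k|a_k,i_k)=\delta_{x_k,0}\,\delta_{o_k,h_k(i_k)}$ with $|X_k|=1$, so that it is literally a local intervention of the form entering Eq.~\eqref{eq:lc}; this changes nothing in the argument.
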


The convex hull of the set of all deterministic classical processes (specified by all possible process functions) is referred to as the \textit{deterministic-extrema polytope} of classical processes \cite{BW16}.
\subsection{Causal vs.~noncausal correlations}
A \textit{correlational scenario} consists of $N$ parties, where party $S_k$ has settings $A_k$ and each setting $a_k\in A_k$ has a set of possible outcomes $X_{a_k}$. Here $k\in[N]:=\{1,2,\dots,N\}$. Without loss of generality, we consider the situation where $|A_k|=M$ and $|X_{a_k}|=D$ for all $k, a_k$.\footnote{There is no loss of generality in the following sense: In any correlational scenario, one can define 
	$D:=\max_{a_k\in A_k, k\in[N]} |X_{a_k}|$ and $M:=\max_{k\in [N]}|A_k|$. One can then always add trivial outcomes---those that never occur---for settings that have fewer outcomes than $D$ in order to make sure that all settings have $D$ outcomes; similarly, one can always add trivial settings---those with a fixed outcome that always occurs, supplemented with $D-1$ trivial outcomes that never occur---for any party that has fewer settings than $M$ settings.} Hence, we will denote a given correlational scenario via the triple $(N,M,D)$, in a similar manner as in the case of Bell scenarios \cite{BCP14}. Here $N$ is the number of parties, $M$ is the number of measurement settings per party, and $D$ is the number of outcomes for each measurement setting.
For party $S_k$, the setting $A_k$ takes values $a_k\in\{0,1,\dots,M-1\}$, the outcome $X_k$ takes values $x_k\in\{0,1,\dots,D-1\}$. We summarize the $N$-party settings and outcomes below:
\begin{align}
	\textrm{Settings: }\vec{A}&:=(A_1,A_2,\dots,A_N),\nonumber\\
	\vec{a}&:=(a_1,a_2,\dots,a_N)\nonumber\\
	\textrm{Outcomes: }\vec{X}&:=(X_1,X_2,\dots,X_N)\nonumber\\
	\vec{x}&:=(x_1,x_2,\dots,x_N).
\end{align}
The observed probabilistic behaviour, or correlation, between the parties is given by $P_{\vec{X}|\vec{A}}: \vec{X}\times \vec{A} \rightarrow [0,1]$, satisfying non-negativity and normalization, \textit{i.e.},
\begin{align}
	P_{\vec{X}|\vec{A}}(\vec{x}|\vec{a})&\geq 0 \quad\forall \vec{x},\vec{a},\nonumber\\
	\sum_{\vec{x}}P_{\vec{X}|\vec{A}}(\vec{x}|\vec{a})&=1\quad\forall\vec{a}.
\end{align}
It will be convenient to think of this correlation as a $D^N\times M^N$ column-stochastic matrix of probabilities. Since there are no constraints beyond non-negativity and normalization on the correlation, the different probability distributions (columns of the matrix) comprising the correlation are independent. The column-stochasticity means that we can take each correlation $P_{\vec{X}|\vec{A}}$ as defining a point in the $M^N(D^N-1)$-dimensional real vector space of multipartite correlations. The set of all such correlations defines a \textit{correlation polytope} with deterministic vertices that are given by Boolean column-stochastic matrices, \textit{i.e.}, those given by $P_{\vec{X}|\vec{A}}:\vec{X}\times \vec{A} \rightarrow \{0,1\}$ satisfying normalization for each choice of setting $\vec{A}$. As noted earlier, we will often use the shorthand $p(\vec{x}|\vec{a}):=P_{\vec{X}|\vec{A}}(\vec{x}|\vec{a})$ to denote the entries of the correlation matrix $P_{\vec{X}|\vec{A}}$. 

The correlation $P_{\vec{X}|\vec{A}}$ is said to be causal if and only if it can be expressed as
\begin{align}\label{eq:causalcorr}
	p(\vec{x}|\vec{a})=\sum_{k=1}^Nq_k p(x_k|a_k)p_{x_k}^{a_k}(\vec{x}_{\backslash k}|\vec{a}_{\backslash k}),
\end{align}
where $q_k\geq 0$, $\sum_{k=1}^Nq_k=1$, and $p_{x_k}^{a_k}(\vec{x}_{\backslash k}|\vec{a}_{\backslash k})$ is a causal correlation between $(N-1)$ parties. Here $\vec{x}_{\backslash k}$ denotes the tuple $(x_1,x_2,\dots,x_{k-1},x_{k+1},\dots,x_N)$ (\textit{i.e.}, $\vec{x}$ without $x_k$) and similarly for $\vec{a}_{\backslash k}$. A $1$-party correlation $p(x_k|a_k)$ is, by definition, causal. This recursive form of causal correlations was originally derived from a principle of causality which essentially says that a freely chosen setting cannot be correlated with properties of its causal past or causal elsewhere \cite{OG16}. An intuitive way to interpret Eq.~\eqref{eq:causalcorr} is the following: In a scenario with definite (but possibly unknown) causal order, (at least) one party must be in the causal past or causal elsewhere of every other party. The probability $q_k$ for this to be the case for a specific party $S_k$ cannot depend on anyone's settings, while the probability $p(x_k|a_k)$ for the outcome of that party to take a particular value cannot depend on the settings of the others \cite{OG16}.

For the case of deterministic correlations---namely, those where $p(\vec{x}|\vec{a})\in\{0,1\}$ for all $\vec{x},\vec{a}$---the above definition reduces to 
\begin{align}\label{eq:causalcorrdet}
	&\exists k\in\{1,2,\dots, N\}:\nonumber\\
	&p(\vec{x}|\vec{a})=\delta_{x_k, f_k(a_k)}\delta_{\vec{x}_{\backslash k},f_{\backslash k}^{a_k}(\vec{a}_{\backslash k})},
\end{align}
where $p(\vec{x}_{\backslash k}|\vec{a}):=\delta_{\vec{x}_{\backslash k},f_{\backslash k}^{a_k}(\vec{a}_{\backslash k})}$ is a deterministic causal correlation between $N-1$ parties, specified by the set of functions 
\begin{align}
	f_{\backslash k}^{a_k}:=(f_1^{a_k},f_2^{a_k},\dots,f_{k-1}^{a_k},f_{k+1}^{a_k},f_{k+2}^{a_k},\dots, f_N^{a_k}),
\end{align}
where $f_j^{a_k}:\vec{A}_{\backslash k}\rightarrow X_j$ for all $j\in[N]\backslash\{k\}$.
In this deterministic case, the recursive nature of the definition entails that every
subset of parties must admit at least one party whose outcome is independent of the other parties' settings in this subset for all settings of parties in the complementary subset.\footnote{The complementary subset is empty if we are considering the full set of parties.}

The causal correlations in a correlational scenario define its \textit{causal polytope}, \textit{i.e.}, the set of all correlations that admit decompositions of the form in Eq.~\eqref{eq:causalcorr}. Facets of the causal polytope define the set of (facet) causal inequalities whose satisfaction is necessary and sufficient for membership in the causal polytope. More generally, a causal inequality (facet or otherwise) provides a necessary condition that all causal correlations must satisfy. Hence, violation of a causal inequality rules out membership in the causal polytope, thus witnessing indefinite causal order in a device-independent manner.

\subsection{Causal structures}\label{subsec:qcm}
We recall here some facts about quantum causal models \cite{ABH17, BLO19, BLO21} that will be relevant for our results and refer the reader to Refs.~\cite{BLO19,BLO21,TB22} for a more detailed presentation.

A notion that will be particularly relevant is that of the \textit{causal structure} associated with a quasi-process function. This causal structure may or may not be \textit{admissible} in the sense that it arises from a quantum causal model per Refs.~\cite{BLO19,BLO21,TB22} and we define it below.	
\begin{definition}[Causal structure of a quasi-process function]\label{def:causstructure}
	The causal structure of a quasi-process function $P_{\vec{I}|\vec{O}}=\delta_{\vec{I},\omega(\vec{O})}$, where $\omega:=(\omega_1,\omega_2,\dots,\omega_N)$ and  $\omega_k:\vec{O}\rightarrow I_k$ for all $k\in\{1,2,\dots,N\}$, is given by the directed graph $G=(V,E)$ with vertices $V:=\{1,2,\dots,N\}$ labelling the parties and directed edges $E:=\{(k,l)|\exists \vec{o}_{\backslash k}: \omega_l(\vec{o}_{\backslash k}, o_k)\neq \omega_l(\vec{o}_{\backslash k}, o'_k) \textrm{ for some } o_k\neq o'_k\}$. 
\end{definition}
In particular, there is no directed edge between two vertices in $G$---representing the causal structure of some quasi-process function---if and only if the associated parties can never signal to each other for \textit{any} choice of local interventions. When the quasi-process function under consideration is a process function, its causal structure describes a faithful and consistent causal model in the sense of Refs.~\cite{BLO19,BLO21,TB22}.\footnote{Here `faithful' refers to the fact that a directed edge $(k,l)$ in the causal structure is equivalent to the possibility of signalling from party $S_k$ to party $S_l$ under \textit{some} choice of local interventions. This property is satisfied for the causal structure of a quasi-process function by definition. On the other hand, `consistent' refers to the fact that the quasi-process function is logically consistent, \textit{i.e.}, it is a process function.}

We now recall a particular family of causal structures, namely, the \textit{siblings-on-cycles graphs} \cite{TB22}, that we will later use in our results.

\begin{definition}[Siblings-on-cycles graph]\label{def:csm}
	A siblings-on-cycles graph is a directed graph $G=(V,E)$ such that each directed cycle in $G$ contains siblings, \textit{i.e.}, for each cycle of directed edges $C:=\{(v_1,v_2),(v_2,v_3),\dots,(v_{c-1},v_c), (v_c,v_1)\}\subseteq E$, there exists a pair of vertices $v_{s_1},v_{s_2}\in\{v_1,v_2,\dots,v_c\}$ that have a common parent $v_p\in V$, \textit{i.e.}, $(v_p,v_{s_1}),(v_p,v_{s_2})\in E$.
\end{definition}
We now recall Theorem 2 from Ref.~\cite{TB22} (paraphrased below):

\begin{theorem}[Siblings-on-cycles condition]\label{thm:siboncyc}
	A necessary condition for a causal structure $G=(V,E)$ to be \textit{admissible} is that it is a siblings-on-cycles graph.
\end{theorem}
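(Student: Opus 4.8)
The plan is to prove the \emph{siblings-on-cycles condition} (Theorem~\ref{thm:siboncyc}) by contraposition: I would assume the causal structure $G=(V,E)$ of a process function $\omega$ contains a directed cycle $C=\{(v_1,v_2),(v_2,v_3),\dots,(v_c,v_1)\}$ in which \emph{no} two vertices on the cycle share a common parent, and I would derive a contradiction with the fixed-point characterization of Theorem~\ref{thm:fixedpt}. The strategy is to exploit this cycle to construct a set of local interventions $h=(h_1,\dots,h_N)$ for which $\omega\circ h$ fails to have a \emph{unique} fixed point---either admitting none (a grandfather-type antinomy) or admitting more than one (an information-type antinomy)---thereby violating logical consistency and contradicting the assumption that $\omega$ is a process function.

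The key steps I would carry out in order are as follows. First, I would use Definition~\ref{def:causstructure} to extract, for each edge $(v_j,v_{j+1})$ in the cycle, a pair of output values $o_{v_j}\neq o'_{v_j}$ together with a fixed context $\vec{o}_{\backslash v_j}$ on the remaining wires such that flipping the output of $v_j$ genuinely changes the input $\omega_{v_{j+1}}$ delivered to $v_{j+1}$; this is the concrete ``signalling witness'' attached to each edge. Second---and this is where the \emph{absence of siblings} is essential---I would argue that because no two cycle vertices have a common parent, the signalling witnesses along distinct edges do not interfere with one another: the output wire $o_{v_j}$ that controls the edge into $v_{j+1}$ can be set independently of the wires controlling the other edges, so the interventions can be chosen consistently around the whole cycle. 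Third, I would design each local intervention $h_{v_j}:I_{v_j}\to O_{v_j}$ to read the input it receives along the cycle edge $(v_{j-1},v_j)$ and respond by toggling its output on the edge $(v_j,v_{j+1})$ in a way that propagates a ``flip'' around the loop (interventions on vertices off the cycle are fixed to constants that realize the required context $\vec{o}_{\backslash\,\cdot}$). Composing $\omega$ with this $h$ then yields a self-referential loop around $C$ whose fixed-point equation $\vec{i}=\omega\circ h(\vec{i})$ is structurally a parity/XOR constraint around an odd- or even-sensitive cycle, engineered to have either zero or two solutions on the cycle coordinates.

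The hard part, and the main obstacle, will be making the ``independence'' argument in the second and third steps fully rigorous: I must show that choosing each $h_{v_j}$ to depend only on its cycle-input and to drive only its cycle-output does not accidentally affect any of the other edges' signalling witnesses. This is exactly where the siblings hypothesis enters nontrivially---if two cycle vertices $v_{s_1},v_{s_2}$ had a common parent $v_p$, then a single output of $v_p$ could simultaneously influence two points on the cycle and could be used to ``break'' the loop consistently (restoring a unique fixed point), so the construction would not go through; the proof must therefore pin down precisely why the no-common-parent assumption guarantees that the toggling interventions can be composed into a genuine obstruction to the unique-fixed-point property. I would handle this by treating the cycle edges and their controlling output wires as disjoint resources under the no-siblings assumption, verifying that the induced map on the cycle coordinates is a fixed cyclic shift (composed with local bijections) and hence either fixed-point-free or admitting multiple fixed points, contradicting Theorem~\ref{thm:fixedpt} and completing the contrapositive.
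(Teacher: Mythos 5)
There is an important mismatch of framing to note first: the paper does not prove Theorem~\ref{thm:siboncyc} at all --- it is explicitly \emph{recalled} (``paraphrased'') from Theorem~2 of Ref.~\cite{TB22} --- so there is no internal proof to compare your attempt against, and your proposal has to be judged on its own merits.

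On those merits, the classical core of your construction is sound, and it is essentially the standard fixed-point argument. The ``hard part'' you flag has a clean resolution that you gesture at but should state explicitly: the siblings-on-cycles property fails for a cycle $C$ precisely when every vertex of $G$ is a parent of \emph{at most one} vertex of $C$. This buys you two things at once. First, by Definition~\ref{def:causstructure}, $\omega_l$ is constant in $o_k$ whenever $(k,l)\notin E$; and since a chord from any cycle vertex $v_m$ ($m\neq j$) into $v_{j+1}$ would make $v_{j+1}$ and $v_{m+1}$ siblings, the only cycle output on which $\omega_{v_{j+1}}$ can depend is $o_{v_j}$ --- so toggling the cycle outputs can never disturb the other edges' witnesses. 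Second, the off-cycle parent sets of distinct cycle vertices are pairwise disjoint, so all the witness contexts can be enforced simultaneously by constant interventions off the cycle (only the values on the parents of $v_{j+1}$ matter for the witness of the edge $(v_j,v_{j+1})$). With that, your toggling interventions reduce the fixed-point condition to $b_{j+1}=\pi_j(b_j)$ around the cycle, with each $\pi_j$ a bijection of a two-element set; composing around the loop gives either the identity ($2$ solutions) or the swap ($0$ solutions), never exactly one --- so you do not even need to engineer the parity, any choice of toggles works. Provided each local map sends \emph{all} inputs into the two designated output values (so that every fixed point of $\omega\circ h$ is forced into the binary subsystem), this contradicts Theorem~\ref{thm:fixedpt}, and the argument goes through for process functions.

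The genuine gap is one of scope. As stated, Theorem~\ref{thm:siboncyc} asserts that siblings-on-cycles is necessary for a causal structure to be \emph{admissible}, and the paper defines admissibility as arising from a \emph{quantum causal model} per Refs.~\cite{BLO19,BLO21,TB22}, a strictly larger class than process functions. Your argument is anchored entirely in Theorem~\ref{thm:fixedpt}, which characterizes \emph{process functions}; its contrapositive therefore only shows that no process function has a causal structure containing a sibling-free cycle. That is exactly the weaker consequence the paper records in the sentence following the theorem, and it is all that is used downstream (Corollary~\ref{cor:siboncyc}), but it is not the statement in the theorem box: ruling out general quantum causal models with such causal structures cannot be done via the classical fixed-point characterization and requires the quantum-level argument of Ref.~\cite{TB22}. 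You should either prove the quantum statement or restate what you prove as the process-function specialization.
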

It then follows that all directed graphs that fail the siblings-on-cycles condition also fail to arise as causal structures of process functions.

Theorem 6 of Ref.~\cite{TB22} also provides a sufficient condition under which process functions are incapable of violating causal inequalities. Namely, if the causal structure of a process function is a \textit{chordless siblings-on-cycles graph}---a subset of siblings-on-cycles graphs---then the process function only admits causal correlations. We reproduce the relevant definitions from Ref.~\cite{TB22} below.

\begin{definition}[Induced graph]
	Given a directed graph $G=(V,E)$ and a subset of vertices $V'\subseteq V$, the induced graph $G[V']$ is the graph $G'=(V',E')$, where all edges $E'\subseteq E$ have endpoints in $V'$, \textit{i.e.}, $(k,l)\in E'\Leftrightarrow k,l\in V'\textrm{ and }(k,l)\in E$.
\end{definition}
\begin{definition}[Chordless siblings-on-cycles graph]\label{def:chordless}
	A chordless siblings-on-cycles graph $G$ is a siblings-on-cycles graph where every directed cycle is induced, \textit{i.e.}, the induced graph $G[C]$ is a cycle graph for every directed cycle $C$.
\end{definition}

We can now paraphrase Theorem 6 of Ref.~\cite{TB22} as follows.
\begin{theorem}\label{thm:chordless}
	Any process function with a chordless siblings-on-cycles graph as its causal structure always produces causal correlations, \textit{i.e.}, it can never violate causal inequalities.
\end{theorem}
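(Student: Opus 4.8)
The plan is to reduce the statement to a purely combinatorial fact about the causal structure and then establish causality by induction on the number of parties $N$, peeling off one party at a time so as to respect a definite causal order.

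First I would reduce to deterministic local operations. Since causal inequalities are linear functionals on the correlation and the causal correlations form a polytope (Eq.~\eqref{eq:causalcorr}), it suffices to show that every correlation produced by the process function $\omega$ with \emph{deterministic} local operations is causal: an arbitrary stochastic local operation is a convex combination of deterministic ones, the induced correlation (Eq.~\eqref{eq:clcorr}) is the corresponding convex combination, and so causality of all deterministic-strategy correlations implies causality of every correlation obtainable from $\omega$. For a fixed setting profile $\vec a$, a deterministic operation of $S_k$ is a function $h_k:I_k\to O_k$ together with a read-out of $x_k$ from $i_k$; by the fixed-point characterization (Theorem~\ref{thm:fixedpt}) the map $\omega\circ h$ has a unique fixed point $\vec i^{\,*}$, which determines $\vec o^{\,*}=h(\vec i^{\,*})$ and hence all outcomes, so the resulting correlation is deterministic. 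The task is therefore to show that this deterministic correlation satisfies the recursive causal form of Eq.~\eqref{eq:causalcorrdet}.

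Next I would run the induction. A source vertex $v$ of the causal structure $G$ (a party with no incoming edges, per Definition~\ref{def:causstructure}) has $\omega_v$ independent of all outputs, so its input $i_v=\omega_v$ is a constant and its outcome $x_v=f_v(a_v)$ depends on $a_v$ alone---exactly the party that Eq.~\eqref{eq:causalcorrdet} places first. Fixing $S_v$'s operation substitutes the constant $o_v=h_v(i_v)$ into $\omega$, yielding a reduced quasi-process function $\omega'$ on the remaining $N-1$ parties. Two closure properties drive the induction. First, $\omega'$ is again a process function (fixing a local operation on a process function preserves the unique-fixed-point property of Theorem~\ref{thm:fixedpt}), so its causal structure $G'$ is admissible and hence, by Theorem~\ref{thm:siboncyc}, a siblings-on-cycles graph. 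Second, $G'$ is a subgraph of the induced graph $G[V\setminus\{v\}]$, so every directed cycle of $G'$ is a directed cycle of $G$ and therefore chordless, and chordlessness is inherited by subgraphs; hence $G'$ is a chordless siblings-on-cycles graph. By the induction hypothesis the correlation of $\omega'$ (for each value of $a_v$, which fixes $o_v$) is causal among the remaining parties, and together with $x_v=f_v(a_v)$ this is precisely Eq.~\eqref{eq:causalcorrdet}. The base case $N=1$ is causal by definition.

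The crux---and the step I expect to be hardest---is the combinatorial lemma that a nonempty chordless siblings-on-cycles graph always has a source vertex; equivalently, that the source strongly connected component $S_0$ of its condensation is a single vertex. I would argue by contradiction: if $|S_0|\ge 2$ then $S_0$ is strongly connected and contains a directed cycle $C$, which is chordless by hypothesis. The siblings-on-cycles condition supplies a common parent $v_p$ of two cycle vertices $v_{s_1},v_{s_2}$; since $C$ is chordless, $v_p$ cannot lie on $C$ (two edges from an on-cycle vertex into distinct cycle vertices would have to be cycle edges, which is impossible, or chords), and since $S_0$ receives no incoming edges from outside, $v_p\in S_0\setminus C$. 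Using strong connectivity to route a path from $C$ back to $v_p$, I would close a directed cycle $Z$ through $v_p$ along an arc of $C$ chosen so that one of the edges $v_p\to v_{s_1}$, $v_p\to v_{s_2}$ joins $v_p$ to a non-adjacent vertex of $Z$, i.e.\ becomes a chord---contradicting chordlessness. The delicate part is the case analysis on the relative positions of the two sibling-children and of the return path's attachment point, including the degenerate configuration in which the construction yields a digon rather than a chord, which one rules out by choosing $C$ and the return path extremally. Small-graph experimentation strongly supports the lemma: every attempt to build a chordless siblings-on-cycles graph in which all in-degrees are positive forces either a chord or a cycle lacking a valid common parent.
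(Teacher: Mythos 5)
You should know at the outset that the paper contains no proof to compare yours against: Theorem~\ref{thm:chordless} is explicitly a paraphrase of Theorem~6 of Ref.~\cite{TB22}, imported as a black box. Judged on its own merits, your proof is structurally sound, and the step most likely to be botched is one you got right. The reduction to deterministic local operations is valid (Eq.~\eqref{eq:clcorr} is multilinear in the local channels and the set of causal correlations of Eq.~\eqref{eq:causalcorr} is convex), and Theorem~\ref{thm:fixedpt} indeed makes every deterministic-strategy correlation deterministic. Crucially, the siblings-on-cycles property is \emph{not} inherited combinatorially when the source party is deleted: remove the common parent from the SWITCH-type structure $\{3\to1,\,3\to2,\,1\leftrightarrow2\}$ and the induced graph on $\{1,2\}$ is a bare digon, which is not siblings-on-cycles. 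Your way around this---showing $\omega'$ is again a process function (fixed points of $\omega'\circ h'$ biject with fixed points of $\omega\circ(h_v,h')$ because the source's input, hence output, is constant) and then recovering the siblings condition from admissibility via Theorem~\ref{thm:siboncyc}---is exactly what makes the induction close. This is consistent precisely because the causal structure $G'$ of $\omega'$ is in general a \emph{proper} subgraph of $G[V\setminus\{v\}]$ (with $o_v$ fixed in the SWITCH, one direction of the digon disappears), while chordlessness, unlike the siblings condition, does pass to subgraphs, as you say.

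The one genuinely unfinished step is your combinatorial lemma; ``small-graph experimentation'' is not a proof. The lemma is true, and your sketch completes with less delicacy than you fear. Suppose a source SCC $S_0$ of the condensation has two or more vertices, hence contains a directed cycle $C$ with siblings $v_{s_1},v_{s_2}$ and common parent $v_p$. Self-loops are excluded (a $1$-cycle contains no pair of vertices, so it violates the siblings condition outright), your argument that $v_p\notin C$ is correct, and $v_p\in S_0$ since $S_0$ receives no external edges. Take a path $P$ in $S_0$ from $C$ to $v_p$ meeting $C$ only at its first vertex $u$. Let $v_{s_j}$ be the sibling encountered first when traversing $C$ starting just after $u$; then the arc of $C$ from $v_{s_j}$ to $u$ contains the other sibling $v_{s_{j'}}$, possibly as its endpoint $u$. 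Form the directed cycle $Z$ consisting of the edge $v_p\to v_{s_j}$, that arc, and $P$; it is simple because the arc lies on $C$, the interior of $P$ avoids $C$ and $v_p$, and $v_p\notin C$. Now $v_p$ and $v_{s_{j'}}$ both lie on $Z$, and the edge $v_p\to v_{s_{j'}}$ is not a cycle edge of $Z$, since the unique cycle edge of $Z$ leaving $v_p$ goes to $v_{s_j}\neq v_{s_{j'}}$. Hence $G[Z]$ has at least $|Z|+1$ edges and cannot be a cycle graph, contradicting Definition~\ref{def:chordless}. Note that this edge-counting reading of ``induced'' also disposes of the digon degeneracy you worried about: an anti-parallel edge inside $Z$ is just as fatal to inducedness as a chord, so no extremal choice of $C$ or $P$ is needed---only the choice of which sibling edge to traverse.
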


A simple example of a chordless siblings-on-cycles graph is, for example, the one associated with the causal structure of the quantum SWITCH \cite{CDP13,BLO21}, \textit{i.e.}, $G=(V,E)$ with $V:=\{1,2,3,4\}$ and $E:=\{(3,1), (3,2), (1,2),(2,1), (1,4), (2,4), (3,4)\}$, where party $S_3$ serves as the common parent of parties $S_1$ and $S_2$ that are in a cycle. The only cycle in this graph, namely $1\leftrightarrow 2$, is an induced cycle.\footnote{For convenience, we represent the fact that there are two directions of signalling, $1\rightarrow 2$ and $1\leftarrow 2$ simply via $1\leftrightarrow 2$.} From Theorem \ref{thm:chordless}, we have that any process function with the causal structure of the quantum SWITCH can only produce causal correlations.
\subsection{Signalling graphs}
We now define the notion of a signalling graph associated with any vertex of the correlation polytope in a given correlational scenario.
\begin{definition}[Signalling graph of a vertex]\label{def:signallinggraph}
	The signalling graph of a vertex is a \textit{labelled} directed graph where each node represents a party and a directed edge from one party to another represents the fact that the outcome of the latter varies as a function of the setting of the former. It is also assumed (but not manifest in the signalling graph) that each party's setting \textit{can} affect its outcome.
\end{definition}

For example, in the $(2,2,2)$ scenario, the signalling graph of a vertex that corresponds to the function $f=(f_1,f_2)$, where $x_1=f_1(a_1,a_2)=a_1$ and $x_2=f_2(a_1,a_2)=a_1\oplus a_2$ is given by $S_1\rightarrow S_2$ since $x_2$ is a nontrivial function of $a_1$ and $x_1$ is unaffected by $a_2$, depending only on $a_1$. This is a \textit{causal vertex} because it 
is 
consistent with the definition of deterministic causal correlations in Eq.~\eqref{eq:causalcorrdet} (party $S_1$ is in the global past).  On the other hand, $S_1\leftrightarrow S_2$ is the signalling graph of a vertex that corresponds to $x_1=f_1(a_1,a_2)=a_2$ and $x_2=f_2(a_1,a_2)=a_1$. This is a \textit{noncausal vertex} because it 
is
inconsistent with the definition of causal correlations (every party's outcome depends on some other party's setting, hence there is no global past). Note that, unlike directed graphs which represent the causal structure of a process \cite{BLO19,BLO21}, 
any directed edge in the signalling graph of a vertex represents actual signalling and not merely potential signalling.\footnote{As such, these signalling graphs \textit{should not} be confused with directed graphs arising from classical split-node causal models or quantum causal models of Ref.~\cite{BLO19,BLO21,TB22}, \textit{i.e.}, the causal structures discussed in Section \ref{subsec:qcm}.}

Note that we \textit{do not }define signalling graphs for indeterministic correlations: for these correlations, it is possible to hide the signalling structure of the underlying vertices via probabilistic coarse-graining, \textit{e.g.}, consider the Popescu-Rohrlich (PR) box \cite{PR94, BCP14} that is non-signalling, \textit{i.e.}, $P_{\rm PR}(x_1,x_2|a_1,a_2)=\frac{1}{2}(\delta_{x_1x_2,00}+\delta_{x_1x_2,11})(\delta_{a_1a_2,00}+\delta_{a_1a_2,01}+\delta_{a_1a_2,10})+\frac{1}{2}(\delta_{x_1x_2,01}+\delta_{x_1x_2,10})(\delta_{a_1a_2,11})$, which marginalizes to $P_{\rm PR}(x_1|a_1)=\frac{1}{2}(\delta_{x_1,0}+\delta_{x_1,1})$ and $P_{\rm PR}(x_2|a_2)=\frac{1}{2}(\delta_{x_2,0}+\delta_{x_2,1})$. The PR-box can be seen as a convex mixture of type $\frac{1}{2}P_{12}+\frac{1}{2}P'_{12}$, where $P_{12}$ and $P'_{12}$ are (deterministic) causal vertices with the signalling graph $S_1 \rightarrow S_2$, given by 
\begin{align}
	P_{12}(x_1,x_2|a_1,a_2)&=\delta_{x_1,0}\delta_{x_2,a_1.a_2},\\
	P'_{12}(x_1,x_2|a_1,a_2)&=\delta_{x_1,1}\delta_{x_2,\overline{a_1.a_2}}.
\end{align}
It is then a plausible account (especially in a scenario that allows signalling) of PR-box correlations that at an underlying level there is 
some signalling from $S_1$ to $S_2$, but the fine-tuning inherent to the PR-box washes out this ``hidden" signalling so that it isn't manifest in its signalling properties. In fact, there are also other ways to decompose a PR-box into a mixture of deterministic signalling vertices. Following is a decomposition into \textit{noncausal vertices} (\textit{i.e.}, with signalling graph $S_1\leftrightarrow S_2$), even though PR-box correlations are \textit{causal}:
\begin{align}
	P_{\rm PR}(x_1,x_2|a_1,a_2)&=\frac{1}{2}\delta_{x_1,a_1\oplus a_2}\delta_{x_2,a_1+a_2}\nonumber\\
	&+\frac{1}{2}\delta_{x_1,\overline{a_1\oplus a_2}}\delta_{x_2,\overline{a_1}.\overline{a_2}}.
\end{align} 
We will see how signalling graphs are useful for classifying vertices of the correlation polytope in bipartite and tripartite correlational scenarios. 

We can now proceed to define our notion of classicality for correlations without causal order.

\section{Antinomicity}\label{sec:3}

To recap, we know that causal inequality violations provide a device-independent meaning to the notion of indefinite causal order, \textit{i.e.}, by ruling out that a correlation can be expressed in the form of Eq.~\eqref{eq:causalcorr}. We also know that causal inequalities can be violated with process functions, which model the deterministic exchange of classical systems between parties, \textit{e.g.}, the AF/BW (Lugano) process \cite{BW16}. Hence, causal inequality violations do not provide a device-independent notion of nonclassicality for correlations in the presence of indefinite causal order. The general question we want to answer is therefore the following: 

\textit{\textbf{What does it mean, in a device-independent sense, for indefinite causal order to be nonclassical? Or, equivalently, which \textit{noncausal} correlations are also \textit{nonclassical}?}}

A straightforward answer would seem to be the following: if a correlation lies outside the set of correlations achievable via classical processes \cite{BW16}, then it is a nonclassical correlation. The notion of classicality that this answer assumes is in many ways the most obvious one: that is, instead of quantum theory, one assumes classical probability theory holds locally and any correlations achievable within this classical process framework are, therefore, classical. This is also intuitive because it agrees with the `diagonal limit' of the process-matrix framework, \textit{i.e.}, in the limit where both the process matrix and the 
local CP maps are diagonal in a fixed product basis \cite{BW16}. However, a key subtlety arises in this limit: namely, not all classical processes can be thought of as arising from an underlying deterministic description within the classical process framework. That is, there exist classical processes that can \textit{only} be understood as \textit{fine-tuned} probabilistic mixtures over underlying deterministic descriptions 
(\textit{i.e., quasi-process functions}) 
that are incompatible with logical consistency (and thus outside the classical process framework). An example of this is the Baumeler-Feix-Wolf (BFW) process that violates a causal inequality \cite{BFW14}: this process can be understood as an equal mixture of a pair of causal loops that separately lead to logical contradictions even though their uniform mixture evades such contradictions. At the same time, there also exist classical processes violating causal inequalities that \textit{can} always be understood as a convex mixture over deterministic classical processes (\textit{i.e.}, \textit{process functions} \cite{BW16, BW16fp, BT21}) that do not lead to logical contradictions: these are exactly the classical processes within the deterministic-extrema polytope defined by Baumeler and Wolf \cite{BW16}. A key example of such a classical process is the deterministic AF/BW or Lugano process \cite{BW16,KB22}.

Our proposed answer to the question of nonclassicality of correlations departs from the straightforward answer above. We propose the following: \textit{if a correlation lies outside the set of correlations achievable via classical processes within any deterministic-extrema polytope, then it is a nonclassical correlation; otherwise it is classical}. 
As we argue below, this is a natural generalization of a well-known notion of classicality, namely, \textit{local causality}, to a scenario where there is no non-signalling constraint and, furthermore, no \textit{a priori} assumption about causal relations between the different labs. An intuition underlying our proposal is the following hallmark of classical physics (of which we take general relativity to be an example): the physical realizability of any probabilistic phenomenon within classical physics entails the physical realizability of the underlying deterministic phenomena over which we quantify our uncertainty via probabilistic coarse-graining; that is, there is no intrinsic indeterminism in classical physics.\footnote{Any indeterminism in classical physics can therefore be understood as entirely epistemic. That this kind of indeterminism can be accommodated in an agent-centric operational formulation of general relativity---Probabilistic General Relativity with Agency (PAGeR)---has been shown by Hardy \cite{Hardy16}.} For example, in the case of a Bell scenario,\footnote{That is, a scenario with multiple local experiments that are spacelike separated from each other.} the realizability of any probabilistic phenomenon within the Bell polytope in classical physics is underpinned by the fact that all the deterministic phenomena---vertices of the Bell polytope---are realizable in classical physics. Any non-signalling probabilistic phenomenon outside the Bell polytope, however, requires in its probabilistic support deterministic phenomena that are forbidden in classical physics since they exhibit signalling across spacelike separation.\footnote{This is the conceptual motivation for the term `nonlocality' that is often used to describe correlations that violate Bell inequalities when one adopts the \textit{superluminal causation paradigm} to explain them \cite{WSS20}. Bohmian mechanics \cite{Bohm52I, Bohm52II}, which motivated Bell's theorem \cite{Bell66, Bell64}, is an example of a nonlocal theory of this type that can reproduce quantum correlations in Bell scenarios. 
}
In a similar spirit, to assume that a probabilistic phenomenon achievable by the BFW process \cite{BFW14}---and unachievable by any process within the deterministic-extrema polytope---is realizable in a classical physical theory is to also assume that the deterministic phenomena achieved by the underlying deterministic causal loops are also realizable in the theory, \textit{i.e.}, the theory allows for information or grandfather antinomy \cite{BT21} at a fundamental level. However, the fact that such deterministic phenomena lead to logical contradictions makes them impossible in the classical process framework! Such deterministic phenomena, therefore, shouldn't be accessible in a classical physical theory whose probabilistic structure mirrors the classical process framework.

Before we proceed further, it is important to distinguish two notions of signalling that will appear in subsequent discussion: firstly, the usual notion (\textit{e.g.}, in Bell scenarios) of \textit{observed signalling} at the level of correlations, $P_{\vec{X}|\vec{A}}$, and secondly, the notion of \textit{environmental signalling} at the level of the underlying (quasi-)process, $P_{\vec{I}|\vec{O}}$. The latter provides a mechanism that makes it possible to activate the signalling behaviour at the observational level with the appropriate choice of local interventions, \textit{i.e.}, it dictates the potential for signalling (which depends on the environment) rather than actual signalling (which is manifest in the observed correlations). This distinction between observed and environmental signalling is at the root of the distinction between \textit{signalling graphs} and \textit{causal structures}. Causal structures are meant to serve as causal explanations of observed correlations, \textit{i.e.}, they are phrased at the level of environmental signalling, which is always potential rather than actual and requires appropriate local interventions to be actualized at the level of correlations. Signalling graphs are concerned with deterministic observed signalling at the level of correlations.

With this distinction in mind, 
in the rest of this section we formally define our notion of classicality, how it motivates defining different sets of correlations as objects of investigation, and some general properties of these sets of correlations.

\subsection{From local causality to nomicity}
In the case of non-signalling correlations $p(\vec{x}|\vec{a})$, Bell's assumption of local causality \cite{Bell76, Wiseman14} requires that 
\begin{align}\label{eq:bell}
	p(\vec{x}|\vec{a})=\sum_{\vec{i}}\prod_{k=1}^Np(x_k|a_k,i_k)p(\vec{i}),
\end{align}
where $\vec{i}=(i_1,i_2,\dots,i_N)$ denotes a source of classical shared randomness that is distributed among the parties and $p(x_k|a_k,i_k)$ denotes the local strategy of party $S_k$ with the key feature that it is independent of the settings and outcomes of other (spacelike separated) parties.\footnote{There is no loss of generality in assuming that the local interventions of the parties are deterministic on account of Fine's theorem \cite{Fine82a,Fine82b,Kunjwal15}, \textit{i.e.}, local causality---which allows the local interventions to be probabilistic---yields the same set of correlations as local determinism, namely, the Bell polytope.} The probability distribution $p(\vec{i})$ lives in the associated probability simplex with the vertices of the simplex denoting deterministic assignments to $\vec{i}$.
In terms of classical processes, local causality just requires that the parties cannot signal to each other via the environment, \textit{i.e.},  $p(\vec{i}|\vec{o})=p(\vec{i})$ for all $\vec{o}$, so that $\vec{o}$ in Eq.~\eqref{eq:clcorr} can be coarse-grained and we recover the set of correlations within the Bell polytope. Hence, $p(\vec{i})$ is a non-signalling\footnote{Remember that this is environmental non-signalling, applicable to $p(\vec{i}|\vec{o})$.} classical process.\footnote{In particular, it can be understood as a probabilistic mixture of deterministic non-signalling classical processes, \textit{i.e.}, $p(\vec{i})=\sum_lp(l)\delta_{\vec{i},\vec{i}_l}$, where $l$ labels deterministic assignments $\vec{i}_l$ to $\vec{i}$.} The gap between locally causal correlations and correlations within the classical process framework is therefore solely due to the environmental signalling allowed in the latter. This signalling, however, is not arbitrary---\textit{i.e.}, not all classical quasi-processes are allowed---and Ref.~\cite{BW16} imposes logical consistency as a minimal constraint on this environmental signalling (thus restricting to classical processes) that is necessary and sufficient to exclude grandfather or information antinomies within the framework, \textit{i.e.}, logical consistency follows from the principle that the framework should not allow environments that admit the possibility of logical contradictions under some local interventions. 
Logical consistency results in the classical process polytope which admits extremal classical processes that are indeterministic, \textit{i.e.}, they cannot be understood as probabilistic mixtures of deterministic classical processes and therefore indicate a fundamental indeterminism in any classical physical theory that mirrors the classical process framework. This is quite unlike the case of non-signalling classical processes (in the Bell case), which can always be understood as probabilistic mixtures of deterministic non-signalling classical processes. Motivated by the idea that there is no fundamental indeterminism in classical physical theories (like general relativity), we therefore impose an additional constraint on the type of signalling allowed in a classical process besides logical consistency: namely, that the classical process be expressible as a probabilistic mixture of deterministic classical processes (as in the case of Bell scenarios). This leads us to \textit{deterministic consistency} as a constraint on multipartite correlations that can be considered classical, \textit{i.e.},
\begin{align}\label{eq:loccons1}
	p(\vec{x}|\vec{a})=\sum_{\vec{i},\vec{o}}\prod_{k=1}^Np(x_k,o_k|a_k,i_k)p(\vec{i}|\vec{o}),
\end{align}
where
\begin{align}\label{eq:loccons2}
	p(\vec{i}|\vec{o})=\sum_{\lambda}p(\lambda)\delta_{\vec{i},\omega^{\lambda}(\vec{o})},
\end{align}
and where $\lambda$ labels vertices of a  deterministic-extrema polytope, \textit{i.e.}, process functions. Here $p(x_k,o_k|a_k,i_k)$ denotes the local strategy of party $k$ given the input $i_k$ and a random setting $a_k$. 
Hence, deterministic consistency can be seen as a conjunction of two assumptions on the realizability of a correlation via some classical quasi-process under local interventions:\footnote{As we will show further on, \textit{every} correlation admits a realization with a classical quasi-process under local interventions if no further assumptions are imposed on the realization.} firstly, that the classical quasi-process satisfies \textit{logical consistency}, \textit{i.e.}, it is a classical process, and, secondly, that it satisfies \textit{determinism}, \textit{i.e.}, it lies within the deterministic-extrema polytope.\footnote{Intuitively, this amounts to requiring that, in a classical realization of a correlation, logical consistency holds not only on average (at the level of $p(\vec{i}|\vec{o})$), but also in a fine-grained sense (at the level of $\delta_{\vec{i},\omega^{\lambda}(\vec{o})}$).}

We will often refer to deterministic consistency as \textit{nomicity} and its failure to account for some correlation (analogous to the failure of local causality) as \textit{antinomicity} (analogous to nonlocality). Hence, any correlation that fails to be \textit{nomic} will be referred to as \textit{antinomic}.

\subsubsection{Nomic polytope: the classical set of correlations}
The set of nomic correlations is convex: consider two deterministically consistent correlations $p_0(\vec{x}|\vec{a})$ and $p_1(\vec{x}|\vec{a})$ such that 
\begin{align}
	p_0(\vec{x}|\vec{a})&=\sum_{\vec{i},\vec{o}}\prod_{k=1}^Np_0(x_k,o_k|a_k,i_k)p_0(\vec{i}|\vec{o}),\\
	p_1(\vec{x}|\vec{a})&=\sum_{\vec{i},\vec{o}}\prod_{k=1}^Np_1(x_k,o_k|a_k,i_k)p_1(\vec{i}|\vec{o}).
\end{align}
Given any $q\in(0,1)$, we have that the correlation $p(\vec{x}|\vec{a}):=qp_0(\vec{x}|\vec{a})+(1-q)p_1(\vec{x}|\vec{a})$ is achievable via the classical process
\begin{align}
	p(i_*,\vec{i}|\vec{o}):=q\delta_{i_*,0}p_0(\vec{i}|\vec{o})+(1-q)\delta_{i_*,1}p_1(\vec{i}|\vec{o}),
\end{align}
which, by convexity, is within the deterministic-extrema polytope in a higher-cardinality scenario, where a shared auxiliary input $I_*:=\{0,1\}$ is included for all parties. The local interventions on this classical process are given by
\begin{align}
	&p(x_k,o_k|a_k,i_k,i_*)\nonumber\\
	=&\delta_{i_*,0}p_0(x_k,o_k|a_k,i_k)+\delta_{i_*,1}p_1(x_k,o_k|a_k,i_k), 
\end{align}
so that 
\begin{align}
	p(\vec{x}|\vec{a})&=\sum_{i_*,\vec{i},\vec{o}}\prod_{k=1}^Np(x_k,o_k|a_k,i_k,i_*)p(i_*,\vec{i}|\vec{o})\nonumber\\
	&= qp_0(\vec{x}|\vec{a})+(1-q)p_1(\vec{x}|\vec{a}).
\end{align}
Furthermore, the extremal points of this set of correlations are deterministic: this follows from noting that local interventions in Eq.~\eqref{eq:clcorr} can be assumed to be deterministic without loss of generality and any classical process within the deterministic-extrema polytope is expressible as a convex mixture over process functions \cite{BW16}. Hence, the nomic correlations in any correlational scenario form a convex polytope with deterministic vertices. We term this the \textit{nomic polytope} of the scenario, generalizing the Bell polytope in the same scenario.

\subsubsection{The `determinism' in `deterministic consistency'}
We emphasize here that the `deterministic' in `deterministic consistency' refers to the environmental correlations rather than the observed correlations. Specifically, it \textit{does not} refer to the fact that local interventions can be assumed to be deterministic without loss of generality in the classical process framework. 

In the case of Bell scenarios, 
every probability distribution $p(\vec{i})$ is a valid non-signalling classical process that is expressible as a convex mixture over deterministic non-signalling classical processes; hence, any correlation obtained from such a process satisfies deterministic consistency.\footnote{Indeed, this sort of determinism is implicit (and holds trivially)  in the ontological models framework \cite{HS10}, where one assumes a prior probability distribution over ontic states that is independent of any future interventions.} In the case of arbitrary correlational scenarios (without the assumption of non-signalling), we replace the classical probability simplex that contains $p(\vec{i})$ with its appropriate generalization. This generalization is the deterministic-extrema polytope which describes those classical environments (\textit{i.e.}, $p(\vec{i}|\vec{o})$) that can be understood as arising in a classical physical theory without fundamental indeterminism and without the possibility of logical contradictions.

The fact that nomicity is the natural generalization of local causality\footnote{We will soon see how Theorem \ref{thm:det} and the surrounding discussion of how it specializes to Bell scenarios makes this claim rigorous.} lends further credence to suggestions in the literature \cite{BW16} that classical processes outside the deterministic-extrema polytope are pathological when viewed from a classical physical perspective, \textit{e.g.}, these processes require a precise fine-tuning of probabilistic mixtures over quasi-process functions to maintain their logical consistency \cite{BW16} and, furthermore, they fail to be reversibly extendible \cite{Baumeler17}.

\subsection{A tale of two classical limits and how they fail to coincide without a non-signalling constraint}

The usual classical limit of the process-matrix framework (which recovers the classical process framework) is obtained as follows: Fix the basis of the input and output systems for each party, \textit{e.g.}, $\{\ket{i_k}\}_{i_k=0}^{d_{I_k}-1}$ for the input system and $\{\ket{o_k}\}_{o_k=0}^{d_{O_k}-1}$ for the output system of party $S_k$, $k\in\{1,2,\dots,N\}$. The local operations of party $S_k$ can then be written as 
\begin{align}
	M^{I_kO_k}_{x_k|a_k}&:=\sum_{i_k,o_k}p(x_k,o_k|a_k,i_k)\ket{i_k}\bra{i_k}\otimes\ket{o_k}\bra{o_k}\\
	&=\sum_{i_k,o_k}p(x_k,o_k|a_k,i_k)\ket{i_k o_k}\bra{i_k o_k},
\end{align}
where we leave the tensor product implicit and represent the local bases at the input and the output together as $\{\ket{i_k o_k}\}_{i_k,o_k}$. The process matrix can then, without loss of generality, be assumed to be diagonal in the product basis over all the parties,\footnote{There is no loss of generality in the sense that the observed correlations will only depend on the diagonal terms in the process matrix when it is expressed in the product basis; any non-zero off-diagonal terms will not contribute to the correlations.} \textit{i.e.},
\begin{align}
	W&:=\sum_{\vec{i},\vec{o}}p(\vec{i}|\vec{o})\ket{\vec{i}\vec{o}}\bra{\vec{i}\vec{o}},\\
	\textrm{where }\ket{\vec{i}\vec{o}}&:=\bigotimes_{k=1}^N \ket{i_k o_k}\bra{i_k o_k}.
\end{align}
The correlations achievable in this classical limit are given by 
\begin{align}
	p(\vec{x}|\vec{a})&=\Tr(W\bigotimes_{k=1}^{N}M^{I_kO_k}_{x_k|a_k})\\
	&=\sum_{\vec{i},\vec{o}}\prod_{k=1}^N p(x_k,o_k|a_k,i_k)p(\vec{i}|\vec{o}),
\end{align}
where $p(\vec{i}|\vec{o})$ is a classical process, \textit{i.e.}, it must satisfy the constraint of logical consistency.

Under our proposal, the appropriate classical limit of the process-matrix framework is one where $p(\vec{i}|\vec{o})$ is a classical process within the deterministic-extrema polytope, \textit{i.e.}, it lies in the convex hull of deterministic classical processes (or process functions). The correlations arising under this restriction are the ones we deem `classical', \textit{i.e.}, they satisfy \textit{deterministic consistency}. Hence, the classical limit of the process-matrix framework under our proposal is taken in two steps:
\begin{itemize}
	\item First, take the \textit{deterministic} diagonal limit of the process-matrix framework, \textit{i.e.}, under the restriction that $\forall\vec{i},\vec{o}: p(\vec{i}|\vec{o})\in\{0,1\}$.
	\item Second, take the convex hull over deterministic classical processes arising in this limit as the set of valid classical processes.\footnote{These are the processes within the deterministic-extrema polytope.}
\end{itemize}
This contrasts with the one-step procedure of taking the classical limit as every valid process arising in the diagonal limit of the process-matrix framework. While these two ways of obtaining the classical limit of correlations differ in the process-matrix framework (as we will later show), they coincide in the case of non-signalling correlations realizable in quantum theory, \textit{i.e.}, both recover correlations within the Bell polytope.

In a Bell scenario, the process matrix must be non-signalling, \textit{i.e.}, as a channel from the parties outputs to their inputs, it must not signal any information about the outputs to the inputs. Hence, the outputs of the labs can be discarded and the process matrix reduces to a density matrix in this case. The local quantum instruments are effectively just positive operator-valued measures (POVMs) and we have
\begin{align}
	p(\vec{x}|\vec{a}):=\Tr(\rho \bigotimes_{k=1}^NE_{x_1|a_1}),
\end{align}
where $\rho\in\mathcal{L}(\bigotimes_{k=1}^N\mathcal{H}_{I_k})$ is a quantum state and $\forall k: E_{x_k|a_k}\in\mathcal{L}(\mathcal{H}_{I_k})$ represent quantum effects.

The diagonal limit of this set of correlations is given, as before, by fixing the local bases of the input systems that the parties receive and taking $\rho$---without loss of generality---to be diagonal in the product of these local bases, \textit{i.e.},
\begin{align}
	E_{x_k|a_k}&:=\sum_{i_k}p(x_k|a_k,i_k)\ket{i_k}\bra{i_k},\\
	\rho&:=\sum_{\vec{i}}p(\vec{i})\ket{\vec{i}}\bra{\vec{i}},\\
	&\textrm{where }\ket{\vec{i}}:=\bigotimes_{k=1}^N\ket{i_k}.
\end{align}
The resulting correlations in the this limit are then 
\begin{align}
	p(\vec{x}|\vec{a})=\sum_{\vec{i}}\prod_{k=1}^Np(x_k|a_k,i_k)p(\vec{i}),
\end{align}
which is exactly the set of locally causal correlations, \textit{i.e.}, correlations within the Bell polytope.

The classical limit according to our prescription in this case would first restrict to the case of deterministic classical states, where $\forall \vec{i}: p(\vec{i})\in\{0,1\}$, \textit{i.e.}, vertices of the simplex of probability distributions over  $\{\vec{i}\}$. It would then take the convex hull over these deterministic classical states, \textit{i.e.}, the full simplex of probability distributions. It is easy to see that our prescription coincides with the diagonal limit in the non-signalling case: in both cases we recover the full simplex of probability distributions over $\{\vec{i}\}$ as the set of valid classical probabilistic states. On the other hand, in the general case of the process-matrix framework, the two ways of arriving at the classical limit do not coincide: our prescription leads to a strictly smaller set of classical processes (and resulting correlations) than the diagonal prescription; furthermore, neither prescription recovers the full set of conditional probability distributions $\{p(\vec{i}|\vec{o})\}$ (since they lead to logical contradictions in general).\footnote{Note that if either prescription were to recover the full set of such conditional probability distributions, then no notion of classicality for correlations would be definable based on such a classical limit: all multipartite correlations are accessible via such classical quasi-processes, \textit{i.e.}, they belong to the set of \textit{quasi-consistent correlations}, which we will define shortly.}

From the perspective of these two classical limits, the claim that our prescription is the appropriate generalization of nonclassicality in the non-signalling case to nonclassicality in the case of arbitrarily signalling correlations rests on the following observations:
\begin{itemize}
	\item Violation of causal inequalities by process functions means that, in principle, such violations are consistent with any classical account of causal relations that satisfies two conditions: firstly, that the causal relations are fundamentally deterministic (even if they don't respect a partial order), and secondly, that they do not lead to logical contradictions under any set of local interventions. 
	
	Any multipartite correlation that can be realized by a process function is, in this sense, classical. 
	
	\item We also want to allow for epistemic uncertainty about causal relations in a classical account of causal relations, hence we consider the full deterministic-extrema polytope as the set of probabilistic causal relations that are classically achievable.
	
	\item The above account of classicality of correlations means that a correlation is nonclassical if and only if its realization requires a classical quasi-process that is outside the deterministic-extrema polytope. This leads to two possibilities for interpreting the nonclassicality of such an explanation:
	\begin{enumerate}
		\item If one holds on to the idea that classical causal relations must be fundamentally deterministic\footnote{As they are in general relativity.} (and that any probabilities are due to epistemic uncertainty), then such an explanation \textit{must} entertain quasi-process functions that are logically inconsistent. The essence of nonclassicality is then in the fact that such logically inconsistent quasi-process functions are \textit{necessary} in any account of the correlation based on classical quasi-processes. This is the position we endorse.
		
		\item If one allows for intrinsic indeterminism in classical causal relations,\footnote{Something that is already in conflict with the fact that general relativity is a deterministic theory.} then such an explanation can \textit{sometimes (but not always)} be understood without requiring the possibility of any logical contradictions, \textit{i.e.}, using classical processes outside the deterministic-extrema polytope. In this case, a correlation is nonclassical if and only if even a fundamentally indeterministic account of causal relations must invoke logical contradictions, \textit{i.e.}, using classical quasi-processes outside the classical process polytope. This is the position we \textit{do not} endorse.
	\end{enumerate}	
\end{itemize}

We now proceed to formally define a hierarchy of sets of signalling correlations in correlational scenarios, similar in spirit to the hierarchy of sets of non-signalling correlations in Bell scenarios.
\subsection{Sets of correlations}\label{subsec:hierarchy}
We define the following sets of correlations in a scenario without global causal assumptions: \textit{deterministically consistent (or nomic) correlations}, \textit{probabilistically consistent correlations}, \textit{quantum process correlations}, and \textit{quasi-consistent correlations}.\footnote{One could also consider defining a set of ``generalised probabilistic process correlations", namely, correlations that arise from a general process framework where the local theory is assumed to be neither a classical probabilistic theory nor operational quantum theory, but a generalised probabilistic theory (GPT). However, in the absence of a concrete framework for such a theory that naturally generalizes process matrices, we restrict ourselves largely to correlations within the process-matrix framework. Nonetheless, we do define quasi-consistent correlations to formally capture the possibility of correlations that are outside the process-matrix framework and that could come from a generalised probabilistic process framework.}
\subsubsection{Deterministically consistent, or nomic, correlations}

A correlation $P_{\vec{X}|\vec{A}}$ is said to be a deterministically consistent (or nomic) correlation if and only if it admits a realization via a classical process within the deterministic-extrema polytope, \textit{i.e.}, following Eqs.~\eqref{eq:loccons1} and \eqref{eq:loccons2}. We denote the set of such correlations by $\mathcal{DC}$. As we argued earlier, in any correlational scenario, $\mathcal{DC}$ is a convex polytope with deterministic vertices.
\subsubsection{Probabilistically consistent correlations}
A correlation $P_{\vec{X}|\vec{A}}$ is said to be a  probabilistically consistent correlation if and only if there exists a classical process $P_{\vec{I}|\vec{O}}=\sum_{\kappa}p(\kappa)p(i_k|\vec{o}_{\backslash k},\kappa)$ (where $\kappa$ denotes the vertices of the classical process polytope) and local stochastic maps $\{P_{X_k,O_k|A_k,I_k}\}_k$ such that
\begin{align}
	p(\vec{x}|\vec{a})=\sum_{\kappa}p(\kappa)\sum_{\vec{i},\vec{o}}\prod_{k=1}^np(x_k,o_k|a_k,i_k)p(\vec{i}|\vec{o},\kappa).
\end{align}
We denote this set of correlations by $\mathcal{PC}$ and we have $\mathcal{DC}\subseteq\mathcal{PC}$. Convexity of this set of correlations is similarly shown as in the case of the set  $\mathcal{DC}$. In any correlational scenario, $\mathcal{PC}$ defines a convex polytope that may have indeterministic vertices.

\subsubsection{Quantum process correlations}
A correlation $P_{\vec{X}|\vec{A}}$ is said to be a quantum process correlation if and only if it admits a process matrix realization according to Eq.~\eqref{eq:corrpm} for some choice of local interventions. 
We denote this set of correlations by $\mathcal{QP}$ and we have $\mathcal{DC}\subseteq\mathcal{PC}\subseteq\mathcal{QP}$. This set is known to be convex \cite{BAF15}, but is not a polytope.
\subsubsection{Quasi-consistent correlations}
A correlation $P_{\vec{X}|\vec{A}}$ is said to be a  quasi-consistent correlation if and only if there exists a conditional probability distribution (i.e., a \textit{classical quasi-process}) $P_{\vec{I}|\vec{O}}=\sum_{\gamma}p(\gamma)\delta_{i_k,\nu_{\gamma}(\vec{o})}$---where $\gamma$ denotes arbitrary quasi-process functions  $\{\nu_{\gamma}:\vec{O}\rightarrow \vec{I}\}_{\gamma}$---and local stochastic maps $\{P_{X_k,O_k|A_k,I_k}\}_k$ such that
\begin{align}
	p(\vec{x}|\vec{a})=\sum_{\gamma}p(\gamma)\sum_{\vec{i},\vec{o}}\prod_{k=1}^Np(x_k,o_k|a_k,i_k)\delta_{i_k,\nu_{\gamma}(\vec{o})}.
\end{align}
We denote this set of correlations by $\mathcal{qC}$. It is easy to see that \textit{every} multipartite correlation is a quasi-consistent correlation, \textit{i.e.}, if there are no restrictions on the signalling behaviour of the classical quasi-process $p(\vec{i}|\vec{o})$, then it can achieve arbitrary multipartite correlations. Here is a simple argument to this effect: one can define
\begin{align}
	&p(\vec{i}|\vec{o}):=\sum_{\vec{x}',\vec{a}'}p(\vec{x}'|\vec{a}')\delta_{\vec{i},\vec{x}'}\delta_{\vec{o},\vec{a}'},\\
	&p(x_k,o_k|a_k,i_k):=\delta_{x_k,i_k}\delta_{o_k,a_k},
\end{align}
so that 
\begin{align}
	\sum_{\vec{i},\vec{o}}\prod_{k=1}^N\delta_{x_k,i_k}\delta_{o_k,a_k}\sum_{\vec{x}',\vec{a}'}p(\vec{x}'|\vec{a}')\delta_{\vec{i},\vec{x}'}\delta_{\vec{o},\vec{a}'}=p(\vec{x}|\vec{a}).
\end{align}
That is, one can simply use a classical quasi-process that encodes any desired multipartite correlation $p(\vec{x}|\vec{a})$ perfectly\footnote{This is always possible because classical quasi-processes are simply conditional probability distributions with no additional constraints on their signalling behaviour.} and use local interventions that simply pass on the correlations implicit in $p(\vec{i}|\vec{o})$ to recover $p(\vec{x}|\vec{a})$. The set $\mathcal{qC}$ is the convex polytope describing the full set of correlations in any correlational scenario; all its vertices are deterministic.

We therefore have that
\begin{align}
	\mathcal{DC}\subseteq\mathcal{PC}\subseteq\mathcal{QP}\subseteq \mathcal{qC}.
\end{align}

\section{General properties of sets of correlations}

We now make some general observations---not tied to particular correlational scenarios---about the bounds on correlations that follow from assuming nomicity as one's notion of classicality. In subsequent sections, we will follow up these observations with a detailed analysis of the causal/noncausal separation vis-à-vis the classical/nonclassical separation in bipartite and tripartite scenarios with binary inputs and outputs.

The causal/noncausal separation coincides with the classical/nonclassical separation in the bipartite case. That is, in the bipartite case, every causal inequality is a witness of nonclassicality and conversely. This was proved in Ref.~\cite{OCB12}, where it was also suggested that these two types of separation will always coincide, \textit{i.e.}, definite causal order will emerge in the limit where the parties locally behave according to a classical probabilistic theory rather than quantum theory (\textit{e.g.}, because of a decoherence process). However, it was soon shown that the two types of separation fail to coincide in the tripartite case which admits causal inequality violations with classical processes \cite{BW16}. A simple scenario where one can see the distinction between the causal/noncausal separation and the classical/nonclassical separation is, therefore, the tripartite scenario with binary inputs and outputs. However, even before we investigate that case, we can already prove some general properties of the different sets of correlations we have defined.

\subsection{Characterization of deterministic correlations achievable in the process-matrix framework}
We are interested in characterizing deterministic correlations that are achievable via process functions. These correlations form vertices of the nomic polytope and, as such, are as fundamental to our notion of classicality as vertices of the Bell polytope are to Bell locality \cite{BCP14}, \textit{i.e.}, their convex hull defines the classical polytope. We will show below that a deterministic correlation is realizable via a process function if and only if it is realizable via a process matrix (Theorem \ref{thm:det}). 

A notion that will be key in our characterization (and subsequent results) will be that of a \textit{faithful} realization, which we define below.
\begin{definition}[Faithful realization of a nomic vertex via a process function]\label{def:faithful}
	A realization of a nomic vertex via a process function is said to be faithful if and only if the signalling graph of the vertex coincides with the causal structure of the process function, \textit{i.e.}, one party deterministically signals to another if and only if the former party can causally influence the latter.\footnote{In other words, there are no `hidden' causal influences in an underlying process function that realizes a given nomic vertex faithfully: all causal influences (or lack thereof) are `seen' by the vertex. Note that here `causal influence' refers to what we earlier also called `environmental signalling' and `signalling' refers to what we earlier also called `observed signalling'.}
\end{definition}

We are now ready to prove the following theorem that will be key to many of our results.
\begin{theorem}\label{thm:det}
	A deterministic correlation $p(\vec{x}|\vec{a})=\delta_{\vec{x},f(\vec{a})}$ 
	admits a process-matrix realization if and only if it admits a process function realization that is faithful, \textit{that is}, 
	\begin{enumerate}
		\item Firstly, there exists a process function $g$  that realizes $f$, \textit{i.e.}, $f:=(f_k:\vec{A}\rightarrow X_k)_{k=1}^N$ can be expressed as  
		\begin{align}
			\forall k\in [N]: f_k(\vec{a})=f'_k(a_k,g_k(\vec{a}_{\backslash k})),
		\end{align}
		where $g:=(g_k:\vec{A}_{\backslash k}\rightarrow B_k)_{k=1}^N$ defines a process function $g:\vec{A}\rightarrow\vec{B}$ and for all $k\in [N]$, $f'_k:A_k\times B_k\rightarrow X_k$ is such that $f'_k(a_k,b_k):=f_k(a_k,\vec{a}_{\backslash k})$ for all $\vec{a}_{\backslash k}$ satisfying $b_k=g_k(\vec{a}_{\backslash k})$. Here, for each $k\in[N]$, $B_k:=\{g_k(\vec{a}_{\backslash k})\}_{\vec{a}_{\backslash k}\in\vec{A}_{\backslash k}}$, $b_k\in B_k$, and $\vec{B}:=B_1\times B_2\times\dots\times B_N$.
		
		\item Secondly, $g$ realizes $f$ faithfully, \textit{i.e.},
		\begin{align}
			&\forall k\in [N], \forall \vec{a}_{\backslash k}, \vec{a}'_{\backslash k}\in \vec{A}_{\backslash k}:\nonumber\\
			&g_k(\vec{a}_{\backslash k})\neq g_k(\vec{a}'_{\backslash k}) \nonumber\\
			\Leftrightarrow &\exists a_k\in A_k: f_k(a_k,\vec{a}_{\backslash k})\neq f_k(a_k,\vec{a}'_{\backslash k}).
		\end{align}
	\end{enumerate}
	
\end{theorem}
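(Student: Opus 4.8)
The statement is an ``if and only if,'' so the plan is to treat the two directions separately; the reverse direction is routine while the forward direction carries all of the content.

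For the reverse direction I would use the standard diagonal embedding of a classical process into a process matrix. Given the process function $g$, set $W=\sum_{\vec a}\bigotimes_{k=1}^N\ket{g_k(\vec a_{\backslash k})}\!\bra{g_k(\vec a_{\backslash k})}^{I_k}\otimes\ket{a_k}\!\bra{a_k}^{O_k}$; logical consistency of $g$ (equivalently, its unique-fixed-point property from Theorem \ref{thm:fixedpt}) is exactly what makes this $W$ satisfy the validity constraints \eqref{eq:validprocess1}--\eqref{eq:validprocess2}. Each party then plays the classical local operation ``output $o_k=a_k$, read $b_k$ off $I_k$, report $x_k=f'_k(a_k,b_k)$,'' which reproduces $\delta_{\vec x,f(\vec a)}$. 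Note that faithfulness plays no role here: \emph{any} process-function realization yields a process-matrix realization.

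For the forward direction the first step is to fix $g$ canonically so that conditions 1 and 2 hold automatically, isolating the single nontrivial claim. For each $k$ declare $\vec a_{\backslash k}\sim\vec a'_{\backslash k}$ when $f_k(a_k,\vec a_{\backslash k})=f_k(a_k,\vec a'_{\backslash k})$ for every $a_k$, and let $g_k(\vec a_{\backslash k})$ be the resulting equivalence class (so $B_k$ is the quotient). By construction $f_k$ depends on $\vec a_{\backslash k}$ only through $g_k(\vec a_{\backslash k})$, which makes $f'_k(a_k,b_k):=f_k(a_k,\vec a_{\backslash k})$ well defined and yields the factorization of condition 1, while condition 2 (faithfulness) holds because $g_k$ was defined to separate precisely those $\vec a_{\backslash k}$ on which $f_k$ differs for some $a_k$. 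Everything therefore reduces to showing that this canonical $g$ is a \emph{process function}.

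The key step is a classicalization argument turning the given quantum realization into a deterministic classical environment that implements exactly $\vec a\mapsto\bigl(g_k(\vec a_{\backslash k})\bigr)_k$. Fix $k$ and let $\rho^{I_k}(\vec a_{\backslash k})$ be the state delivered to $I_k$ when the other parties run their original instruments at settings $\vec a_{\backslash k}$. If $g_k(\vec a_{\backslash k})\neq g_k(\vec a'_{\backslash k})$ then, by faithfulness, some setting $a_k$ makes $f'_k(a_k,\cdot)$ take different values on the two classes, so the instrument $\{M_{x_k|a_k}\}$ returns two \emph{different} outcomes each \emph{with certainty}; perfect certain discrimination forces $\rho^{I_k}(\vec a_{\backslash k})\perp\rho^{I_k}(\vec a'_{\backslash k})$. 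Hence the inputs to party $k$ fall into mutually orthogonal sectors labelled by $b_k=g_k(\vec a_{\backslash k})$, and party $k$ may projectively read $b_k$. Consequently $W$, composed with the instruments, acts on these sectors as a \emph{deterministic} classical environment sending $\vec a$ to $\bigl(g_k(\vec a_{\backslash k})\bigr)_k$. Since $W$ is a valid process matrix, this classical environment inherits logical consistency (every classical local operation is a diagonal special case of a quantum one, so normalization to $1$ descends), and a logically consistent deterministic classical process is, by Theorem \ref{thm:fixedpt}, a process function. This establishes the claim and hence the theorem.

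The main obstacle is exactly this classicalization: a priori $W$ could exploit coherence across the inputs $I_k$. The orthogonality argument is what rules this out --- determinism upgrades \emph{distinguishable} to \emph{orthogonal}, and the coarsest (faithful) choice of $g_k$ guarantees that every pair of distinct labels $b_k$ is certified orthogonal by some setting, so the sectors are globally well defined. The one remaining check is that projecting onto the sector before applying the original instrument leaves the dynamics unchanged --- the genuine input already lies in its own sector --- so that the induced classical routing is single-valued; granting this, logical consistency of $W$ transfers verbatim to the unique-fixed-point condition of Theorem \ref{thm:fixedpt} for $g$.
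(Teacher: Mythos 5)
Your overall strategy is the same as the paper's: define $g_k$ by quotienting the remote settings by the relation ``$f_k(a_k,\cdot)$ agrees for every $a_k$'' (the paper's undiscriminable subsets), upgrade deterministic discrimination to orthogonality of the supports of the received states $\rho^{I_k}(\vec a_{\backslash k})$, read the sector label $b_k$ with a L\"uders instrument, and argue that $W$ composed with the local circuit fragments is a deterministic, logically consistent classical process, hence a process function by Theorem~\ref{thm:fixedpt}; the reverse direction (diagonal embedding) also matches. The problem is that the step you explicitly ``grant'' at the end is not a routine check---it is the actual content of the paper's proof---and your one-line justification for it (``the genuine input already lies in its own sector'') does not establish it.

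The issue is that your orthogonal sectors for party $S_k$ were derived under the assumption that the \emph{other} parties run their \emph{original} instruments. To extract a classical environment you must classicalize all labs simultaneously: each lab now performs a L\"uders measurement across its sectors before applying its original channel. This L\"uders step dephases coherences between sectors, so it changes the CPTP map implemented in that lab (unless the relevant inputs carry no such coherences), and could therefore in principle change the states that every \emph{other} party receives---i.e., replacing one party's operation could push another party's received states out of their sectors, making the induced routing ill-defined or random. Ruling this out is exactly what Lemma~\ref{lem:replaceinstrument} does: one shows that the conditioned operators $W_{\vec x_{\backslash k}|\vec a_{\backslash k}}$ sum to $\rho_{\vec a_{\backslash k}}\otimes\id^{O_k}$, hence by positivity each has support inside $\mathcal{H}^{b_k}\otimes\mathcal{H}^{O_k}$, and since operators of the form $\sigma^{b_k}\otimes E$ span the operators on that sector, the dephased and original channels give identical traces against every $W_{\vec x_{\backslash k}|\vec a_{\backslash k}}$; this preserves the full joint correlation under one replacement, and the argument is then iterated party by party. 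Even granting that, determinism of $\vec b$ given $\vec a'$ requires a further separate argument (the paper's final step): \emph{a priori} $b_k$ could become random upon the replacements, with all values occurring compatible with the same $x_k$, and this is excluded by combining faithfulness (distinct $b_k\neq b'_k$ are separated by $f'_k(a_k,\cdot)$ for some $a_k$) with the fact that no party can signal to its own input, so the distribution of $b_k$ cannot depend on $a_k$. Without these two arguments your proof asserts, rather than proves, the forward direction of the theorem.
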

Note that an analogous statement is easily seen to be true in the non-signalling case of Bell scenarios:  that is, any deterministic non-signalling correlation achievable by a (possibly entangled) quantum state is also achievable by a Bell-local model, \textit{i.e.}, vertices of the Bell polytope are classically realizable. We can therefore see the above theorem as a generalization of this property to the case of (potentially noncausal) signalling correlations achievable within the process-matrix framework. Theorem \ref{thm:det} also puts fundamental limits on the deterministic correlations achievable via the process-matrix framework: all vertices of the correlation polytope that fail to be nomic also fail to be achievable via any process matrix.

A function $f$ is realizable by some process matrix $W$ under local quantum instruments $\{M_{x_k|a_k}^{I_kO_k}\}_{x_k\in X_k}$, where $a_k\in A_k$ and $k\in [N]$, if and only if
\begin{align}
	\delta_{\vec{x},f(\vec{a})}=\Tr(W\bigotimes_{k=1}^NM_{x_k|a_k}^{I_kO_k}),
\end{align}
or equivalently,
\begin{align}\label{eq:perf1}
	1=\Tr(W\bigotimes_{k=1}^NM_{f_k(\vec{a})|a_k}^{I_kO_k})
\end{align}
and 
\begin{align}\label{eq:perf0}
	0=\Tr(W\bigotimes_{k=1}^NM_{x_k|a_k}^{I_kO_k}),\textrm{ if }x_k\neq f_k(\vec{a})\textrm{ for some }k\in[N].
\end{align}
\begin{figure*}
	\centering
	\includegraphics[scale=0.4]{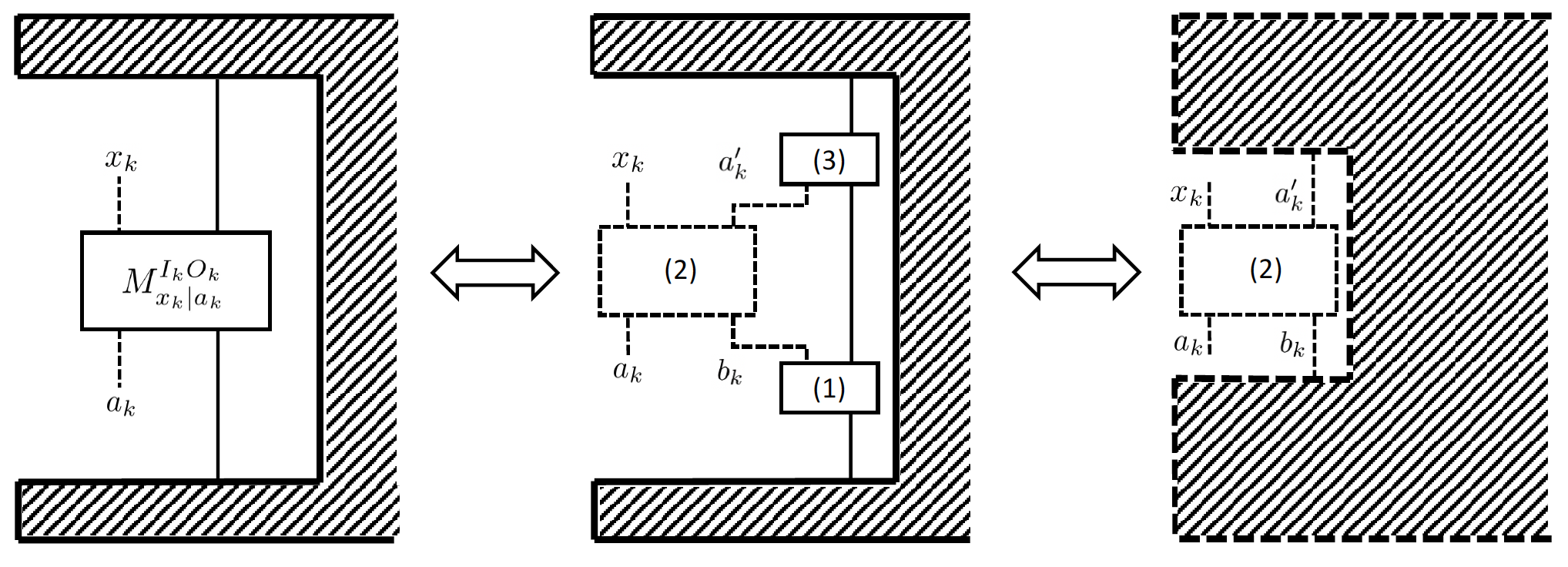}
	\caption{Transformation of the local operations of any party $S_k$ without changing the deterministic correlation $\vec{x}=f(\vec{a})$, following Lemma \ref{lem:replaceinstrument}. Solid lines indicate quantum instruments, systems, and processes, while dashed lines indicated classical instruments, systems, and processes.}\label{fig:seqofopns}
\end{figure*}

Our proof of Theorem \ref{thm:det} is based on the following Lemma, illustrated in Fig.~\ref{fig:seqofopns}.
\begin{lemma}\label{lem:replaceinstrument}
	In any process-matrix realization of a deterministic correlation $p(\vec{x}|\vec{a})=\delta_{\vec{x},f(\vec{a})}$, the local quantum operation performed by any party $S_k$ (where $k\in[N]$) can be replaced by a sequence of operations as described in Fig.~\ref{fig:seqofopns} without changing the deterministic correlation, where
	\begin{itemize}
		\item 	the instrument (1) is a fixed L\"uders instrument that projects the incoming state onto one of a set of mutually orthogonal subspaces $\mathcal{H}^{b_k}$, where $b_k\in\{1_k,\dots,N_k\}$, yielding classical outcome $b_k$, and outputting the projected state  $\rho_{b_k}\in\mathcal{L}(\mathcal{H}^{b_k})\subseteq\mathcal{L}(\mathcal{H}^{I_k})$;
		
		\item the instrument (2) is a classical instrument dependent on the local setting $a_k$ that takes as an input the outcome $b_k$ of instrument (1), sends out the output $a'_k = a_k$, and produces the outcome $x_k = f'_k(a_k, b_k)$;
		
		\item the instrument (3) is a fixed instrument that takes as input $a'_k$ and $\rho_{b_k}$ and performs exactly what the original quantum instrument $\{M^{I_kO_k}_{x_k|a_k}\}_{x_k\in X_k}$ would perform from $\mathcal{H}^{I_k}$ to $\mathcal{H}^{O_k}$ depending on $a_k$, but now from $\mathcal{H}^{b_k}$  to $\mathcal{H}^{O_k}$ conditionally on the value of $a’_k$, with the outcome of that operation traced out.
	\end{itemize}
\end{lemma}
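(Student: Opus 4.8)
The plan is to show that any local quantum operation in a deterministic process-matrix realization can be decomposed into the three-step sequence of Fig.~\ref{fig:seqofopns} while preserving the correlation $\delta_{\vec{x},f(\vec{a})}$. The key physical input is that the correlation is \emph{deterministic}: for a fixed setting $\vec{a}$, only the outcome tuple $\vec{x}=f(\vec{a})$ occurs with certainty. This rigidity is what lets us replace the genuinely quantum instrument by one whose outcome is read off from a fixed (setting-independent) measurement. First I would fix all parties other than $S_k$ and analyze the reduced object seen by $S_k$: tracing out the other parties' operations against $W$ yields an effective (conditional) state/channel resource on $\mathcal{H}^{I_k}\otimes\mathcal{H}^{O_k}$ that party $S_k$ acts on. I would argue that, because the global correlation is deterministic for every choice of the other parties' settings, the outcome $x_k$ that $S_k$ must report is fully determined by $a_k$ together with some coarse-grained classical feature $b_k$ of the incoming system---the feature that distinguishes the mutually orthogonal subspaces $\mathcal{H}^{b_k}$.

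The main step is constructing instrument (1), the fixed L\"uders measurement. The idea is to identify the orthogonal subspaces $\{\mathcal{H}^{b_k}\}$ so that the incoming state, for any choice of the other parties' operations, is supported on a single subspace $\mathcal{H}^{b_k}$ whenever the correlation is deterministic, and so that reading off $b_k$ suffices to determine $f'_k(a_k,b_k)$. Concretely I would take the relevant incoming states (the marginals of $W$ composed with the complementary operations, over all settings/outcomes of the others) and show they are block-diagonal with respect to a \emph{setting-independent} orthogonal decomposition of $\mathcal{H}^{I_k}$; determinism of the correlation forces no coherence across blocks that could affect $x_k$, so a projective measurement in this decomposition is non-disturbing at the level of the correlation. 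Labelling the blocks by $b_k$ gives instrument (1), which is manifestly independent of $a_k$. Then instrument (2) is purely classical bookkeeping: given $b_k$ and the setting $a_k$, it outputs $x_k=f'_k(a_k,b_k)$ and forwards $a'_k=a_k$, which matches the factorization $f_k(\vec{a})=f'_k(a_k,g_k(\vec{a}_{\backslash k}))$ anticipated in Theorem~\ref{thm:det}. Finally, instrument (3) reconstitutes the original channel's action on the post-measurement state $\gamma_k$, conditioned on the forwarded setting $a'_k$, with its outcome discarded---this preserves the \emph{output} system $O_k$ fed back to the environment so that the other parties' statistics, and hence the global correlation, are untouched.

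I would then verify that substituting these three instruments back into Eq.~\eqref{eq:corrpm} reproduces $\delta_{\vec{x},f(\vec{a})}$. The verification is essentially a trace computation: the L\"uders step followed by the classical relabelling and the reconstituted channel composes, when traced against $W$ and the complementary operations, to the same value as the original $M^{I_kO_k}_{x_k|a_k}$, precisely because determinism guarantees the measurement in step (1) does not disturb the off-diagonal structure that matters for the correlation. The \textbf{main obstacle} I anticipate is justifying that a \emph{single, setting-independent} orthogonal subspace decomposition works simultaneously for all of $S_k$'s settings $a_k$ and for all complementary operations by the other parties: a priori the effective incoming states could vary with these choices, and one must rule out the possibility that detecting $b_k$ requires a basis that depends on $a_k$. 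The resolution should again come from determinism together with the no-signalling-like structure enforced by the validity constraints~\eqref{eq:validprocess2} on $W$, which constrain the reduced incoming resource to $S_k$ to be the same regardless of $S_k$'s own choice of instrument; I would lean on this to extract the common block structure. Once this subspace decomposition is pinned down uniformly, the rest of the argument is routine composition of CP maps and linearity of the trace.
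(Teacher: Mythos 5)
Your architecture is the same as the paper's: fix the other parties, use the fact that the state $S_k$ receives depends only on $\vec{a}_{\backslash k}$ and not on $a_k$, build a setting-independent orthogonal decomposition of $\mathcal{H}^{I_k}$, L\"uders-measure it, post-process classically, and re-apply the original channel conditioned on the forwarded setting. However, the two steps that carry all the weight are asserted rather than proved. First, you never construct the subspaces $\mathcal{H}^{b_k}$; ``show the incoming states are block-diagonal\dots determinism forces no coherence across blocks'' is close to circular, since any family of states is block-diagonal with respect to the trivial decomposition, and what the lemma needs is a decomposition simultaneously \emph{fine} enough that $(a_k,b_k)$ determines $x_k$ and \emph{coarse} enough that each incoming state lies inside a \emph{single} block (strictly stronger than block-diagonality). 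The paper obtains this by defining an equivalence relation on remote settings: $\vec{a}_{\backslash k}$ and $\vec{a}'_{\backslash k}$ are discriminable iff $f_k(a_k,\vec{a}_{\backslash k})\neq f_k(a_k,\vec{a}'_{\backslash k})$ for some $a_k$; determinism then forces the corresponding states $\rho_{\vec{a}_{\backslash k}}$ and $\rho_{\vec{a}'_{\backslash k}}$ to be perfectly distinguishable, hence orthogonally supported, and the blocks are spans of supports within each equivalence class (with any leftover complement absorbed into one block). This orthogonality-from-determinism argument, together with checking that ``not discriminable'' is an equivalence relation so that $b_k$ is a well-defined function $g_k(\vec{a}_{\backslash k})$, is the missing idea; note also that it resolves your stated ``main obstacle'' automatically, since the discriminability relation quantifies over all $a_k$ and the $\rho_{\vec{a}_{\backslash k}}$ are independent of $a_k$.

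Second, the final verification is not ``routine composition of CP maps.'' The new channel $M'^{I_kO_k}_{a_k}$ (outcome $x_k$ summed out) agrees with the original $M^{I_kO_k}_{a_k}$ only on inputs supported inside a single $\mathcal{H}^{b_k}$. But preserving the \emph{full} correlation, not just $S_k$'s marginal, requires comparing traces against the outcome-conditioned operators $W_{\vec{x}_{\backslash k}|\vec{a}_{\backslash k}}$ obtained by composing $W$ with the other parties' CP maps for their specific outcomes $\vec{x}_{\backslash k}$, and these need not \emph{a priori} lie in any single block. The paper closes this with a positivity argument you do not have: each $W_{\vec{x}_{\backslash k}|\vec{a}_{\backslash k}}\geq 0$ and they sum to $\rho_{\vec{a}_{\backslash k}}\otimes\id^{O_k}$, whose support lies in $\mathcal{H}^{b_k}\otimes\mathcal{H}^{O_k}$ with $b_k=g_k(\vec{a}_{\backslash k})$, so each positive summand is supported there too; since operators of the form $\sigma^{b_k}\otimes E$ span the operators on that subspace, linearity yields equality of the traces. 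Without this, your claim that the measurement ``does not disturb the off-diagonal structure that matters'' is unsupported---the measurement genuinely destroys cross-block coherences, and one must prove there are none in the objects being traced against. So while your route is the right one, the proposal as written has genuine gaps at both of the proof's load-bearing points.
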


\begin{proof}[Proof of Lemma \ref{lem:replaceinstrument}]
	Consider a given party $S_k$. For every combination of settings $\vec{a}_{\backslash k}$ of the other parties and corresponding instruments that they perform where we discard the (classical) outcomes of those instruments, party $S_k$ would receive a state $\rho_{\vec{a}_{\backslash k}}$ that may in general depend on $\vec{a}_{\backslash k}$ but not on $a_k$. This is a consequence of the fact that whenever we fix deterministic instruments (\textit{i.e.} CPTP maps) in the labs of the other parties, the probabilities for the outcomes of all local instruments that party $S_k$ may perform are the same as those that would be obtained if those instruments are applied on a fixed input density matrix \cite{OCB12}. For a given setting $a_k$, party $S_k$ performs a quantum instrument with outcomes labeled $x_k\in X_k$. By assumption, each such instrument gives a deterministic outcome---namely, $x_k = f_k(a_k, \vec{a}_{\backslash k})$---for any given state $\rho_{\vec{a}_{\backslash k}}$ that party $S_k$ may receive.
	
	We will first identify a complete set of mutually orthogonal subspaces $\mathcal{H}^{b_k}$ of the input Hilbert space $\mathcal{H}^{I_k}$ of party $S_k$, so that
	\begin{gather}\label{eq:directsum}
		\mathcal{H}^{I_k} = \bigoplus_{b_k}  \mathcal{H}^{b_k},
	\end{gather}
	and a function $g_k: \vec{A}_{\backslash k}\rightarrow B_k$, such that the support of each $\rho_{\vec{a}_{\backslash k}}$ is contained entirely in the subspace labelled by $b_k =g_k(\vec{a}_{\backslash k})$, and such that the classical outcome $x_k$ that would be obtained for a specific choice of the local setting $a_k$ is a function $f'_k$ of $a_k$ and the value $b_k$ labelling the subspace occupied by $\rho_{\vec{a}_{\backslash k}}$, \textit{i.e.}, $x_k=f'_k(a_k,b_k)$.
	
	For brevity of notation, we will temporarily relabel the different possible settings $\vec{a}_{\backslash k}$ of the other parties by $i$, so that the corresponding input states of $S_k$ are denoted by $\rho_i$. We now explain how to define the subspaces $\mathcal{H}^{b_k}$ and the functions $g_k$ and $f'_k$.
	
	Consider some value of $i$, say $i_*$. For every $i \neq i_*$, check whether there is any $a_k\in A_k$ such that $f_k(a_k, i)$ is different from $f_k(a_k, i_*)$. If not, we say that $i$ is not discriminable from ${i_*}$. Otherwise, we say that $i$ is discriminable from $i_*$: in this case, $\rho_i$ and $\rho_{i_*}$ must be perfectly distinguishable quantum states since by assumption there is an operation that yields deterministically two distinct outcomes---that is, $f_k(a_k,i)\neq f_k(a_k,i_*)$---when applied on these two different states. Two states are perfectly distinguishable if and only if their supports are orthogonal.
	
	Consider the set $\mathcal{S}_{1_k}$ of remote settings that contains $i_*$ and all $i$ such that $i$ is not discriminable from $i_*$. Observe that we would have obtained the same set of remote settings had we performed the same procedure starting from any other $i$ in $\mathcal{S}_{1_k}$. This is because, by definition: 1) $j$ is not discriminable from $j$ (\textit{reflexivity}), 2) if $j$ is not discriminable from $i$, then $i$ is not discriminable from $j$ (\textit{symmetry}), and 3) if $j$ is not discriminable from $i$ and $m$ is not discriminable from $i$, then $j$ is not discriminable from $m$ (\textit{transitivity}). Hence, the relation `not discriminable from' is an equivalence relation between remote settings and defines an equivalence class $\mathcal{S}_{1_k}$ that we refer to as an \textit{undiscriminable subset} of the set of all possible remote settings $\vec{A}_{\backslash k}$.
	
	Next, if $\mathcal{S}_{1_k}$ is not the full set of possible settings that the other parties can choose, consider some $n_*$ that is not in $\mathcal{S}_{1_k}$. Perform the same procedure as above and define the undiscriminable subset $S_{2_k}$, which contains $n_*$ and any $n$ that is not discriminable from $n_*$. Obviously, the states in the set $\{\rho_i\}_{i\in \mathcal{S}_{1_k}}$ are orthogonal to the states in the set $\{\rho_i\}_{i\in \mathcal{S}_{2_k}}$. 
	We iterate this procedure until each (remote) setting $i$ has been attributed to some undiscriminable subset $\mathcal{S}_{b_k}$, $b_k =1_k,\dots,N_k$. Note that, apart from their labels, the undiscriminated subsets $\mathcal{S}_{b_k}$ that we defined do not depend on which $i_*$, $n_*$, $\dots$, we selected in the above construction.
	
	The states in the set $\{\rho_i\}_{i\in \mathcal{S}_{b_k}}$ are all orthogonal to the states in the set 
	$\{\rho_i\}_{i\in \mathcal{S}_{b'_k}}$ for $b'_k \neq b_k$. Let us denote the span of the supports of all states in $\{\rho_i\}_{i\in \mathcal{S}_{b_k}}$ by $\mathcal{H}^{b_k}$. The subspaces $\mathcal{H}^{b_k}$ for the different values of $b_k = 1_k,\dots, N_k$, are all mutually orthogonal. If the span (direct sum) of these mutually orthogonal subspaces equals the full input Hilbert space $\mathcal{H}^{I_k}$, we define the decomposition in Eq.~\eqref{eq:directsum} in terms of these subspaces. Otherwise, let $\mathcal{K}$ denote the complement of the direct sum of these subspaces to the full input Hilbert space $\mathcal{H}^{I_k}$. We redefine $\mathcal{H}^{1_k}$ as $\mathcal{H}^{1_k} \rightarrow \mathcal{H}^{1_k} \oplus \mathcal{K}$. With this redefinition, the subspaces $\mathcal{H}^{b_k}$ now span the full input Hilbert space and we define the decomposition in Eq.~\eqref{eq:directsum} in terms of them.
	
	Now, observe that each value $\vec{a}_{\backslash k}$ of the settings of the other parties belongs to one and only one $\mathcal{S}_{b_k}$. This membership defines a function $g_k$ from $\vec{A}_{\backslash k}$ to $B_k:=\{1_k,2_k,\dots,N_k\}$. Since by definition $f(a_k, \vec{a}_{\backslash k})=f(a_k, \vec{a}'_{\backslash k})$ for all $a_k$ if and only if $g_k(\vec{a}_{\backslash k}) = g_k(\vec{a}'_{\backslash k})$, we can define $f(a_k, \vec{a}_{\backslash k})=:f'_k(a_k,b_k)$, where $f'$ is a function from $(a_k,b_k)$ to $x_k$ that takes exactly the same value as $f(a_k,\vec{a}_{\backslash k})$ for any $\vec{a}_{\backslash k}$ such that $g(\vec{a}_{\backslash k})=b_k$ (by definition there is at least one $\vec{a}_{\backslash k}$ such that $g_k(\vec{a}_{\backslash k})=b_k$).
	
	Consider now the quantum operation that party $S_k$ performs in the original protocol. We want to show that we can replace it by the sequence of local operations depicted in Fig.~\ref{fig:seqofopns} and described above, without changing the correlation $p(\vec{x} | \vec{a})=\delta_{\vec{x},f(\vec{a})}$. We first observe that by construction the sequence of operations in Fig.~\ref{fig:seqofopns} yields outcome $x_k$ with the same probability as the original instrument. Indeed, the Lüders instrument (1) would yield the correct value of $b_k$ by construction, which is then used together with $a_k$ in the classical instrument (2) to produce the same outcome $x_k=f'_k(a_k,b_k)=f_k(a_k,\vec{a}_{\backslash k})$ as the one that would result from the original operation. This means that party $S_k$ will produce the correct outcome $x_k=f(a_k, \vec{a}_{\backslash k})$ with certainty for all values of $a_k$ and $\vec{a}_{\backslash k}$.
	
	However, this only shows that the probabilities $p(x_k|a_k,\vec{a}_{\backslash k})=\delta_{x_k, f(a_k,\vec{a}_{\backslash k})}$ are the same as in the original protocol. We want to show that the full correlation
	
	\begin{gather}\label{eq:fullcorrelation}
		p(\vec{x} | a_k, \vec{a}_{\backslash k}) =\prod_{k'=1}^N \delta_{x_{k'}, f(a_k, \vec{a}_{\backslash k})}
	\end{gather} 
	remains the same if we replace the operation of party $S_{k}$ by the one in Fig.~\ref{fig:seqofopns}.
	
	Let us denote by $\tilde{p}(\vec{x} | a_k, \vec{a}_{\backslash k})$ the correlation that would be obtained if we replace the operation of party $S_k$ by the one in Fig.~\ref{fig:seqofopns}. We have already shown that $\tilde{p}(x_k| a_k,\vec{a}_{\backslash k}):= \sum_{\vec{x}_{\backslash k}}\tilde{p}(\vec{x}| a_k, \vec{a}_{\backslash k})=\delta_{x_k,f(a_k,\vec{a}_{\backslash k})}=p(x_k|a_k, \vec{a}_{\backslash k})$. We will next show that 
	\begin{align}
		\tilde{p}(\vec{x}_{\backslash k}| a_k, \vec{a}_{\backslash k}):= &\sum_{x_k}\tilde{p}(x_k, \vec{x}_{\backslash k} | a_k, \vec{a}_{\backslash k})\\
		= &\prod_{k’\neq k} \delta_{x_{k’}, f(a_k,\vec{a}_{\backslash k})} =:p(\vec{x}_{\backslash k} | a_k, \vec{a}_{\backslash k}).
	\end{align}
	
	Note that if both marginals $\tilde{p}(x_k| a_k, \vec{a}_{\backslash k})$  and $\tilde{p}(\vec{x}_{\backslash k}| a_k, \vec{a}_{\backslash k})$ are Kronecker delta functions as claimed here (still to be shown), this would mean that the full correlation $\tilde{p}(\vec{x} | a_k, 
	\vec{a}_{\backslash k})$ is a product of these marginals, \textit{i.e.}, $\tilde{p}(\vec{x} | a_k, \vec{a}_{\backslash k})  = \prod_k \delta_{x_k, f(a_k, \vec{a}_{\backslash k})}  = p(\vec{x} | a_k,\vec{a}_{\backslash k})$.  
	
	Recalling that the original instrument of party $S_k$ is described by CP maps with CJ matrices $M^{I_kO_k}_{x_k|a_k}$, let the new instrument of party $S_k$ realised via the sequence of operations in Fig.~\ref{fig:seqofopns} be described by CP maps with CJ matrices $M'^{I_kO_k}_{x_k|a_k}$. Let the CJ matrices of the original instruments of the other parties be denoted by $M^{I_{k'}O_{k'}}_{x_{k'}|a_{k'}}$, $k'\neq k$. We have that 
	\begin{gather}
		\tilde{p}(\vec{x}_{\backslash k}| a_k, \vec{a}_{\backslash k}) = \sum_{x_k} \textrm{Tr}[(M'^{I_kO_k}_{x_k|a_k} \bigotimes_{k’\neq k} M^{I_{k'}O_{k'}}_{x_{k’}|a_{k’}}) \cdot W],
	\end{gather}
	where $W$ is the process matrix and $\cdot$ denotes matrix multiplication that is made explicit for clarity. We can rewrite the previous expression as 
	\begin{gather}
		\tilde{p}(x_{\backslash k}| a_k, \vec{a}_{\backslash k}) = \textrm{Tr}(M'^{I_kO_k}_{a_k} \cdot W_{\vec{x}_{\backslash k}|\vec{a}_{\backslash k}}), 
	\end{gather}
	where $M'^{I_kO_k}_{a_k}=\sum_{x_k} M'^{I_kO_k}_{x_k|a_k}$ and $W_{\vec{x}_{\backslash k}|\vec{a}_{\backslash k}}= \textrm{Tr}_{\vec{I}_{\backslash k} \vec{O}_{\backslash k}} [(\id^{I_kO_k}\bigotimes_{k’\neq k} M^{I_{k'}O_{k'}}_{x_{k’}|a_{k’}}) \cdot W]$. We want to show that 
	\begin{gather}
		\textrm{Tr}(M'^{I_kO_k}_{a_k} \cdot W_{\vec{x}_{\backslash k}|\vec{a}_{\backslash k}}) = \textrm{Tr}(M^{I_kO_k}_{a_k} \cdot W_{\vec{x}_{\backslash k}|\vec{a}_{\backslash k}}), \label{???}
	\end{gather}
	where $M^{I_kO_k}_{a_k}=\sum_{x_k}  M^{I_kO_k}_{x_k|a_k}$. To this end, we will use the fact that for every $a_k$, the channel (with CJ operator) $M'^{I_kO_k}_{a_k}$ acts exactly like the channel $M^{I_kO_k}_{a_k}$ on any input state whose support is inside one of the subspaces $\mathcal{H}^{b_k}$ on which the Lüders instrument (1) projects. (This fact follows simply from the construction of the sequence of instruments in Fig.~\ref{fig:seqofopns}--- the Lüders instrument (1) passes on any such input state unaltered to instrument (3), which then applies on it the channel $M^{I_kO_k}_{a_k}$ determined by value of $a_k$. Note that when we sum over the outcomes $x_k$ of instrument (2), we get the classical identity channel from $a_k$ to $a’_k$.) 
	
	This fact means that if we apply the channel $M'^{I_kO_k}_{a_k}$ or the channel $M^{I_kO_k}_{a_k}$ on any input state $\sigma_{b_k} $with support in $\mathcal{H}^{b_k}$, and then apply a POVM on the output of the channel, any outcome of such a measurement, say described by a POVM element $E^T$ (the transposition is included simply for a direct correspondence with the following formula), must occur with the same probability irrespectively of whether we first applied the channel $M'^{I_kO_k}_{a_k}$ or the channel $M^{I_kO_k}_{a_k}$. In the CJ representation, this means that for any $\sigma_{b_k}\geq 0$ with support in $\mathcal{H}^{b_k}$ and any $E\geq 0 \in \mathcal{L}(\mathcal{H}^{O_k})$, we have
	
	\begin{gather}
		\textrm{Tr} [M'^{I_kO_k}_{a_k} \cdot (\sigma^{b_k} \otimes E)] = \textrm{Tr} [M^{I_kO_k}_{a_k} \cdot (\sigma^{b_k} \otimes E)]. \label{????}
	\end{gather}
	
	We will now argue that this implies Eq. \eqref{???}. Observe that $W_{\vec{x}_{\backslash k}|\vec{a}_{\backslash k}} \geq 0$ and $\sum_{\vec{x}_{\backslash k}} W_{\vec{x}_{\backslash k}|\vec{a}_{\backslash k}}  = \rho_{\vec{a}_{\backslash k}} \otimes \id^{O_k}$. This means that for all $\vec{x}_{\backslash k}$, $W_{\vec{x}_{\backslash k}|\vec{a}_{\backslash k}}$ must have support that is inside $\mathcal{H}^{b_k} \otimes \mathcal{H}^{O_k}$, where $b_k = g_k(\vec{a}_{\backslash k})$. Indeed, if any of the positive semidefinite operators $W_{\vec{x}_{\backslash k}|\vec{a}_{\backslash k}}$ had support that is not contained in $\mathcal{H}^{b_k} \otimes \mathcal{H}^{O_k}$, it would not be possible by summing these operators to obtain the operator $\rho_{a_{\backslash k}} \otimes \id^{O_k}$, which has support in $\mathcal{H}^{b_k} \otimes \mathcal{H}^{O_k}$. 
	
	The latter observation means that we can write each $W_{\vec{x}_{\backslash k}|\vec{a}_{\backslash k}}$ as a linear combination of operators of the form $\sigma^{b_k} \otimes E$, since such operators span the full space of operators over $\mathcal{H}^{b_k} \otimes \mathcal{H}^{O_k}$. By linearity, Eq. \eqref{????} then implies Eq. \eqref{???}. This completes the proof of Lemma \ref{lem:replaceinstrument}.
\end{proof}
We can now prove Theorem \ref{thm:det}.
\begin{proof}[Proof of Theorem \ref{thm:det}]
	The `if' direction of Theorem \ref{thm:det} is trivial: it follows from simply noting that a process function realization can always be seen as a process matrix realization in the diagonal limit of the process-matrix framework \cite{BW16}. We prove the `only if' direction below.
	
	Since Lemma \ref{lem:replaceinstrument} holds for an arbitrary party $S_k$, this implies that we can replace the operations of all parties by sequences of operations of the kind in Fig.~\ref{fig:seqofopns} without changing the deterministic correlation. Indeed, we can first take one party and replace their operations following the construction in Lemma \ref{lem:replaceinstrument}. We can then replace the operation of another party following the analogous construction, and so on until we cover all parties. At each step, the correlation remains the same.
	
	So far, we have found a realisation of the original correlation based on the same process matrix $W$ and a particular type of sequence of operations inside each lab. To show that the same correlation can be obtained by applying classical operations on a deterministic classical process, it suffices to notice that this is precisely what is happening effectively if we think that each party applies the classical operations (2) rather than the full operation in the original lab. The input variable $b_k$ and the output variable $a'_k$ can be thought of as, respectively, the input and output variables of a classical ‘lab’ in which the operation (2) is applied. The fact that these variables define a valid lab in the sense of the ‘classical’ (locally diagonal)  process-matrix framework follows simply from the fact that if we replace the operation (2) by any other classical operation we would obtain valid joint probabilities between all parties. This is because replacing instrument (2) by any other instrument would still amount to an overall valid quantum instrument inside the original lab (we have a sequential composition of instruments (1), (2), and (3), which yields a valid quantum instrument). 
	
	The locally diagonal process matrix on which the classical operations (2) of the parties act will now be shown to be deterministic, \textit{i.e.}, equivalent to a process function. This locally diagonal process matrix can be easily obtained by ‘unplugging’ the local classical operations (2) and composing the remaining circuit fragment (containing only the instruments (1) and (3)) in each original lab with the original process matrix $W$. (Since the link product in the CJ isomorphism is associative, the so-obtained locally diagonal process matrix on the input and output variables $b_k$ and $a'_k$ would yield the same correlation when composed with the local classical instruments (2).)
	
	It is well known that every process matrix is equivalent to a channel from the output systems of the parties to their input systems. Process functions correspond to deterministic classical examples of such channels. Therefore, what remains to be shown is that the channel from $a’_k$ to $b_k$ obtained when we compose the original process matrix $W$ with the local circuit fragments involving the instruments (1) and (3), is deterministic, \textit{i.e.}, for every $\vec{a'}$ that the parties may send out to the environment through their (classical) output systems, they receive a specific $\vec{b}$ through their input systems with probability $1$. We have already seen that when the parties output a specific $\vec{a}'=\vec{a}$ (which is what the instrument (2) in our protocol does), we obtain a specific $\vec{x}$ with probability 1. But we still have to argue that the outcome $b_k$ of the Lüders instrument (1) would be deterministic given $\vec{a}$. This was true by construction in our procedure for replacing the original instrument of party $S_k$, but it assumed fixed operations of the other parties. Although we proved that if we replace the operations of all parties the total correlation will not change, we have not shown explicitly that the supports of the states $\rho_{\vec{a}_{\backslash k}}$ received by a given party $S_k$ could not go out of the respective $\mathcal{H}^{b_k}$ when we replace the operations of other parties. It is \textit{a priori} conceivable that the outcome $b_k$ of the Lüders instrument (1) could become random upon such a replacement as long as all outcomes $b_k$ that may occur with nonzero probability for a fixed $\vec{a}$ produce the same $x_k$ when inserted in the classical instrument (2) together with $a_k$. But by construction the instrument (2) produces the outcome $x_k= f’_k(a_k, b_k)$ when applied on inputs $a_k$ and $b_k$, and by definition the function $f'_k$ is such that for $b_k \neq b'_k$ there would be at least one value of $a_k$ such that $f’_k(a_k, b_k) \neq  f’_k(a_k, b'_k)$. Since the received value of $b_k$ that would be plugged in instrument (2) is independent of $a_k$ (since in the process-matrix framework no party can signal to their own input), it is impossible for party $S_k$ to receive two (or more) different values of $b_k$ that would produce the same $x_k$ for every choice of $a_k$. Thus, the channel from $\vec{a}'$ to $\vec{b}$ that describes the newly defined locally diagonal process must be deterministic. It is given by 
	\begin{gather}
		p(\vec{b}|\vec{a}') = \prod_k\delta_{b_k, g_k(\vec{a}')} .
	\end{gather}
	This completes the proof of Theorem 4.
\end{proof}

\subsection{Strict inclusions}
The sets of correlations we have defined admit the following strict inclusion relations:
\begin{align}\label{eq:strictinclusions}
	\mathcal{DC}\subsetneq\mathcal{PC}\subsetneq\mathcal{QP}\subsetneq \mathcal{qC}.
\end{align}
The strict inclusion $\mathcal{PC}\subsetneq\mathcal{QP}$ follows from the bipartite case since quantum process correlations can violate bipartite causal inequalities \cite{OCB12, BAF15} but deterministically/probabilistically consistent correlations cannot do so in the bipartite case. Specifically, in the bipartite case, we have 
\begin{align}
	\mathcal{DC}^{(2)}=\mathcal{PC}^{(2)}\subsetneq\mathcal{QP}^{(2)}\subsetneq \mathcal{qC}^{(2)},
\end{align}
where we use the superscript to denote that these inclusion relations are specific to the bipartite case. The strict inclusion $\mathcal{QP}^{(2)}\subsetneq \mathcal{qC}^{(2)}$
already follows from Appendix A of Ref.~\cite{BAB19}: namely, that the binary input/output bipartite `Guess Your Neighbour's Input' (GYNI) game cannot be won perfectly by any correlation realizable via finite-dimensional process matrices. 
Here we adopt a different strategy to prove the strict inclusion $\mathcal{QP}\subsetneq \mathcal{qC}$: namely, as a corollary to Theorem \ref{thm:det}.
Hence, we generalize the aforementioned unachievability of the correlation winning the GYNI game perfectly (which follows from Ref.~\cite{BAB19}) to all deterministic noncausal correlations in the bipartite case and, indeed, all deterministic noncausal correlations unachievable by process functions in the general case.\footnote{Later (Corollary \ref{cor:tripartiteantinomy}), we will provide examples of provably unachievable deterministic noncausal correlations in the case of tripartite process functions: in principle, any such example can be used in proving Corollary \ref{cor:strictincl1}.}
\begin{corollary}\label{cor:strictincl1}
	The set of quantum process correlations is strictly contained within the set of quasi-consistent correlations, \textit{i.e.}, $\mathcal{QP}\subsetneq \mathcal{qC}$.
\end{corollary}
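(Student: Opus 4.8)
The plan is to exhibit a single deterministic correlation that lies in $\mathcal{qC}$ but not in $\mathcal{QP}$, thereby upgrading the already-established inclusion $\mathcal{QP}\subseteq\mathcal{qC}$ to a strict one. Since $\mathcal{qC}$ is the full correlation polytope---every correlation is quasi-consistent, as shown just above---any deterministic correlation is automatically in $\mathcal{qC}$, so the entire burden is to find one that no process matrix can realize. First I would take the bipartite GYNI vertex in the $(2,2,2)$ scenario, namely $p(\vec{x}|\vec{a})=\delta_{x_1,a_2}\delta_{x_2,a_1}$, in which each party outputs the other's input. Its signalling graph is $S_1\leftrightarrow S_2$, and it is noncausal: neither party's outcome is independent of the other's setting, so it violates the deterministic causal form of Eq.~\eqref{eq:causalcorrdet}.

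The core of the argument is to invoke Theorem \ref{thm:det}: a deterministic correlation admits a process-matrix realization if and only if it admits a faithful process-function realization. Hence it suffices to show that the GYNI vertex has no faithful process-function realization. By Definition \ref{def:faithful}, faithfulness forces the causal structure of the realizing process function to coincide with the signalling graph of the vertex, i.e., the two-cycle $\{(1,2),(2,1)\}$. I would then apply the siblings-on-cycles necessary condition (Theorem \ref{thm:siboncyc}): an admissible causal structure---one arising from a process function---must contain siblings on every directed cycle. The cycle here involves only the two parties $S_1$ and $S_2$, and with no third party available there is no common parent, so the two-cycle fails the siblings-on-cycles condition. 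Thus no process function can carry this causal structure, no faithful realization exists, and by Theorem \ref{thm:det} the GYNI vertex is not in $\mathcal{QP}$.

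As a self-contained alternative to the siblings-on-cycles route, I could instead rule out the candidate process function directly via the fixed-point characterization (Theorem \ref{thm:fixedpt}). The only process function compatible with faithfulness is the swap $\omega(o_1,o_2)=(o_2,o_1)$; composing it with the local interventions $h_1(i_1)=i_1\oplus 1$, $h_2(i_2)=i_2$ yields the map $(i_1,i_2)\mapsto(i_2,i_1\oplus 1)$, whose fixed-point equations $i_1=i_2$ and $i_2=i_1\oplus 1$ are manifestly inconsistent. The absence of a fixed point shows $\omega$ is not a process function, again denying a faithful realization.

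Combining these steps, the GYNI vertex witnesses $\mathcal{QP}\subsetneq\mathcal{qC}$, and this simultaneously recovers the result of Ref.~\cite{BAB19} that the GYNI game cannot be won perfectly by any finite-dimensional process matrix. Conceptually, the main obstacle is not the existence of the separating correlation but ensuring that the faithfulness clause of Theorem \ref{thm:det} genuinely pins the required causal structure down to the offending two-cycle; once that is established, both the siblings-on-cycles condition and the fixed-point criterion close the argument immediately.
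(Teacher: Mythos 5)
Your proof is correct, and it follows the same overall strategy as the paper---exhibit a deterministic noncausal correlation (the bipartite GYNI vertex), note that it lies in $\mathcal{qC}$ trivially, and use Theorem \ref{thm:det} to convert ``no process-function realization'' into ``no process-matrix realization''---but the key step, ruling out a process-function realization of the GYNI vertex, is argued differently. The paper simply cites the result of Ref.~\cite{OCB12} that bipartite classical processes yield only causal correlations, so no process function can reach the noncausal GYNI vertex. You instead use the faithfulness clause of Theorem \ref{thm:det} to pin the candidate causal structure down to the two-cycle and then exclude it via the siblings-on-cycles condition (Theorem \ref{thm:siboncyc}); this is essentially the argument the paper gives later for Corollary \ref{cor:siboncyc}, here specialized to the bipartite case, so your route trades the citation of Ref.~\cite{OCB12} for a citation of Ref.~\cite{TB22}. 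Your second, fixed-point route via Theorem \ref{thm:fixedpt} is the most self-contained of the three: it needs neither of those external results, only the characterization of process functions from Ref.~\cite{BW16fp}, and it is the argument that generalizes most directly when one wants explicit certificates of non-realizability. One small overstatement to fix: faithfulness does not force the realizing function to be exactly the swap $\omega(o_1,o_2)=(o_2,o_1)$; it forces $g_1$ and $g_2$ to be injective on the remote setting, so up to relabelling of the alphabets $B_1,B_2$ there are four candidates (the swap composed with local bit-flips). This is harmless---each candidate fails the fixed-point test by the same two-line computation, and local relabellings preserve the process-function property---but you should either say ``without loss of generality, after relabelling'' or check all four cases explicitly.
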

\begin{proof}
	Since the correlations achievable with bipartite process matrices are always causal \cite{OCB12}, it follows that bipartite process functions cannot achieve deterministic noncausal correlations, \textit{e.g.}, the correlation winning a bipartite GYNI game \cite{BAF15} perfectly, which is deterministic and noncausal. Hence, by Theorem \ref{thm:det}, it is also impossible to achieve such deterministic noncausal correlations with a bipartite process matrix.
	
	Indeed, \textit{any} example of a deterministic noncausal correlation that cannot be achieved by a process function allows, via Theorem \ref{thm:det}, to demonstrate the strict inclusion $\mathcal{QP}\subsetneq \mathcal{qC}$. Examples of such correlations in the tripartite case are shown further on in Corollary \ref{cor:tripartiteantinomy}.
\end{proof}

We defer a formal proof of the strict inclusion $\mathcal{DC}\subsetneq\mathcal{PC}$ until later (see Theorem \ref{thm:dcpcgap}).

\subsection{Sufficient condition for antinomicity of vertices in any correlational scenario}
We can also prove the following corollary of Theorem \ref{thm:det} that provides a sufficient condition for antinomicity of vertices of the correlation polytope.
\begin{corollary}\label{cor:siboncyc}
	Given any correlational scenario, each vertex of the correlation polytope with a signalling graph that fails to satisfy the siblings-on-cycles property is antinomic.
\end{corollary}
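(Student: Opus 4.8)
The plan is to prove the contrapositive: I assume that the vertex \emph{is} deterministically consistent and deduce that its signalling graph must then satisfy the siblings-on-cycles property. Concretely, let the vertex be $p(\vec{x}|\vec{a})=\delta_{\vec{x},f(\vec{a})}$ and suppose $p\in\mathcal{DC}$. Since $\mathcal{DC}\subseteq\mathcal{QP}$, this deterministic correlation admits a process-matrix realization, which is the entry point for applying Theorem \ref{thm:det}.

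The key step is to invoke Theorem \ref{thm:det}: because the correlation is realizable via a process matrix, it is realizable via a process function $g$ that is \emph{faithful} in the sense of Definition \ref{def:faithful}. Faithfulness is exactly what is needed here---it guarantees that the causal structure of $g$ (Definition \ref{def:causstructure}) coincides edge-for-edge with the signalling graph of the vertex, rather than merely containing it. This identification is crucial, because the siblings-on-cycles property is not inherited under passage to an arbitrary subgraph: deleting edges can destroy the sibling structure while leaving a directed cycle intact. Without the faithful realization furnished by Theorem \ref{thm:det}, a process function realizing the vertex could in principle carry hidden causal influences invisible in the signalling graph, and the argument would fail.

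The remaining steps simply chain the earlier results. Since $g$ is a process function, its causal structure describes a faithful and consistent causal model and is therefore \emph{admissible} in the sense of Refs.~\cite{BLO19,BLO21,TB22}; the siblings-on-cycles condition (Theorem \ref{thm:siboncyc}) then forces this causal structure to be a siblings-on-cycles graph. Because faithfulness identifies the causal structure of $g$ with the signalling graph of the vertex, the signalling graph is itself a siblings-on-cycles graph. Taking the contrapositive yields the claim: any vertex whose signalling graph fails the siblings-on-cycles property cannot lie in $\mathcal{DC}$ and is hence antinomic.

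I expect the only genuine subtlety---rather than a hard step---to be the careful handling of faithfulness. One must make sure that Theorem \ref{thm:det} delivers a realization whose causal structure \emph{equals} (not merely includes) the signalling graph, so that the admissibility constraint transfers verbatim. Everything else is a direct composition of the inclusion $\mathcal{DC}\subseteq\mathcal{QP}$, Theorem \ref{thm:det}, and Theorem \ref{thm:siboncyc}, with no additional computation required.
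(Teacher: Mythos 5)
Your proof is correct and is essentially the paper's own argument: both rest on Theorem \ref{thm:det} furnishing a \emph{faithful} process-function realization (so the causal structure equals the signalling graph) and on Theorem \ref{thm:siboncyc} ruling out non-siblings-on-cycles causal structures as inadmissible. The only difference is presentational---you argue by contrapositive while the paper argues by contradiction---and your emphasis on why faithfulness (equality, not mere inclusion, of graphs) is needed matches the paper's reasoning exactly.
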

\begin{proof}
	Failure to satisfy the siblings-on-cycles property implies that the signalling graph of the vertex has at least one directed cycle such that no two nodes in that cycle are siblings. To assume that this vertex is nomic, then, is to assume (by Theorem \ref{thm:det}) its faithful realization via a process function whose causal structure fails the siblings-on-cycles property. However, we know from Theorem $2$ of Ref.~\cite{TB22} (reproduced as Theorem \ref{thm:siboncyc} above) that causal structures that fail the siblings-on-cycles property are inadmissible: in particular, they cannot arise from process functions. Hence, our assumption is flawed and the vertex is antinomic.
\end{proof}

\subsection{Sufficient condition for causality of vertices in any correlational scenario}

We can prove another corollary of Theorem \ref{thm:det} and Theorem 6 of Ref.~\cite{TB22} (reproduced as Theorem \ref{thm:chordless} above) that provides a sufficient condition for the causality of a vertex in any correlational scenario.

\begin{corollary}\label{cor:causalityofvertices}
	Given any correlational scenario, any vertex that is realizable within the process-matrix framework and has a signalling graph satisfying the chordless siblings-on-cycles property (Definition \ref{def:chordless}) is causal.
\end{corollary}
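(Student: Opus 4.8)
The plan is to chain Theorem~\ref{thm:det} with Theorem~\ref{thm:chordless}, using faithfulness as the bridge that transfers the chordless siblings-on-cycles property from the observed signalling graph to the underlying causal structure. First I would note that a vertex of the correlation polytope is by definition a deterministic correlation $p(\vec{x}|\vec{a})=\delta_{\vec{x},f(\vec{a})}$. Since we assume the vertex is realizable within the process-matrix framework, the hypothesis of Theorem~\ref{thm:det} is met, and so the `only if' direction of that theorem guarantees the existence of a process function $g$ that realizes $f$ \emph{faithfully}.

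The key step is to exploit what faithfulness buys us. By Definition~\ref{def:faithful}, a faithful realization is precisely one in which the causal structure of the process function $g$ (in the sense of Definition~\ref{def:causstructure}) coincides \emph{edge-for-edge} with the signalling graph of the vertex: there is a directed edge $(k,l)$ in the causal structure of $g$ if and only if party $S_k$ deterministically signals to party $S_l$ in the vertex. Hence the two graphs are literally the same directed graph. Since by assumption the signalling graph of the vertex satisfies the chordless siblings-on-cycles property (Definition~\ref{def:chordless}), it follows immediately that the causal structure of $g$ is a chordless siblings-on-cycles graph as well.

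With this structural fact in hand, I would invoke Theorem~\ref{thm:chordless}, which asserts that any process function whose causal structure is a chordless siblings-on-cycles graph produces only causal correlations---it can never violate a causal inequality---\emph{for every} choice of local interventions. In particular, the vertex correlation $\delta_{\vec{x},f(\vec{a})}$ is itself one such correlation, arising from $g$ composed with the specific local interventions (the $f'_k$ of Theorem~\ref{thm:det}) that reproduce $f$. Therefore the vertex is causal, completing the argument.

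The conceptual heart of the proof---and the step that warrants the most care---is the role of \emph{faithfulness}. Without it, Theorem~\ref{thm:det} would still furnish \emph{some} process function realizing the vertex, but that process function's causal structure could in principle contain extra directed edges encoding merely \emph{potential} (environmental) signalling that is washed out at the level of the observed correlation. Such hidden edges could introduce chords into an otherwise induced cycle, or create additional cycles, thereby destroying the chordless siblings-on-cycles property and blocking the application of Theorem~\ref{thm:chordless}. It is precisely because Theorem~\ref{thm:det} delivers a \emph{faithful} realization---matching the causal structure to the signalling graph exactly---that the chordless condition, which is stated in terms of the signalling graph, can be legitimately imported as a hypothesis about the process function's causal structure. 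I would expect no genuine obstacle beyond making this transfer explicit, as the two cited theorems do all the substantive work.
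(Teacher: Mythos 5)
Your proposal is correct and follows essentially the same route as the paper's own proof: invoke Theorem~\ref{thm:det} to obtain a \emph{faithful} process-function realization of the vertex, use faithfulness (Definition~\ref{def:faithful}) to transfer the chordless siblings-on-cycles property from the signalling graph to the causal structure of that process function, and then conclude causality via Theorem~\ref{thm:chordless}. Your additional remarks on why faithfulness is indispensable---ruling out hidden edges from merely potential signalling---make explicit a point the paper leaves implicit, but the argument is the same.
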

\begin{proof}
	From Theorem \ref{thm:det}, any vertex realizable within the process-matrix framework admits a faithful realization with a process function. Hence, if the signalling graph of this vertex satisfies the chordless siblings-on-cycles property, then it can be realized by a process function with a causal structure that satisfies the same property. All correlations achievable by such a process function---including the vertex in question---are then causal by Theorem \ref{thm:chordless}.
\end{proof}

\subsection{Antinomy weight as a measure of nonclassicality}

Any correlation $P_{\vec{X}|\vec{A}}$ can be written as a convex mixture of deterministic correlations, \textit{i.e.},
\begin{align}
	p(\vec{x}|\vec{a})=\sum_v q_v\delta_{\vec{x},f_v(\vec{a})},
\end{align}
where $v$ denotes the vertices of the correlation polytope where no equality constraints (such as non-signalling) beyond normalization are assumed and, as usual, we require non-negativity. The deterministic correlation corresponding to a given $v$ can be represented by a function $f_v:\vec{A}\rightarrow\vec{X}$, so that $p_v(\vec{x}|\vec{a}):=\delta_{\vec{x},f_v(\vec{a})}$, and we take the probabilistic weight of this correlation in the decomposition above as $q_v\geq 0, \sum_vq_v=1$.

If $P_{\vec{X}|\vec{A}}$ fails to be nomic, then at least some of the deterministic correlations in any convex decomposition of it must also fail to be nomic, \textit{i.e.}, they must be antinomic. We can then define the following weight-based measure for the nonclassicality of $P_{\vec{X}|\vec{A}}$ via a minimization over its convex decompositions: 

\begin{align}\label{eq:antinomyweight}
	w_a(P_{\vec{X}|\vec{A}}):=\min_{\{ q_v,\delta_{\vec{x},f_v(\vec{a})}\}_v}\sum_{v_*}q_{v_*},
\end{align}
where for any given convex decomposition $\sum_v q_v\delta_{\vec{x},f_v(\vec{a})}=p(\vec{x}|\vec{a})$, $v_*$ labels the vertices of the correlation polytope that are antinomic. As such, we term $w_a$ the \textit{antinomy weight}, \textit{i.e.}, the minimal fraction of the correlation $P_{\vec{X}|\vec{A}}$ that requires quasi-process functions (which allow for logical antinomies \cite{BT21}) for its realization. 

We have that $w_a(P_{\vec{X}|\vec{A}})\in[0,1]$, where $w_a(P_{\vec{X}|\vec{A}})=0$ if and only if $P_{\vec{X}|\vec{A}}$ is nomic and $w_a(P_{\vec{X}|\vec{A}})=1$ if and only if no fraction of the correlation $P_{\vec{X}|\vec{A}}$ can be explained by nomic vertices, \textit{e.g.}, all $P_{\vec{X}|\vec{A}}$ corresponding to antinomic vertices have $w_a(P_{\vec{X}|\vec{A}})=1$.

We will illustrate this measure by computing it in some examples of antinomic correlations we investigate below.

\section{Bipartite correlations}\label{sec:4}
In this section, we consider the $(2,2,2)$ scenario. Although this scenario doesn't provide any new examples of nonclassicality witnesses beyond causal inequalities \cite{OCB12}, it is instructive to consider it from the perspective of nonclassicality rather than noncausality. This will also help prepare the ground for the tripartite case, where there \textit{is} a distinction to be made between noncausality and nonclassicality. The correlation polytope for this scenario is defined by non-negativity and normalization, \textit{i.e.},
\begin{align}
	&\forall \vec{x},\vec{a}:&p(x_1,x_2|a_1,a_2)\geq 0,\\
	&\forall a_1,a_2:& \sum_{x_1,x_2}p(x_1,x_2|a_1,a_2)=1.
\end{align}
The lack of equality constraints beyond normalization means that the vertices of this polytope are all deterministic, \textit{i.e.}, they are given by all possible functions $f:\vec{A}\rightarrow \vec{X}$.

\subsection{Classification of bipartite vertices}
Some of the vertices of the bipartite polytope can be implemented in a scenario with definite causal order---these are the \textit{causal vertices}---while the remaining vertices cannot be implemented in this way, \textit{i.e.}, they are \textit{noncausal} vertices. The causal vertices---and the resulting causal polytope---have been previously studied in Ref.~\cite{BAF15} but we will present them here (for completeness and) to be able to refer to both causal and noncausal vertices when we analyze the structure of noncausal correlations in the bipartite case further on.

We will represent the correlation $P_{\vec{X}|\vec{A}}$ as a $4\times 4$ matrix with entries $P_{\vec{X}|\vec{A}}(\vec{x},\vec{a}):=p(x_1,x_2|a_1,a_2)$, where $\vec{x},\vec{a}\in\{00,01,10,11\}$.\footnote{Note that we will often represent a vector $v=(v_1,v_2,\dots,v_N)$ as the string $v_1v_2\dots v_N$ for brevity of notation.} The vertices of the correlation polytope can be classified via their signalling graphs as follows:

\begin{enumerate}
	
	\item \textit{Bell (\textit{i.e.}, non-signalling) vertices:} These are vertices that have a  two-node signalling graph without any edges, \textit{i.e.}, there is no signalling between the parties. As such, these vertices are defined by local choices of functions $f_1:a_1\rightarrow x_1$ and $f_2:a_2\rightarrow x_2$ such that 
	\begin{align}\label{eq:bellvertices}
		p(\vec{x}|\vec{a})=\delta_{x_1,f_1(a_1)}\delta_{x_2,f_2(a_2)}.
	\end{align}
	There are $4$ possible function choices for each party $S_i$: $f_i(a_i)=0$ ($\forall a_i$), $f_i(a_i)=1$ ($\forall a_i$), $f_i(a_i)=a_i$ (identity), and $f_i(a_i)=\bar{a}_i$ (bit-flip). Hence, there are $4\times 4=16$ Bell vertices, as expected.
	
	\item \textit{One-way signalling (causal) vertices, $S_1\rightarrow S_2$:} These are defined by functions $f_1:a_1\rightarrow x_1$ and $f_2:(a_1,a_2)\rightarrow x_2$, so that
	\begin{align}\label{eq:12S}
		p(\vec{x}|\vec{a})=\delta_{x_1,f_1(a_1)}\delta_{x_2,f_2(a_1,a_2)},
	\end{align}
	where the output of $S_2$ (\textit{i.e.}, $x_2$) depends non-trivially on the input of $S_1$ (\textit{i.e.}, $a_1$).
	
	As in the case of Bell vertices, there are $4$ possible choices for $f_1$. For $f_2$, there are $12$ new possible choices that aren't of the type that define Bell vertices: $2$ of these just arise from passing the value of $a_1$ or $\bar{a}_1$ to $S_2$, so that $f_2(a_1,a_2)=a_1$ or $\bar{a}_1$; the remaining  $10$ choices arise from combining the value of $a_1$ with $a_2$ in the 10 distinct ways possible, \textit{i.e.}, $f_2(a_1,a_2)$ is given by one of: $a_1.a_2, a_1.\bar{a}_2, \bar{a}_1.a_2,\bar{a}_1.\bar{a}_2, \bar{a}_1+\bar{a}_2, \bar{a}_1+a_2, a_1+\bar{a}_2,a_1+a_2, a_1\oplus a_2, \overline{a_1\oplus a_2}$.\footnote{Note that $a_1\oplus a_2=a_1.\bar{a}_2+\bar{a}_1.a_2$.} This gives us a total of $4\times 12=48$ vertices with signalling graph $S_1\rightarrow S_2$.

	\item \textit{One-way signalling (causal) vertices, $S_2\rightarrow S_1$:} Similar to the above case (with $S_1$ and $S_2$ exchanged), these vertices are given by correlations of the form 
	
	\begin{align}\label{eq:21S}
		p(\vec{x}|\vec{a})=\delta_{x_1,f_1(a_1,a_2)}\delta_{x_2,f_2(a_2)},
	\end{align}
	where the output of $S_1$ (\textit{i.e.}, $x_1$) depends non-trivially on the input of $S_2$ (\textit{i.e.}, $a_2$).
	
	There are 48 such vertices.
	
	\item \textit{Two-way signalling (\textit{noncausal}) vertices, $S_1\leftrightarrow S_2$:} These are given by correlations of the form 
	
	\begin{align}\label{eq:2wayS}
		p(\vec{x}|\vec{a})=\delta_{x_1,f_1(a_1,a_2)}\delta_{x_2,f_2(a_1,a_2)},
	\end{align}
	
	where the outcome of each party depends non-trivially on the input of the other party.
	
	There are $12\times 12=144$ new choices of $f_1$ and $f_2$ where the output of each party depends nontrivially on the input of the other party (besides, possibly, its own input): i.e., $f_1(a_1,a_2)=a_2, \bar{a}_2, a_1.a_2, a_1.\bar{a}_2, \bar{a}_1.a_2,\bar{a}_1.\bar{a}_2, \bar{a}_1+\bar{a}_2, \bar{a}_1+a_2, a_1+\bar{a}_2,a_1+a_2, a_1\oplus a_2, \overline{a_1\oplus a_2}$. Similarly $f_2(a_1,a_2)=a_1, \bar{a}_1, a_1.a_2, \bar{a}_1.a_2, a_1.\bar{a}_2,\bar{a}_1.\bar{a}_2, \bar{a}_1+\bar{a}_2, a_1+\bar{a}_2, \bar{a}_1+a_2, a_1+a_2, a_1\oplus a_2, \overline{a_1\oplus a_2}$.
	
	Hence, there are $144$ such noncausal vertices.
\end{enumerate}

Note that the total number of vertices is given by the number of $\{0,1\}$-valued column-stochastic matrices $P_{\vec{X}|\vec{A}}$, which is given by $4^4=256$. We have accounted for all of these vertices in our characterization, \textit{i.e.}, $16+48+48+144=256$. There are $112$ causal vertices and $144$ noncausal vertices. The causal polytope is defined by the convex hull of the $112$ causal vertices and any violation of a causal inequality by a correlation must necessarily result from a non-zero probabilistic support on at least one of the $144$ noncausal vertices.\footnote{In the case of the Bell-CHSH scenario, the non-signalling condition imposes additional restrictions, and there we have the Bell polytope given by the convex hull of the $16$ Bell vertices, and the non-signalling polytope given by the convex hull of the $16$ Bell vertices together with $8$ non-signalling vertices (that are indeterministic, given by different versions of the PR-box). In the scenario without non-signalling that we are considering, however, there is nothing special about the PR-box compared to any other point within the causal polytope. The nonclassicality we are after is outside the causal polytope and in the bipartite case it happens to be \textit{everywhere} outside the causal polytope, making it equivalent to the violation of a causal inequality. This equivalence fails in the tripartite case, where we will see a more subtle distinction between nonclassical vs.~noncausal correlations: noncausality will be necessary but not sufficient for nonclassicality.}

\subsection{Bipartite causal inequalities in the $(2,2,2)$ scenario and their maximal violations}

The bipartite causal polytope has been previously characterized in Ref.~\cite{BAF15}, which identified two inequivalent classes of facet inequalities bounding the causal polytope: one corresponds to the Guess Your Neighbour's Input (GYNI) game and the other to a Lazy GYNI (LGYNI) game, and we take their canonical examples to be

\begin{align}
	&\textrm{GYNI:}\nonumber\\
	&p_{\rm gyni}\nonumber\\
	:=&\frac{1}{4}\sum_{\vec{a},\vec{x}}\delta_{x_1,a_2}\delta_{x_2,a_1}p(x_1,x_2|a_1,a_2)\leq\frac{1}{2},\label{gyni}\\
	&\nonumber\\
	&\textrm{LGYNI:}\nonumber\\
	&p_{\rm lgyni}\nonumber\\
	:=&\frac{1}{4}\sum_{\vec{a},\vec{x}}\delta_{a_1(x_1\oplus a_2),0}\delta_{a_2(x_2\oplus a_1),0}p(x_1,x_2|a_1,a_2)\leq\frac{3}{4}.\label{lgyni}
\end{align}
The winning condition for these games can be expressed via the following equations:
\begin{align}
	x_1=a_2&\textrm{ and }x_2=a_1 \textrm{ (GYNI)},\label{cangyni}\\
	a_1(x_1\oplus a_2)=0&\textrm{ and }a_2(x_2\oplus a_1)=0 \textrm{ (LGYNI)}.\label{canlgyni}
\end{align}
In Appendix \ref{app:bipartite}, we prove some general results on the relationship between the different versions of the facet inequalities characterizing the bipartite causal polytope and the noncausal vertices that violate them maximally. In the next subsection, one of these vertices will turn out to be crucial in characterizing the nonclassicality of previously discovered process-matrix violations of bipartite causal inequalities \cite{BAF15}.

\subsection{Process matrix violations of bipartite causal inequalities}

Within the process-matrix framework, a violation of both inequalities, Eqs.~\eqref{gyni} (GYNI) and \eqref{lgyni} (LGYNI), is known to be achievable by the same process matrix and choice of local CP maps \cite{BAF15}. To obtain this violation, we take $d_{I_1}=d_{O_1}=d_{I_2}=d_{O_2}=2$, \textit{i.e.}, each party receives a qubit from the environment and returns a qubit to the environment, where the environment is described by the process matrix 
\begin{align}
	W=\frac{1}{4}\left[\id^{\otimes4}+\frac{Z^{I_1}Z^{O_1}Z^{I_2}\id^{O_2}+Z^{I_1}\id^{O_1}X^{I_2}X^{O_2}}{\sqrt{2}}\right]
\end{align}
and the local CP maps are specified by the CJ matrices 
\begin{align}
	&M_{0|0}^{I_1O_1}=M_{0|0}^{I_2O_2}=0,\label{lo00}\\
	&M_{1|0}^{I_1O_1}=M_{1|0}^{I_2O_2}=2|\Phi^+\rangle\langle\Phi^+|,\label{lo10}\\
	&M_{0|1}^{I_1O_1}=M_{0|1}^{I_2O_2}=|0\rangle\langle0|\otimes |0\rangle\langle0|,\label{lo01}\\
	&M_{1|1}^{I_1O_1}=M_{1|1}^{I_2O_2}=|1\rangle\langle1|\otimes |0\rangle\langle0|.\label{lo11}
\end{align}

Here, $\{|0\rangle, |1\rangle \}$ is the eigenbasis of $Z$ and $|\Phi^+\rangle=\frac{|00\rangle+|11\rangle}{\sqrt{2}}\in \mathcal{H}^{I_1}\otimes \mathcal{H}^{O_1}\cong \mathcal{H}^{I_2}\otimes \mathcal{H}^{O_2}\cong \mathcal{H}^{I_1}\otimes \mathcal{H}^{I_1}$.

Operationally, the local CP maps dictate that each party $S_i$ ($i\in\{1,2\}$) implements the following local operations for a given input $a_i\in\{0,1\}$:
\begin{enumerate}
	\item $a_i=0$: $S_i$ implements the identity channel locally and outputs the value $1$.\footnote{Recall that $M_{1|0}^{I_iO_i}=[\mathcal{I}^{I_i}\otimes \mathcal{M}_{1|0}^{S_i}(2|\Phi^+\rangle\langle\Phi^+|)]^{\rm T}$, where $\mathcal{M}_{1|0}^{S_i}=\mathcal{I}^{I_iO_i}: \mathcal{L}(\mathcal{H}^{I_i})\rightarrow\mathcal{L}(\mathcal{H}^{O_i})$.}
	\item $a_i=1$: $S_i$ measures in the $Z$ basis and assigns classical outputs accordingly (\textit{i.e.}, $0$ for $|0\rangle\langle0|$ and $1$ for $|1\rangle\langle1|$); it then outputs a fixed quantum state, $|0\rangle\langle0|$.
\end{enumerate}

The quantum probability of success in the two games (GYNI and LGYNI) is calculated using 
\begin{equation}
	p(x_1,x_2|a_1,a_2)=\Tr[(M_{x_1|a_1}^{I_1O_1}\otimes M_{x_2|a_2}^{I_2O_2})W].
\end{equation}

We then have, quantumly,
\begin{align}
	p_{\rm gyni}&=\frac{5}{16}\left(1+\frac{1}{\sqrt{2}}\right)\approx 0.53347,\\
	p_{\rm lgyni}&=\frac{1}{4}+\frac{5}{16}\left(1+\frac{1}{\sqrt{2}}\right)\approx 0.78347.
\end{align}

The full set of resulting probabilities, summarized by the matrix $P^{\rm Q}_{\vec{X}|\vec{A}}$ below, are given by 
\begin{align}\label{eq:quantumdist}
	\begin{pmatrix}
		0 & 0 & 0 & \frac{1}{4}\left(1+\frac{1}{\sqrt{2}}\right)\\
		0 & 0 & \frac{1}{2}\left(1+\frac{1}{\sqrt{2}}\right) & \frac{1}{4}\left(1-\frac{1}{\sqrt{2}}\right)\\		
		0 & \frac{1}{2}\left(1+\frac{1}{\sqrt{2}}\right) & 0&\frac{1}{4}\left(1-\frac{1}{\sqrt{2}}\right)\\
		1&\frac{1}{2}\left(1-\frac{1}{\sqrt{2}}\right)&\frac{1}{2}\left(1-\frac{1}{\sqrt{2}}\right)& \frac{1}{4}\left(1+\frac{1}{\sqrt{2}}\right)
	\end{pmatrix}
\end{align}

\subsubsection{Convex decompositions of the quantum process correlation and the antinomy weight}

The quantum process correlation of Eq.~\eqref{eq:quantumdist} can be expressed as a convex mixture of four vertices of the bipartite correlation polytope. The first vertex is a (causal) Bell vertex:
\begin{align}
	&\begin{pmatrix}
		0&0&0&1\\
		0&0&1&0\\
		0&1&0&0\\
		1&0&0&0
	\end{pmatrix}\textrm{ with probability }\frac{1}{4}\left(1+\frac{1}{\sqrt{2}}\right),\nonumber\\
\end{align}
given by $x_1=f_1(a_1,a_2)=\bar{a}_1, x_2=f_2(a_1,a_2)=\bar{a}_2$. 

The second vertex is a noncausal vertex (with signalling graph $S_1\leftrightarrow S_2$), achieving value $3/4$ for the GYNI game and value $1$ for the LGYNI game:
\begin{align}\label{eq:lgynivertex}
	&\begin{pmatrix}
		0&0&0&0\\
		0&0&1&0\\
		0&1&0&0\\
		1&0&0&1
	\end{pmatrix}\textrm{ with probability }\frac{1}{4}\left(1+\frac{1}{\sqrt{2}}\right),
\end{align}
given by 
$x_1=f_1(a_1,a_2)=(a_1\oplus a_2)a_2+\overline{a_1\oplus a_2}=\bar{a}_1+a_2$ and $x_2=f_2(a_1,a_2)=(a_1\oplus a_2)a_1+\overline{a_1\oplus a_2}=a_1+\bar{a}_2$,
\textit{i.e.}, for input parity $1$, the parties guess each other's inputs perfectly and for input parity $0$, they assign the fixed outcomes $x_1=x_2=1$.

The third vertex is a causal signalling vertex (with signalling graph $S_2\rightarrow S_1$):
\begin{align}
	&\begin{pmatrix}
		0&0&0&0\\
		0&0&0&1\\
		0&0&0&0\\
		1&1&1&0
	\end{pmatrix}\textrm{ with probability }\frac{1}{4}\left(1-\frac{1}{\sqrt{2}}\right),
\end{align}
given by $x_1=f_1(a_1,a_2)=\overline{a_1.a_2}$ and $x_2=f_2(a_1,a_2)=1$.

The fourth vertex is a causal signalling vertex (with signalling graph $S_1\rightarrow S_2$):
\begin{align}
	&\begin{pmatrix}
		0&0&0&0\\
		0&0&0&0\\
		0&0&0&1\\
		1&1&1&0
	\end{pmatrix}\textrm{ with probability }\frac{1}{4}\left(1-\frac{1}{\sqrt{2}}\right),
\end{align}
given by $x_1=f_1(a_1,a_2)=1$ and  $x_2=f_2(a_1,a_2)=\overline{a_1.a_2}$.

Hence, there is one non-signalling vertex, two one-way signalling (causal) vertices, and one noncausal vertex in this convex decomposition of the quantum correlation in Eq.~\eqref{eq:quantumdist}. As expected, it is the non-zero support of the quantum correlation on the noncausal vertex that is responsible for the violation of the causal inequalities. Is the non-zero support on \textit{this particular} noncausal vertex necessary? In general, a given point inside the bipartite correlation polytope can admit several distinct convex decompositions over the vertices. This is true of the correlation in Eq.~\eqref{eq:quantumdist} as well, with one special feature (which can be verified by enumerating all convex decompositions of the correlation): namely, \textit{every} convex decomposition of this correlation contains the LGYNI vertex given by $x_1=\bar{a}_1+a_2$ and $x_2=a_1+\bar{a}_2$ as the noncausal vertex with the highest weight among all noncausal vertices in the decomposition. Hence, a non-zero support on this particular noncausal vertex is, indeed, necessary. The (non-zero) probabilistic weight associated with this vertex, however, depends on the particular convex decomposition.

How nonclassical is the correlation of Eq.~\eqref{eq:quantumdist}? To answer this question, we compute its antinomy weight, defined in Eq.~\eqref{eq:antinomyweight}. Before we compute it, however, it will be useful to express this correlation in a more general form, \textit{i.e.},
\begin{align}\label{eq:qform}
	\begin{pmatrix}
		0 & 0 & 0 & \frac{q}{2}\\
		0 & 0 & q & \frac{1-q}{2}\\		
		0 & q & 0&\frac{1-q}{2}\\
		1&1-q&1-q& \frac{q}{2}
	\end{pmatrix},
\end{align} 
where $q=\frac{1}{2}\left(1+\frac{1}{\sqrt{2}}\right)$. By numerically minimizing the overall weight on the antinomic\footnote{Same as `noncausal' in the bipartite case.} vertices over all convex decompositions of the correlation, we obtain 
\begin{align}
	r_a(P^{\rm Q}_{\vec{X}|\vec{A}})&=\min_{ \{q_v,\delta_{\vec{x},f_v(\vec{a})}\}_v}\sum_{v_*}q_{v_*}\nonumber\\
	&\approx0.134,
\end{align}
where the unique convex decomposition that minimizes the weight is given by 
\begin{align}\label{allcols}
	&\frac{q}{2}\begin{pmatrix}
		0&0&0&1\\
		0&0&1&0\\
		0&1&0&0\\
		1&0&0&0
	\end{pmatrix}+
	\frac{1-q}{2}\begin{pmatrix}
		0&0&0&0\\
		0&0&1&1\\
		0&1&0&0\\
		1&0&0&0
	\end{pmatrix}\nonumber\\+
	&\frac{1-q}{2}\begin{pmatrix}
		0&0&0&0\\
		0&0&1&0\\
		0&1&0&1\\
		1&0&0&0
	\end{pmatrix}+
	(1-q)
	\begin{pmatrix}
		0&0&0&0\\
		0&0&1&0\\
		0&0&0&0\\
		1&1&0&1
	\end{pmatrix}\nonumber\\+
	&(1-q)
	\begin{pmatrix}
		0&0&0&0\\
		0&0&0&0\\
		0&1&0&0\\
		1&0&1&1
	\end{pmatrix}+
	\frac{5q-4}{2}
	\begin{pmatrix}
		0&0&0&0\\
		0&0&1&0\\
		0&1&0&0\\
		1&0&0&1
	\end{pmatrix},
\end{align}
and where exactly one antinomic vertex (winning the canonical LGYNI game perfectly) appears with probability 
$\frac{5q-4}{2}\approx 0.134$ for $q=\frac{1}{2}(1+\frac{1}{\sqrt{2}})\approx 0.8536$. We know that this vertex is antinomic because it does not admit a realization with a process function: all bipartite classical processes are causal and they cannot violate causal inequalities \cite{OCB12}. Hence, this correlation requires \textit{at least} a $13.4\%$ weight on a antinomic vertex.\footnote{Although every antinomic vertex is noncausal, and conversely, in the bipartite case, this is not true generally, \textit{e.g.}, in the tripartite case. Hence, we refer to the vertex as an `antinomic' vertex (or a `nonclassical' vertex) when concerned with nonclassicality. Robustness of antinomy is \textit{not} intended as a measure of noncausality, but rather of nonclassicality. The difference becomes explicit in the tripartite case.}

The above analysis shows that the quantum distribution violates  the canonical GYNI and LGYNI inequalities (Eqs.~\eqref{cangyni}, \eqref{canlgyni}) because it is necessarily supported on a noncausal vertex that violates canonical GYNI up to $\frac{3}{4}$ and canonical LGYNI maximally (up to $1$). The optimal decomposition (\textit{i.e.}, the one minimizing the probabilistic weight on antinomic vertices) shows that the quantum process correlation lies within the polytope obtained by extending the causal polytope, \textit{i.e.}, by taking the convex hull of the causal/classical vertices and this additional noncausal/antinomic vertex.

\subsubsection{Causal inequality violations from a family of process matrices}

The values of $q\in[0,1]$ in Eq.~\eqref{eq:qform} that violate the  causal inequalities in Eqs.~\eqref{gyni} and \eqref{lgyni} are given by the constraints
\begin{align}
	p_{\rm gyni}&=\frac{1}{4}\left(q+q+\frac{q}{2}\right)=\frac{5}{8}q>\frac{1}{2},\\
	p_{\rm lgyni}&=\frac{1}{4}\left(1+q+q+\frac{q}{2}\right)=\frac{1}{4}+\frac{5}{8}q>\frac{3}{4},
\end{align}
which means $q\in (0.8,1]$ is necessary and sufficient for the violation of both inequalities. For $q\leq0.8$, any correlation of the form in Eq.~\eqref{eq:qform} is causal, \textit{i.e.}, it \textit{can} be expressed as a convex mixture over causal vertices.

The general form of the correlation in Eq.~\eqref{eq:qform} raises the question of whether it is realizable via process matrices for values of $q$ other than $\frac{1}{2}\left(1+\frac{1}{\sqrt{2}}\right)$. It is easy to show that the process matrix given by 
\begin{align}\label{eq:Wq}
	W(q):=&\frac{1}{4}\Bigg[\id^{\otimes4}+(2q-1)\nonumber\\
	&\left(Z^{I_1}Z^{O_1}Z^{I_2}\id^{O_2}+Z^{I_1}\id^{O_1}X^{I_2}X^{O_2}\right)\Bigg],
\end{align}
for any $q\in \left[\frac{1}{2}\left(1-\frac{1}{\sqrt{2}}\right),\frac{1}{2}\left(1+\frac{1}{\sqrt{2}}\right)\right]$ yields the correlation in Eq.~\eqref{eq:qform} for the local operations in Eqs.~\eqref{lo00}, \eqref{lo01}, \eqref{lo10}, and \eqref{lo11}. The case $q=\frac{1}{2}\left(1+\frac{1}{\sqrt{2}}\right)\approx0.8536$ studied in Ref.~\cite{BAF15} belongs to the family of process matrices $W(q)$ parameterized by $q$ in Eq.~\eqref{eq:Wq}. This correlation violates the GYNI and LGYNI inequalities for $q\in\left(0.8,\frac{1}{2}\left(1+\frac{1}{\sqrt{2}}\right)\right]$.

Note that for $q=1$ the correlation of Eq.~\eqref{eq:qform} is not achievable by process matrices: In this case, it corresponds to the noncausal vertex of Eq.~\eqref{eq:lgynivertex}, which, by Corollary \ref{cor:siboncyc}, is antinomic and therefore unachievable via any process matrix as a consequence of Theorem \ref{thm:det}. It remains an open question whether a process matrix can realize the correlation of Eq.~\eqref{eq:qform} for a value of $q\in\left(\frac{1}{2}\left(1+\frac{1}{\sqrt{2}}\right),1\right)$. More generally, one might ask for upper bounds on violations of causal inequalities---such as those of Eqs.~\eqref{gyni} and \eqref{lgyni}---achievable within the process-matrix framework \cite{Brukner15}. Some partial results along these lines exist: for example, it has been shown in Ref.~\cite{BAB19} that for the GYNI inequality of Eq.~\eqref{gyni}, any finite-dimensional process-matrix realization with $d:=d_{I_1}d_{O_1}d_{I_2}d_{O_2}$ cannot exceed the value $1-\frac{1}{d+1}$. In the case $d=2^4=16$ that is of interest here, we have that no process matrix can exceed the value $\frac{16}{17}\approx 0.941$. More recently, Liu and Chiribella \cite{LC24} developed a general technique to bound process-matrix correlations, obtaining in particular an upper bound of $0.7592$ on $p_{\rm gyni}$ (not known to be saturated, but already an improvement over Ref.~\cite{BAB19}) and an upper bound of $0.8194$ on $p_{\rm lgyni}$ (known to be saturated by process matrices \cite{BAF15}). Finally, no process matrix correlation can achieve a  maximal violation of any facet causal inequality in the $(2,2,2)$ scenario (\textit{cf.}~Appendix \ref{app:bipartite}) because these maximal violations require antinomic vertices unachievable with process matrices (Corollary \ref{cor:siboncyc}).

\section{Tripartite correlations}\label{sec:5}
Having illustrated the key ideas underlying antinomicity in the bipartite case, we now consider the tripartite scenario of type $(3,2,2)$, \textit{i.e.}, three parties with binary inputs and outputs. Unlike the bipartite case, this scenario provides \textit{new} examples of nonclassicality witnesses beyond known causal inequalities. The correlation polytope for this scenario is also defined by non-negativity and normalization, \textit{i.e.},
\begin{align}
	&\forall \vec{x},\vec{a}:&p(x_1,x_2,x_3|a_1,a_2,a_3)\geq 0,\\
	&\forall \vec{a}:& \sum_{x_1,x_2,x_3}p(x_1,x_2,x_3|a_1,a_2,a_3)=1.
\end{align}
Like the bipartite case, the vertices of this polytope are also deterministic, \textit{i.e.}, they are given by all possible functions $f:\vec{A}\rightarrow \vec{X}$.

\subsection{Signalling classes}
Every vertex in the tripartite case has an associated signalling graph. While the bipartite case is simple enough (with $4$ signalling graphs, including the non-signalling case), already in the tripartite case, many more signalling graphs are possible. Although we defined signalling graphs as labelled directed graphs, here it will be useful to coarse-grain this description further: that is, at the most coarse-grained level, we will classify vertices according to the  \textit{unlabelled} directed graphs underlying their signalling graphs. The unlabelled directed graph underlying a signalling graph will be called the \textit{signalling class} of the signalling graph and of any vertex associated with the signalling graph. 
\begin{figure*}
	\centering
	\includegraphics[scale=0.3]{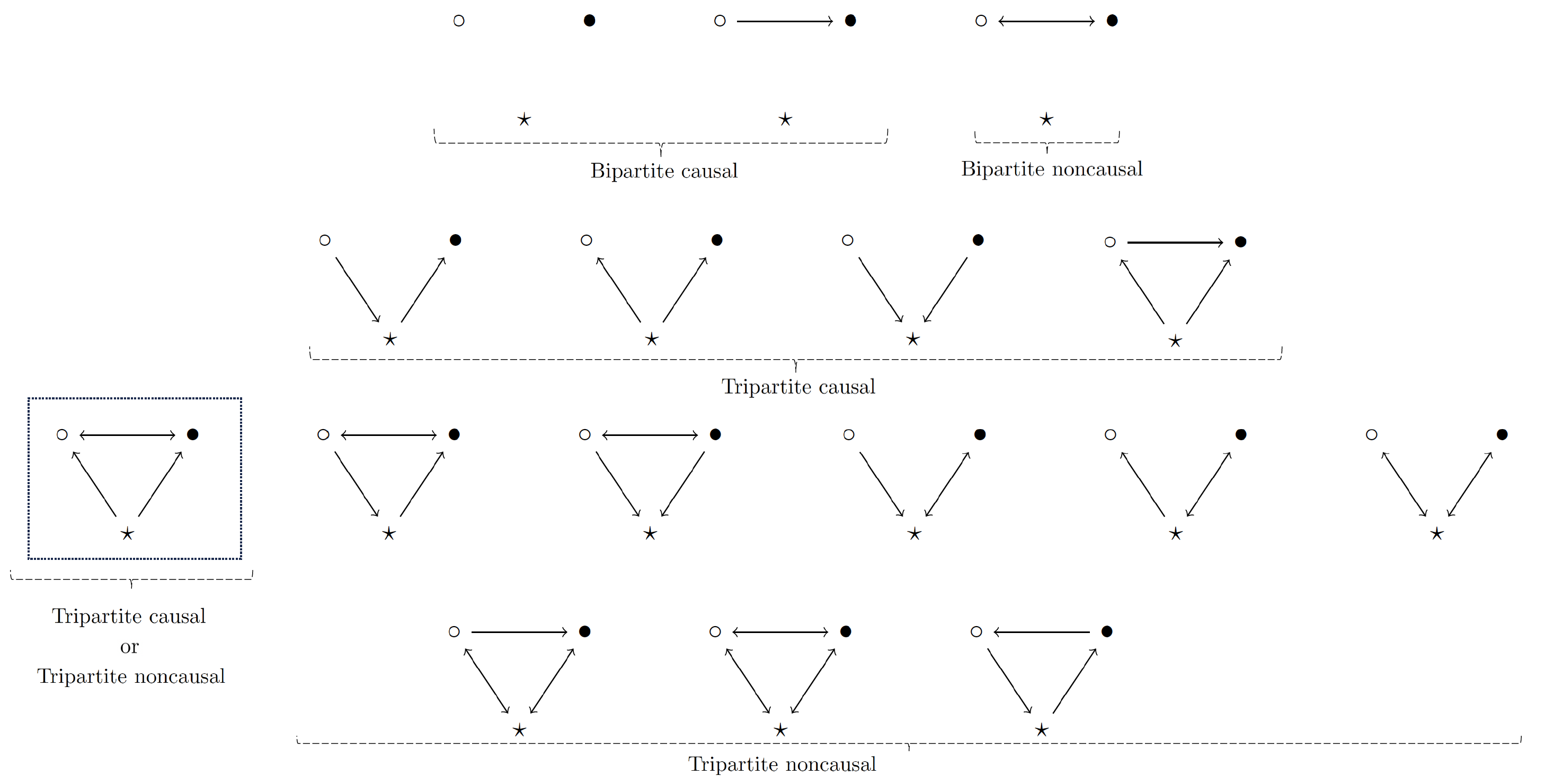}
	\caption{Tripartite signalling classes. The boxed signalling class is ambiguous: some vertices in this class are causal (\textit{e.g.}, a classical switch), while the others are noncausal.}\label{fig:DG3}
\end{figure*}
These signalling classes can be classified according to the following categories (\textit{cf.}~Fig.~\ref{fig:DG3}):
\begin{enumerate}
	\item Non-signalling: The directed graph has three nodes and no edges, \textit{i.e.}, no party signals to any other. All the vertices in this class are causal.
	\item Bipartite causal signalling: One of the three nodes in the directed graph is a by-stander, \textit{i.e.}, with no incoming or outgoing edges, while the other two nodes communicate causally, \textit{i.e.}, they have a unidirectional edge between them.
	All the vertices in this class are causal.
	\item Bipartite noncausal signalling: One of the three nodes in the directed graph is a by-stander while the other two nodes communicate noncausally, \textit{i.e.}, they have edges in both directions between them. All the vertices in this class are noncausal.
	\item Tripartite fixed-order causal signalling: The directed graph is connected (\textit{i.e.}, no by-stander nodes) and \textit{acyclic}, \textit{i.e.}, it is a connected directed acyclic graph (DAG). All the vertices in this class are fixed-order causal.
	\item Tripartite (dynamical-order) causal or Tripartite noncausal signalling: The directed graph has a root node that is connected to two nodes with a cycle between them. This ambiguous case is represented in the boxed signalling class of Fig.~\ref{fig:DG3}: vertices in this class may be (dynamical-order) causal or noncausal. 
	
	Note that the noncausal vertices in this class cannot be achieved by the process-matrix framework: this is ruled out by Corollary \ref{cor:causalityofvertices}, \textit{i.e.}, the conjunction of our Theorem \ref{thm:det} with Theorem 6 of Ref.~\cite{TB22} (reproduced as Theorem \ref{thm:chordless} above). Hence, vertices in this class achievable by the process-matrix framework are all (dynamical-order) causal. On the other hand, the remaining vertices in this class are (by Theorem \ref{thm:det}) antinomic and thus noncausal.
	
	\item Tripartite noncausal signalling: The directed graph is connected, with at least one cycle, \textit{i.e.}, a connected directed cyclic graph (DCG), and is not of the ambiguous type above. All the vertices in this class are noncausal.
\end{enumerate}
Note that this classification of signalling classes holds in the tripartite case beyond the $(3,2,2)$ scenario, \textit{i.e.}, it holds for all $(3,M,K)$ scenarios for $M,K\geq 2$. 

In Appendix \ref{app:322vertices}, we explicitly characterize all the vertices in the $(3,2,2)$ scenario. We use this characterization in the following subsection to identify the signalling class of nomic vertices that are noncausal in this scenario.

\subsection{Nomic vertices in the $(3,2,2)$ scenario}

Given the exhaustive classification of all the tripartite vertices in the $(3,2,2)$ scenario in Appendix \ref{app:322vertices}, a natural question to ask is: Which of these vertices are nomic, \textit{i.e.}, admit process function realizations? All causal vertices are nomic. However, unlike the bipartite case, there exists a noncausal vertex that is also nomic: namely, the deterministic correlation realized by the AF/BW process that is known to violate a causal inequality maximally \cite{BW16, KB22}.

\subsubsection{The AF/BW process}
The AF/BW process, $P^{\rm afbw}_{\vec{I}|\vec{O}}$, is given by the following matrix of probabilities:
\begin{align}\label{eq:afbwproc}
	P^{\rm afbw}_{\vec{I}|\vec{O}}=
	\begin{pmatrix}
		1    &  0  &  0  &  0  &  0  &  0  &  0  &1\\
		0	 &  0  &  1  &  1  &  0  &  0  &  0  &0\\
		0	 &  0  &  0  &  0 &   1  &  0  &  1  &0\\
		0  	 &  0  &  0  &  0  &  0  &  0  &  0  &0\\
		0    &  1  &  0  &  0  &  0  &  1  &  0  &0\\
		0    &  0  &  0  &  0  &  0  &  0  &  0  &0\\
		0    &  0  &  0  &  0  &  0  &  0  &  0  &0\\
		0	 &  0  &  0  &  0  &  0  &  0  &  0  &0
	\end{pmatrix}.
\end{align}
In a more compact form, this process function is given by 
\begin{align}\label{eq:afbwfn}
	i_1=\bar{o}_2o_3, i_2=\bar{o}_3o_1,i_3=\bar{o}_1o_2.
\end{align}
This process can win the following tripartite game with certainty: on receiving the inputs $a_1,a_2,a_3$ uniformly randomly, the parties must produce outputs $x_1,x_2,x_3$ that satisfy the constraints
\begin{align}
	&x_1=a_3, x_2=a_1, x_3=a_2,\nonumber\\
	&\textrm{ when } {\rm maj}(a_1,a_2,a_3)=0,\nonumber\\
	&x_1=\bar{a}_2, x_2=\bar{a}_3, x_3=\bar{a}_1,\nonumber\\
	&\textrm{ when } {\rm maj}(a_1,a_2,a_3)=1,
\end{align}
where ${\rm maj}(a_1,a_2,a_3)$ assigns value $0$ if the majority vote among the three inputs is $0$, and $1$ otherwise. No causal strategy can win this game perfectly \cite{BW16}. Hence, the AF/BW process can maximally violate the following causal inequality:
\begin{align}\label{eq:afbwcausineq}
	p_{\rm afbw}\nonumber\\
	:=\frac{1}{2}\sum_{\vec{x},\vec{a}}&p(x_1,x_2,x_3|a_1,a_2,a_3)\nonumber\\
	&\Big(\delta_{x_1,a_3}\delta_{x_2,a_1}\delta_{x_3,a_2}\delta_{{\rm maj}(a_1,a_2,a_3),0}\nonumber\\
	&+\delta_{x_1,\bar{a}_2}\delta_{x_2,\bar{a}_3}\delta_{x_3,\bar{a}_1}\delta_{{\rm maj}(a_1,a_2,a_3),1}\Big)\\
	\leq\frac{3}{4}.
\end{align}

The maximal violation of this inequality occurs for the following tripartite noncausal vertex:
\begin{align}\label{eq:afbwcorr}
	P_{\vec{X}|\vec{A}}=\begin{pmatrix}
		1    &  0  &  0  &  0  &  0  &  0  &  0  &1\\
		0	 &  0  &  1  &  1  &  0  &  0  &  0  &0\\
		0	 &  0  &  0  &  0 &   1  &  0  &  1  &0\\
		0  	 &  0  &  0  &  0  &  0  &  0  &  0  &0\\
		0    &  1  &  0  &  0  &  0  &  1  &  0  &0\\
		0    &  0  &  0  &  0  &  0  &  0  &  0  &0\\
		0    &  0  &  0  &  0  &  0  &  0  &  0  &0\\
		0	 &  0  &  0  &  0  &  0  &  0  &  0  &0
	\end{pmatrix}
\end{align}
The correlations between $\vec{X}$ and $\vec{A}$ specified by this vertex are exactly the same as those specified by $P^{\rm afbw}_{\vec{I}|\vec{O}}$ between $\vec{I}$ and $\vec{O}$. Hence, this vertex can be achieved via the AF/BW process if the parties make the following local interventions:
\begin{align}\label{eq:afbwlo}
	p(x_k,o_k|a_k,i_k)=\delta_{x_k,i_k}\delta_{o_k,a_k},
\end{align}
\textit{i.e.}, pass $i_k$ to $x_k$ and $a_k$ to $o_k$.

In the tripartite case where the classical process has binary inputs and outputs, Baumeler and Wolf \cite{BW16} show that the AF/BW process is essentially the only process function that can violate causal inequalities. It belongs to a family of $64$ process functions that can be seen to arise from $P^{\rm afbw}_{\vec{I}|\vec{O}}$ (\textit{cf.}~Eq.~\eqref{eq:afbwproc}) as follows: considering local relabellings (of inputs and outputs) $L_k, L'_k$ ($k\in\{1,2,3\}$), where $L_k,L'_k\in\{\id,\mathbb{F}\}$ and $\id=\begin{pmatrix}
	1&0\\
	0&1
\end{pmatrix}$, $\mathbb{F}=\begin{pmatrix}
	0&1\\
	1&0
\end{pmatrix}$, the $64$ process functions are given by 
\begin{align}
	\forall k, L_k, L'_k: (L_1\otimes L_2\otimes L_3) P^{\rm afbw}_{\vec{I}|\vec{O}}(L'_1\otimes L'_2\otimes L'_3).
\end{align}

Each version of the AF/BW process violates the corresponding causal inequality (a variant of Eq.~\eqref{eq:afbwcausineq}) maximally by perfectly realizing the corresponding noncausal vertex (a variant of Eq.\eqref{eq:afbwcorr}). Hence, $64$ of the tripartite noncausal vertices admit deterministically consistent realization via some version of the AF/BW process under local operations of Eq.~\eqref{eq:afbwlo}.

\subsubsection{Sufficient condition for antinomicity of vertices}
In our enumeration of the vertices in the $(3,2,2)$ scenario in Appendix \ref{app:322vertices}, we note that Corollary \ref{cor:siboncyc}---which follows from our Theorem \ref{thm:det} and Theorem 2 of Ref.~\cite{TB22} (reproduced as Theorem \ref{thm:siboncyc} above)---rules out all vertices whose signalling graphs fail the siblings-on-cycles property as nomic vertices. This leaves us with two tripartite signalling classes, namely, $\{\{\circ\leftarrow\star\}, \{\star\rightarrow\bullet\}, \{\circ\leftrightarrow\bullet\}\}$ and $\{\{\circ\leftrightarrow\star\}, \{\star\leftrightarrow\bullet\}, \{\circ\leftrightarrow\bullet\}\}$ as potential candidates for nomic vertices that are also noncausal. Furthermore, as we note in our enumeration in Appendix \ref{app:322vertices}, Corollary \ref{cor:causalityofvertices}---which follows from our Theorem \ref{thm:det} and Theorem 6 of Ref.~\cite{TB22} (reproduced as Theorem \ref{thm:chordless} above)---implies that  process matrices cannot realize noncausal vertices from the signalling class $\{\{\circ\leftarrow\star\}, \{\star\rightarrow\bullet\}, \{\circ\leftrightarrow\bullet\}\}$ (\textit{cf.}~Fig.~\ref{fig:DG3}).

As such, in the $(3,2,2)$ correlational scenario, we have the following sufficient condition for the antinomicity of vertices of the correlation polytope:
\begin{corollary}[Sufficient condition for antinomicity of vertices in the $(3,2,2)$ correlational scenario.]\label{cor:tripartiteantinomy}
	All noncausal vertices of the correlation polytope in the $(3,2,2)$ scenario that fall into signalling classes other than $\{\{\circ\leftrightarrow\star\}, \{\star\leftrightarrow\bullet\},\{\circ\leftrightarrow\bullet\}\}$ are antinomic, \textit{i.e.}, they cannot be realized by process functions.
\end{corollary}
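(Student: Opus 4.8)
The plan is to prove the corollary by case analysis over the noncausal signalling classes enumerated above, dispatching each with one of the two sufficient conditions already established: Corollary \ref{cor:siboncyc} (antinomicity from failure of the siblings-on-cycles property) and Corollary \ref{cor:causalityofvertices} (causality of any process-matrix-realizable vertex whose signalling graph is a chordless siblings-on-cycles graph). The noncausal classes other than $\{\{\circ\leftrightarrow\star\}, \{\star\leftrightarrow\bullet\},\{\circ\leftrightarrow\bullet\}\}$ are precisely the bipartite noncausal class $\{\{\star\},\{\circ\leftrightarrow\bullet\}\}$, the noncausal vertices of the ambiguous class $\{\{\circ\leftarrow\star\}, \{\star\rightarrow\bullet\}, \{\circ\leftrightarrow\bullet\}\}$, and the tripartite noncausal classes labelled $1$--$6$ and $8$ above. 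I would show that the first and last of these groups all fail the siblings-on-cycles property, so that Corollary \ref{cor:siboncyc} applies directly, while the ambiguous class satisfies that property and instead requires Corollary \ref{cor:causalityofvertices}.

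For the siblings-on-cycles failures, the single step needed is, for each class, to exhibit one directed cycle none of whose vertices share a common parent. For the bipartite noncausal class the only cycle is the $2$-cycle $\{\circ,\bullet\}$, which has no candidate common parent because the bystander $\star$ emits no edges. For each of the classes $1$--$6$ one of the present $2$-cycles has its two vertices with disjoint in-neighbourhoods, so no vertex points to both and the cycle carries no siblings; for class $8$, $\{\{\circ\rightarrow\star\},\{\star\rightarrow\bullet\},\{\bullet\rightarrow\circ\}\}$, the sole cycle is the $3$-cycle, and since each vertex has exactly one incoming edge (from its predecessor along the cycle) no two cycle-vertices can share a parent. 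In every such case the signalling graph is not a siblings-on-cycles graph, and Corollary \ref{cor:siboncyc} gives antinomicity of every vertex in the class. Most of these verifications are in fact already recorded in the enumeration above.

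The genuinely distinct step, which I expect to be the main obstacle, is the ambiguous class $\{\{\circ\leftarrow\star\}, \{\star\rightarrow\bullet\}, \{\circ\leftrightarrow\bullet\}\}$: here Corollary \ref{cor:siboncyc} is powerless, because the root $\star$ is a common parent of both vertices of the $2$-cycle $\{\circ,\bullet\}$ (so the graph is siblings-on-cycles) and that cycle is moreover induced (so the graph is chordless siblings-on-cycles). For this class I would argue by contraposition through Corollary \ref{cor:causalityofvertices}: any vertex of this class realizable within the process-matrix framework would, by virtue of its chordless siblings-on-cycles graph, be causal; hence a noncausal vertex of this class cannot be realized by any process matrix, and by Theorem \ref{thm:det} it therefore admits no process-function realization at all, i.e.\ it is antinomic. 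These two cases together exhaust all noncausal vertices outside $\{\{\circ\leftrightarrow\star\}, \{\star\leftrightarrow\bullet\},\{\circ\leftrightarrow\bullet\}\}$, proving the corollary. The exclusion of that remaining class is necessary rather than incidental: it is the unique noncausal class that is both siblings-on-cycles (so Corollary \ref{cor:siboncyc} cannot apply) and genuinely realizable by process functions, as witnessed by the AF/BW process.
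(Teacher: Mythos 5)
Your proposal is correct and takes essentially the same route as the paper: the paper likewise dispatches every noncausal signalling class failing the siblings-on-cycles property via Corollary \ref{cor:siboncyc} (with the class-by-class checks recorded in its vertex enumeration), and handles the remaining ambiguous class $\{\{\circ\leftarrow\star\}, \{\star\rightarrow\bullet\}, \{\circ\leftrightarrow\bullet\}\}$ via Corollary \ref{cor:causalityofvertices}, concluding that its noncausal vertices admit no process-matrix (hence no process-function) realization, exactly as you do. Your observation that the excluded class $\{\{\circ\leftrightarrow\star\}, \{\star\leftrightarrow\bullet\},\{\circ\leftrightarrow\bullet\}\}$ is the unique noncausal class that is both siblings-on-cycles and genuinely process-function realizable (via AF/BW) also matches the paper's discussion.
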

This means that the only nomic vertices that are also noncausal must belong to the signalling class $\{\{\circ\leftrightarrow\star\}, \{\star\leftrightarrow\bullet\},\{\circ\leftrightarrow\bullet\}\}$ (\textit{i.e.}, exactly one of the signalling classes in Fig.~\ref{fig:DG3}). However, the converse does not hold: there exist antinomic vertices with this signalling graph, \textit{e.g.}, the vertex given by $x_1=a_2\oplus a_3$, $x_2=a_3\oplus a_1$, and $x_3=a_1\oplus a_2$ is not realizable by a process function since such a realization requires that $g:=(g_k:\vec{A}_{\backslash k}\rightarrow\vec{I})$, given by $g_1(a_2,a_3):=a_2\oplus a_3$, $g_2(a_3,a_1):=a_3\oplus a_1$, $g_3(a_1,a_2):=a_1\oplus a_2$ (where $\vec{I}:=\{0,1\}^3$), is a process function, which it isn't: a simple identity intervention $o_k=h_k(i_k)=i_k$ by each party $S_k$ results in the set of constraints $i_1=i_2\oplus i_3, i_2=i_3\oplus i_1, i_3=i_1\oplus i_2$, which do not admit a unique solution, \textit{i.e.}, there is no unique fixed point for $g\circ h$, where $h:=(h_1,h_2,h_3)$.\footnote{All $\vec{i}\in\{(0,0,0), (0,1,1), (1,0,1), (1,1,0)\}$ satisfy this system of equations and constitute (non-unique) fixed points.}

\subsection{Nonclassicality beyond causal inequalities}
The first thing to note about bounds on correlations that arise from assuming nomicity is that these bounds also apply to (but may not be saturated by) causal correlations. Hence, any witness of antinomicity we construct will also be a witness of noncausality but not conversely. 

We now construct a simple example of a nonclassicality witness based on a tripartite game that we refer to as the `Guess Your Neighbour's Input or NOT' (GYNIN) game. It is defined as follows: on receiving the inputs $\{a_k\}_{k=1}^3$ uniformly randomly, the parties must 
output $\{x_k\}_{k=1}^3$ such that 
\begin{align}\label{eq:gyninwin}
	&(x_1,x_2,x_3)=(a_3,a_1,a_2),\nonumber\\
	\textrm {OR }&(x_1,x_2,x_3)=(\bar{a}_3, \bar{a}_1, \bar{a}_2).
\end{align}

The success probability in this game is given by 
\begin{align}
	&p_{\rm gynin}\nonumber\\
	:=&\frac{1}{8}\sum_{\vec{x},\vec{a}}p(\vec{x}|\vec{a})\Big(\delta_{x_1,a_3}\delta_{x_2,a_1}\delta_{x_3,a_2}+\delta_{x_1,\bar{a}_3}\delta_{x_2,\bar{a}_1}\delta_{x_3,\bar{a}_2}\Big).
\end{align}
The following inequality serves as our witness of antinomicity:
\begin{align}\label{eq:nonclasswitness}
	p_{\rm gynin}\leq\frac{5}{8}.
\end{align}
\begin{proof}
	Since any correlation $p(\vec{x}|\vec{a})$ achievable by classical processes within the deterministic-extrema polytope can be decomposed as a convex mixture of deterministic correlations achievable via its vertices (\textit{i.e.}, process functions), by convexity the upper bound on $p_{\rm gynin}$ follows from maximizing it over process functions. In particular, $p_{\rm gynin}=1$ is impossible to achieve via process functions: achieving this value with a process function requires that the associated $p(\vec{x}|\vec{a})$ is deterministic and is given by either $\delta_{x_1,a_3}\delta_{x_2,a_1}\delta_{x_3,a_2}$ or $\delta_{x_1,\bar{a}_3}\delta_{x_2,\bar{a}_1}\delta_{x_3,\bar{a}_2}$, neither of which is achievable via process functions since their signalling graphs fail to satisfy the siblings-on-cycles property (\textit{cf.}~Corollary \ref{cor:siboncyc}). We therefore have 
	\begin{align}
		p_{\rm gynin}<1
	\end{align}
	for nomic correlations. The precise bound on this quantity follows from two facts: firstly, from Theorem \ref{thm:det}, that any deterministic correlation given by the function $f:\vec{A}\rightarrow\vec{X}$ that is realizable by a process function admits a faithful realization by the process function $g:\vec{A}\rightarrow\vec{I}$, where   $I_k=\{g_k(\vec{a}_{\backslash k})\}_{\vec{a}_{\backslash k}}$ and $x_k=f_k(\vec{a})=f'_k(a_k,g(\vec{a}_{\backslash k}))$; secondly, that in the case of a deterministic correlation in the $(3,2,2)$ scenario that aims to maximize $p_{\rm gynin}$, we have that we can, without loss of generality, assume $I_k\in\{0,1\}$ (for all $k\in\{1,2,3\}$) for the associated process function.\footnote{This follows from the definition of the GYNIN game (Eq.~\eqref{eq:gyninwin}), which requires that each outcome $x_k$ is a function only of remote settings, \textit{i.e.}, $x_1=f_1(a_3), x_2=f_2(a_1),x_3=f_3(a_2)$, where the $f_k$'s are either all identity functions or all bit-flips.} Hence, all we need to do is to maximize the value of $p_{\rm gynin}$ over deterministic correlations realizable by all tripartite process functions where the parties have binary inputs and outputs. Noting that the $x_k$ should be independent of $a_k$ in any deterministic correlation that hopes to maximize $p_{\rm gynin}$, we can assume that the local interventions set $x_k=i_k$ $(=g_k(\vec{a}_{\backslash k}))$ in the process function realization of the correlation. Using the Baumeler-Wolf characterization \cite{BW16} of all the $64$ variants of the AF/BW process function and maximizing $p_{\rm gynin}$ over them, we obtain $p_{\rm gynin}\leq\frac{5}{8}$, where the upper bound is saturated by the canonical AF/BW process of Eq.~\eqref{eq:afbwfn} (where $\vec{o}=\vec{a}$ and $\vec{i}=\vec{x}$).
\end{proof}
Hence, the inequality $p_{\rm gynin}\leq\frac{5}{8}$ is saturated by the AF/BW correlation of Eq.~\eqref{eq:afbwcorr}. To see this, note that the AF/BW correlation is given by $x_1=\bar{a}_2a_3,x_2=\bar{a}_3a_1,x_3=\bar{a}_1a_2$. This means it satisfies the winning condition of Eq.~\eqref{eq:gyninwin} for the cases where $(\bar{a}_1,\bar{a}_2,\bar{a}_3)\in\{(1,1,1), (0,0,0),(1,1,0),(1,0,1), (0,1,1)\}$ and fails to satisfy the winning condition for $(\bar{a}_1,\bar{a}_2,\bar{a}_3)\in\{(0,0,1), (0,1,0), (1,0,0)\}$. 

The causal bound on the success probability is given by 
\begin{align}
	p_{\rm gynin}\leq\frac{1}{2},
\end{align}
so there is a gap between the causal and the classical bound in this case. The causal bound follows from the fact that in any causal strategy, one of the parties (say $S_3$) is in the global past and so it must guess the input of its neighbour (say, $S_2$) uniformly randomly (\textit{i.e.}, $p(x_3)=\frac{1}{2}\delta_{x_3,a_2}+\frac{1}{2}\delta_{x_3,\bar{a}_2}$); assuming $S_3$ signals to $S_1$, we have that $S_1$ can guess the input of $S_3$ (or its negation) perfectly, \textit{i.e.}, $p(x_1)=\delta_{x_1,a_3}$ (or $p(x_1)=\delta_{x_1,\bar{a}_3}$); assuming $S_1$ signals to $S_2$, $S_2$ can guess the input of $S_1$ (or its negation) perfectly, \textit{i.e.}, $p(x_2)=\delta_{x_2,a_1}$ (or $p(x_2)=\delta_{x_2,\bar{a}_1}$). Hence, for each $\vec{a}$,  the three parties will satisfy the winning condition of the game (\textit{i.e.}, Eq.~\eqref{eq:gyninwin}) with probability at most $\frac{1}{2}$; this gives us the overall causal bound of $p_{\rm gynin}\leq\frac{1}{2}$.

Furthermore, the inequality of Eq.~\eqref{eq:nonclasswitness} is maximally violated (up to probability $1$) by a correlation achievable by the BFW process \cite{BFW14}, \textit{i.e.}, by
\begin{align}
	P_{\vec{X}|\vec{A}}=\begin{pmatrix}
		1/2&  0  &  0  &  0  &  0  &  0  &  0  &1/2\\
		0	 &  0  &1/2&  0  &  0  &1/2&  0  &0\\
		0	 &  0  &  0  &1/2&1/2&  0  &  0  &0\\
		0  	 &1/2&  0  &  0  &  0  &  0  &1/2&0\\
		0    &1/2&  0  &  0  &  0  &  0  &1/2&0\\
		0    &  0  &  0  &1/2&1/2&  0  &  0  &0\\
		0    &  0  &1/2&  0  &  0  &1/2&  0  &0\\
		1/2&  0  &  0  &  0  &  0  &  0  &  0  &1/2
	\end{pmatrix}.
\end{align}
The BFW process is given by 
\begin{align}
	P^{\rm bfw}_{\vec{I}|\vec{O}}=\begin{pmatrix}
		1/2&  0  &  0  &  0  &  0  &  0  &  0  &1/2\\
		0	 &  0  &1/2&  0  &  0  &1/2&  0  &0\\
		0	 &  0  &  0  &1/2&1/2&  0  &  0  &0\\
		0  	 &1/2&  0  &  0  &  0  &  0  &1/2&0\\
		0    &1/2&  0  &  0  &  0  &  0  &1/2&0\\
		0    &  0  &  0  &1/2&1/2&  0  &  0  &0\\
		0    &  0  &1/2&  0  &  0  &1/2&  0  &0\\
		1/2&  0  &  0  &  0  &  0  &  0  &  0  &1/2
	\end{pmatrix},
\end{align}
and it maximally violates the GYNIN inequality under the local interventions 
\begin{align}
	p(x_k,o_k|a_k,i_k)=\delta_{x_k,i_k}\delta_{o_k,a_k}.
\end{align}
Hence, the GYNIN inequality illustrates gaps between causal, nomic, and quantum process correlations (a special case of which are the probabilistically consistent correlations). This inequality thus reveals the sense in which the BFW process can lead to nonclassical correlations. We therefore have the following:
\begin{theorem}[$\mathcal{DC}\subsetneq\mathcal{PC}$]\label{thm:dcpcgap}
	The set of deterministically consistent (\textit{i.e.}, nomic) correlations is strictly contained within the set of probabilistically consistent correlations, \textit{i.e.}, $\mathcal{DC}\subsetneq\mathcal{PC}$.
\end{theorem}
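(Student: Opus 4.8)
The plan is to prove the strict inclusion by exhibiting a single correlation that lies in $\mathcal{PC}$ but not in $\mathcal{DC}$; since $\mathcal{DC}\subseteq\mathcal{PC}$ has already been established, this suffices. The natural witness is the correlation realized by the BFW process, and the separating functional is the GYNIN inequality $p_{\rm gynin}\leq\tfrac{5}{8}$ of Eq.~\eqref{eq:nonclasswitness}. First I would invoke that inequality, which was just shown to hold for \emph{every} deterministically consistent correlation. This converts the membership question into a numerical check: any correlation achieving $p_{\rm gynin}>\tfrac{5}{8}$ cannot lie in $\mathcal{DC}$.

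Second, I would establish membership in $\mathcal{PC}$. The BFW process $P^{\rm bfw}_{\vec{I}|\vec{O}}$ is a bona fide classical process---it satisfies logical consistency, being a fine-tuned uniform mixture of two causal-loop quasi-process functions whose individual antinomies cancel, as established in Ref.~\cite{BFW14}. Hence it is a point of the classical process polytope, and composing it with the local interventions $p(x_k,o_k|a_k,i_k)=\delta_{x_k,i_k}\delta_{o_k,a_k}$ yields, by the very definition of $\mathcal{PC}$, a correlation in $\mathcal{PC}$. This is precisely the $8\times 8$ correlation $P_{\vec{X}|\vec{A}}$ displayed above.

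Third, I would verify the violation. Because the local interventions simply pass $i_k\to x_k$ and $a_k\to o_k$, the observed correlation between $\vec{A}$ and $\vec{X}$ coincides entry-for-entry with $P^{\rm bfw}_{\vec{I}|\vec{O}}$ read between $\vec{O}$ and $\vec{I}$. One then checks, column by column, that for each input $\vec{a}$ the entire probability mass of this distribution sits on outcomes satisfying one of the two winning branches $(x_1,x_2,x_3)=(a_3,a_1,a_2)$ or $(\bar{a}_3,\bar{a}_1,\bar{a}_2)$ of the GYNIN game; for instance, at $\vec{a}=(0,0,0)$ the support is $\{(0,0,0),(1,1,1)\}$ and at $\vec{a}=(0,0,1)$ it is $\{(0,1,1),(1,0,0)\}$, both lying inside the winning set. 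Hence $p_{\rm gynin}=1>\tfrac{5}{8}$, so the BFW correlation fails deterministic consistency and lies in $\mathcal{PC}\setminus\mathcal{DC}$, giving $\mathcal{DC}\subsetneq\mathcal{PC}$.

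The main obstacle is the second step: one must be sure that the BFW process places its induced correlation genuinely inside $\mathcal{PC}$ and not merely inside the larger set $\mathcal{qC}$. The decisive point is that BFW is logically consistent (a legitimate classical process), so it is an admissible convex ingredient for $\mathcal{PC}$. Its \emph{pathology}---that it cannot be written as any convex mixture of process functions, i.e.\ it lies outside every deterministic-extrema polytope---is exactly what produces the $\mathcal{DC}$-violation, but that feature comes for free from the GYNIN bound and need not be argued separately. Everything else reduces to the routine arithmetic of evaluating $p_{\rm gynin}$ on the explicit matrix.
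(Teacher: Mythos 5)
Your proposal is correct and follows essentially the same route as the paper: the paper also separates $\mathcal{DC}$ from $\mathcal{PC}$ by noting that the GYNIN bound $p_{\rm gynin}\leq\tfrac{5}{8}$ holds for all deterministically consistent correlations, while the BFW process (a logically consistent classical process) composed with the identity-style interventions $p(x_k,o_k|a_k,i_k)=\delta_{x_k,i_k}\delta_{o_k,a_k}$ yields a correlation in $\mathcal{PC}$ achieving $p_{\rm gynin}=1$. Your column-by-column verification and your remark that logical consistency of BFW is what certifies membership in $\mathcal{PC}$ (rather than merely $\mathcal{qC}$) match the paper's argument exactly.
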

Hence, we have the full set of strict inclusions between the hierarchy of correlation sets defined in Section \ref{subsec:hierarchy}, given by 
\begin{align}
	\mathcal{DC}\subsetneq \mathcal{PC}\subsetneq\mathcal{QP}\subsetneq\mathcal{qC}.
\end{align}
We also have an antinomicity witness that can separate causal, classical, and antinomic correlations, \textit{i.e.},
\begin{align}\label{eq:sepbounds}
	p_{\rm gynin}\overset{\rm causal}{\leq}\frac{1}{2}\overset{\rm classical}{\leq}\frac{5}{8}\overset{\rm antinomic}{\leq} 1.
\end{align}
We depict the inclusion relations between different sets of correlations in Figure \ref{fig:setinclusions}.
\begin{figure}
	\centering
	\includegraphics[scale=0.36]{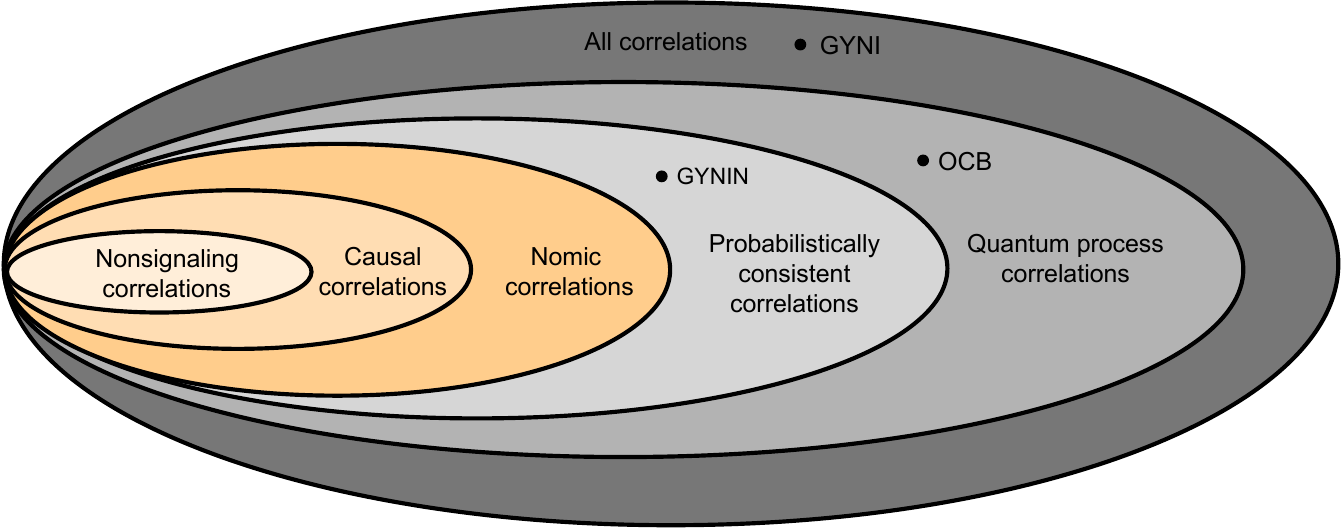}
	\caption{Inclusion relations between different sets of correlations. The point GYNIN refers to the correlation for which $p_{\rm gynin}=1$, achieved by the BFW process \cite{BFW14}. The point OCB refers to the correlation that wins the Oreshkov-Costa-Brukner game with probability $\frac{2+\sqrt{2}}{4}$ \cite{OCB12}. The point GYNI \cite{BAF15} refers to the correlation that wins the GYNI game with probability $1$. Antinomicity witnesses such as Eq.~\eqref{eq:nonclasswitness} separate nomic correlations from the rest.}\label{fig:setinclusions}
\end{figure} 

\section{Discussion}\label{sec:discussion}

We have defined a notion of nonclassicality ---namely, \textit{antinomicity}---for correlations without causal order. We argued that this notion of nonclassicality is a natural generalization of the failure of local causality in (non-signalling) Bell scenarios to multipartite scenarios without any non-signalling constraints, and proposed \textit{antinomy weight} as a measure of antinomicity of a correlation.  We also demonstrated a strict hierarchy between four sets of correlations, \textit{i.e.}, deterministically consistent (nomic), probabilistically consistent, quantum process, and quasi-consistent correlations (Figure \ref{fig:setinclusions}). We characterized vertices of the correlation polytope that are classically realizable (Theorem \ref{thm:det}) in any correlational scenario with single-round communication and made connections with admissible causal structures in the process-matrix framework, providing a sufficient condition for antinomicity of deterministic correlations \cite{TB22} (Corollary \ref{cor:siboncyc}). To illustrate the general ideas and their connections with prior literature, we analyzed bipartite $(2,2,2)$ and tripartite $(3,2,2)$ correlational scenarios is extensive detail and studied an antinomicity witness based on the GYNIN game that goes beyond causal inequalities.

Several new questions arise as a result of our investigation and we highlight some of them here. 

Firstly, we need to  characterize the bounds on correlations that follow from the assumption of nomicity in generic correlational scenarios. Theorem \ref{thm:det} and Corollary \ref{cor:siboncyc} go quite some way towards characterizing vertices of the nomic polytope. In particular, we used Theorem \ref{thm:det} to prove the classical (nomic) bound on the antinomicity witness of Eq.~\eqref{eq:nonclasswitness} based on the GYNIN game (\textit{cf.}~Eq.~\eqref{eq:gyninwin}). Of particular interest are tight bounds that are not saturated by causal correlations (\textit{i.e.}, they are not tight causal bounds) and are, furthermore, violated by process matrices. Our antinomicity witness provides such an example (\textit{cf.},~Eq.~\eqref{eq:sepbounds}) and it would be interesting and important to identify other games or antinomicity witnesses which can demonstrate such strict separations between different sets of correlations. Since the first appearance of our results, a technique for obtaining Tsirelson-type bounds on causal inequality violations in the  process-matrix framework has been proposed by Liu and Chiribella \cite{LC24}. It would be interesting to investigate whether this technique can also provide explicit bounds on the antinomicity achievable in the process-matrix framework, \textit{e.g.}, by looking at antinomicity witnesses (unlike the GYNIN witness we presented) where process-matrix correlations \textit{cannot} achieve the algebraic maximum. Furthermore, there could be qualitatively different types of nonclassicality depending on the types of vertices that necessarily participate in the support of an antinomic correlation. The antinomy weight, for example, doesn't discriminate between different types of antinomic vertices and it might be interesting to develop finer classifications of antinomicity as well as corresponding measures.
We should also note a key distinction between nonclassicality witnessed by Bell inequality violations in Bell scenarios and the nonclassicality witnessed via antinomicity in causally unrestricted correlational scenarios. In Bell scenarios, the causal structure is fixed \textit{a priori} for both classical and quantum (or post-quantum) causal models and the violation of a Bell inequality certifies the inability of a classical causal model \textit{with that} causal structure to reproduce correlations that can arise in quantum (or a post-quantum) theory under the same causal structure. Similarly, in more general network scenarios inspired by Bell nonlocality \cite{WSF19}, classical causal compatibility constraints generalizing Bell inequalities are also defined relative to an \textit{a priori} causal structure; this is in keeping with the usual philosophy of assessing claims of nonclassicality relative to a presumed causal structure \cite{SFK20}. In the correlational scenarios we envisage, by contrast, there is no \textit{a priori} assumption of a causal structure---not even that it respects a definite causal order \cite{OCB12}---and the violation of a nomic inequality such as Eq.~\eqref{eq:nonclasswitness} certifies the inability of \textit{any} classical split-node causal model \cite{BLO19,BLO21} to reproduce the correlations that can arise in the process-matrix framework (or beyond).

Secondly, it has been conjectured that any process that is physically realizable without post-selection would be unitarily extendible \cite{AFN17}. At present, all processes with indefinite causal order for which a potential realization is known, be it with gravity \cite{ZCP19, MSY23} or time-delocalized subsystems \cite{Oreshkov19, WBO23}, are unitarily extendible, and we do not know of any example that displays antinomicity. Hence, we need to address the following question: can correlations realizable via unitarily-extendible process matrices display antinomicity?  If so, one important implication would be a concrete demonstration that antinomicity is of interest within the cyclic quantum causal models framework \cite{BLO21}, which is applicable to the case of unitarily-extendible process matrices: that is, there exists \textit{some} admissible causal structure \cite{TB22} (associated to a unitarily-extendible process matrix) relative to which there exist process-matrix correlations that are unachievable via \textit{any} classical split-node causal model (regardless of its causal structure). If not, this would mean that antinomicity of a process-matrix correlation can operationally certify the fact that the underlying process matrix is not unitarily-extendible. In either case, the answer to the above question would have important implications for the device-independent analysis of correlations without causal order and their realizability in the process-matrix framework.

Finally, we know that the AF/BW process \cite{BW16} is a resource for perfect local discrimination of an ensemble of product quantum states---the SHIFT ensemble---that displays the phenomenon of quantum nonlocality without entanglement (QNLWE) \cite{KB22}. By the lights of our notion of classicality, this is a task that requires noncausal-but-still-classical resources with indefinite causal order. The question then arises: how does one identify information-theoretic tasks exploiting indefinite causal order where antinomic correlations always provide an advantage over nomic ones? Are there independently interesting examples of such tasks in quantum information theory, just as QNLWE provides an example where classical noncausality is a resource \cite{KB22}? This would help clarify the operational interpretation of antinomicity beyond that offered by the GYNIN game.

Our work here represents only the first step in a broader research programme that aims at characterizing the nonclassicality of multipartite correlations achievable within and outside the process-matrix framework. It is our hope that antinomicity of multipartite correlations will be as fundamental and useful a characterization of nonclassicality for correlations without causal order as Bell nonlocality is for non-signalling correlations.

\ack
R.K.~thanks Ämin Baumeler, Hippolyte Dourdent, Tobias Fritz, and Andreas Winter for discussions that informed his thinking while pursuing this project. This work was made possible through the support of the ID\# 62312 grant from the John Templeton Foundation, as part of the project \href{https://www.templeton.org/grant/the-quantum-information-structure-of-spacetime-qiss-second-phase}{`The Quantum Information Structure of Spacetime' (QISS)}. The opinions expressed in this work are those of the author(s) and do not necessarily reflect the views of the John Templeton Foundation. R.K.~was supported by the Chargé de Recherche fellowship of the Fonds de la Recherche Scientifique (F.R.S.-FNRS), Belgium and the Program of Concerted Research Actions (ARC) of the Université libre de Bruxelles. O.O.~is a Research Associate of the Fonds de la Recherche Scientifique (F.R.S.-FNRS), Belgium. This work also received support from the French government under the France 2030 investment plan, as part of the Initiative d'Excellence d'Aix-Marseille Université-A*MIDEX, AMX-22-CEI-01.

\bibliographystyle{apsrev4-2}
\bibliography{masterbibfilev2.bib}

\appendix
\section{Bipartite causal inequalities and their maximal violations}\label{app:bipartite}
We prove some general properties of the correlation polytope in the $(2,2,2)$ correlational scenario.

\subsection{All the different versions of GYNI and LGYNI games}

\textit{The $16$ versions of GYNI:} Note that party $S_1$ (with local access to $a_1$) can access information about $a_2$ in four distinct ways, \textit{i.e.},
\begin{align}
	x_1&=a_2,\\
	x_1&=\bar{a}_2,\\
	x_1&=a_1\oplus a_2,\\
	x_1&=a_1\oplus \bar{a}_2.
\end{align}
Similarly, party $S_2$ (with local access to $a_2$) can access information about $a_1$ in four distinct ways, \textit{i.e.},
\begin{align}
	x_2&=a_1,\\
	x_2&=\bar{a}_1,\\
	x_2&=a_1\oplus a_2,\\
	x_2&=\bar{a}_1\oplus a_2.
\end{align}
Hence, there are $4\times 4=16$ ways in which the two parties can be asked to guess their neighbour's input, each corresponding to a different version of the GYNI game. These can be easily parameterized, as in Ref.~\cite{BAF15}, via four-bit strings $(\alpha_0,\alpha_1,\beta_0,\beta_1)\in\{0,1\}^4$ so that each such string corresponds to a version of the GYNI game as follows:
\begin{align}\label{paramgyni}
	&x_1\oplus \alpha_1 a_1\oplus \alpha_0=a_2,\nonumber\\
	&x_2\oplus \beta_1 a_2\oplus\beta_0=a_1. 
\end{align}
Here $(\alpha_0,\alpha_1,\beta_0,\beta_1)=(0,0,0,0)$ is the canonical version in Eq.~\eqref{cangyni}.

\textit{The $16$ versions of LGYNI:} Here party $S_1$ (with local access to $a_1$) needs to access information about $a_2$ if $a_1$ takes a particular value, but can freely choose $x_1$ otherwise. This can be done in four ways, \textit{i.e.},
\begin{align}
	a_1(x_1\oplus a_2)&=0,\\
	a_1(x_1\oplus \bar{a}_2)&=0,\\
	\bar{a}_1(x_1\oplus a_2)&=0,\\
	\bar{a}_1(x_1\oplus \bar{a}_2)&=0.
\end{align}

Similarly, party $S_2$ (with local access to $a_2$) can access information about $a_1$ in four distinct ways depending on whether $a_2$ takes a particular value, \textit{i.e.},
\begin{align}
	a_2(x_2\oplus a_1)&=0,\\
	a_2(x_2\oplus \bar{a}_1)&=0,\\
	\bar{a}_2(x_2\oplus a_1)&=0,\\
	\bar{a}_2(x_2\oplus \bar{a}_1)&=0.
\end{align}
Hence, there are $4\times 4=16$ ways in which the two parties can be asked (or not) to guess their neighbour's input depending on their own input, each corresponding to a different version of the LGYNI game. These can again be parameterized, as in Ref.~\cite{BAF15}, via four-bit strings $(\alpha'_0,\alpha'_1,\beta'_0,\beta'_1)\in\{0,1\}^4$ so that each such string corresponds to a version of LGYNI given by
\begin{align}\label{paramlgyni}
	(a_1\oplus\alpha'_1)(x_1\oplus \alpha'_0 \oplus a_2)&=0,\nonumber\\
	(a_2\oplus\beta'_1)(x_2\oplus \beta'_0 \oplus a_1)&=0.
\end{align}
Here $(\alpha'_0,\alpha'_1,\beta'_0,\beta'_1)=(0,0,0,0)$ is the canonical version in Eq.~\eqref{canlgyni}.

\subsection{Maximal violation of GYNI and LGYNI inequalities via noncausal vertices}
The noncausal vertices (with signalling graph $S_1\leftrightarrow S_2$) of the bipartite correlation polytope fall into two overlapping sets:
\begin{itemize}
	\item those violating a LGYNI inequality maximally, this set denoted by $V_{\rm LGYNI}$; \textit{e.g.}, the noncausal vertex
	\begin{align}
		\begin{pmatrix}\label{lgyni_0}
			0&1&1&0\\
			1&0&0&0\\
			0&0&0&0\\
			0&0&0&1
		\end{pmatrix}
	\end{align}
	which violates the canonical LGYNI inequality (Eq.~\eqref{canlgyni}) maximally.
	
	\item those violating a GYNI inequality maximally, this set denoted by $V_{\rm GYNI}$; \textit{e.g.}, the noncausal vertex
	\begin{align}\label{gyni_0}
		\begin{pmatrix}
			1&0&0&0\\
			0&0&1&0\\
			0&1&0&0\\
			0&0&0&1
		\end{pmatrix}
	\end{align}
	violates the canonical GYNI inequality (Eq.~\eqref{cangyni}) maximally.
\end{itemize}

Some of the vertices that maximally violate a GYNI inequality also maximally violate a corresponding set of LGYNI inequalities. We characterize this correspondence below.

\begin{theorem}\label{vgynitolgyni}
	A vertex maximally violating a GYNI inequality also violates some LGYNI inequality maximally if and only if the GYNI inequality is of type $(\alpha_0,0,\beta_0,0)$ and the LGYNI inequality is of type $(\alpha_0,\alpha'_1,\beta_0,\beta'_1)$, where $(\alpha'_1,\beta'_1)\in\{(0,0),(0,1),(1,0),(1,1)\}$.
\end{theorem}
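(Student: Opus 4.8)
The plan is to exploit the fact that a GYNI inequality fully pins down a unique maximally violating vertex, and then to reduce the question of which LGYNI inequalities that same vertex saturates to an elementary Boolean identity. First I would observe that winning the $(\alpha_0,\alpha_1,\beta_0,\beta_1)$ GYNI game of Eq.~\eqref{paramgyni} with certainty forces, for every input pair $(a_1,a_2)$, the unique outputs $x_1=a_2\oplus\alpha_1 a_1\oplus\alpha_0$ and $x_2=a_1\oplus\beta_1 a_2\oplus\beta_0$; since the success functional is normalized and the bound $1$ is the algebraic maximum, there is exactly one deterministic vertex $v$ attaining it, namely the one given by these two functions. This is the vertex whose further LGYNI behaviour must be analyzed.

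Next I would substitute $v$ into the LGYNI winning conditions of Eq.~\eqref{paramlgyni}. The crucial simplification is that the two factors decouple and that the ``remote'' variable cancels inside each factor: plugging $x_1=a_2\oplus\alpha_1 a_1\oplus\alpha_0$ into $x_1\oplus\alpha'_0\oplus a_2$ collapses it to $\alpha_1 a_1\oplus\alpha_0\oplus\alpha'_0$, so that maximal violation of the first LGYNI constraint is equivalent to the single identity $(a_1\oplus\alpha'_1)(\alpha_1 a_1\oplus\alpha_0\oplus\alpha'_0)=0$ holding for both values of $a_1$, and symmetrically $(a_2\oplus\beta'_1)(\beta_1 a_2\oplus\beta_0\oplus\beta'_0)=0$ for the second. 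The problem thus splits into two identical one-variable problems, one for the $\alpha$'s and one for the $\beta$'s. For the forward direction I would set $\alpha_1=\beta_1=0$ and take the matched parameters $\alpha'_0=\alpha_0$, $\beta'_0=\beta_0$: then the second factor in each product vanishes identically, so both constraints hold for every $a_1,a_2$ and for all four choices $(\alpha'_1,\beta'_1)\in\{0,1\}^2$, establishing that the GYNI vertex of type $(\alpha_0,0,\beta_0,0)$ saturates the whole LGYNI family $(\alpha_0,\alpha'_1,\beta_0,\beta'_1)$. For the converse I would again take $\alpha'_0=\alpha_0$, reducing the first identity to $(a_1\oplus\alpha'_1)\,\alpha_1 a_1=0$; demanding this for the representative parameter $\alpha'_1=0$ and evaluating at $a_1=1$ yields $\alpha_1=0$, and the symmetric argument gives $\beta_1=0$. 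Matching the canonical examples of Eqs.~\eqref{gyni_0} and \eqref{lgyni_0} provides a useful sanity check throughout.

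The one genuinely delicate point---and the step I expect to need the most care---is the role of the offset matching $\alpha'_0=\alpha_0$, $\beta'_0=\beta_0$ together with the universal quantifier over $(\alpha'_1,\beta'_1)$. If one drops the matching, a GYNI vertex with $\alpha_1=1$ can still maximally violate an \emph{isolated} LGYNI inequality: taking $\alpha'_0=\alpha_0\oplus\overline{\alpha'_1}$ makes the factors $(a_1\oplus\alpha'_1)$ and $(a_1\oplus\alpha_0\oplus\alpha'_0)$ vanish on complementary values of $a_1$, so the product is identically zero. Hence the biconditional is correct precisely for the family sharing the offsets $\alpha_0,\beta_0$ and when maximal violation is demanded across all of $(\alpha'_1,\beta'_1)$. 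I would therefore phrase the correspondence explicitly in terms of these shared offsets and verify that the converse argument uses only the $\alpha'_1=0$ (resp.\ $\beta'_1=0$) member, so that the ``for all $(\alpha'_1,\beta'_1)$'' hypothesis is exactly what forces $\alpha_1=\beta_1=0$.
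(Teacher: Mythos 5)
Your proposal follows the same skeleton as the paper's own proof: the GYNI conditions of Eq.~\eqref{paramgyni} pin down a unique vertex, substituting it into Eq.~\eqref{paramlgyni} cancels the remote setting inside each factor, and maximal LGYNI violation reduces to the decoupled Boolean identities $(a_1\oplus\alpha'_1)(\alpha_1 a_1\oplus\alpha_0\oplus\alpha'_0)=0$ and $(a_2\oplus\beta'_1)(\beta_1 a_2\oplus\beta_0\oplus\beta'_0)=0$, required to hold for both values of $a_1$ and of $a_2$ respectively. Your ``if'' direction coincides with the paper's. The divergence is in the ``only if'' direction, and there your treatment is the correct one. The paper's proof enumerates the four patterns of which factor vanishes in which product and discards the first three because equations such as $\alpha'_1=a_1$ cannot hold for \emph{all} settings; this tacitly assumes that a \emph{single} pattern must be used uniformly over $(a_1,a_2)$. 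But maximal violation only requires that for each setting \emph{some} factor of each product vanishes, and the vanishing factor may alternate with the setting---exactly the freedom your complementary-vanishing construction $\alpha'_0=\alpha_0\oplus\overline{\alpha'_1}$ (for $\alpha_1=1$) exploits.

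The consequence is that what you flag as a ``delicate point'' is in fact a genuine counterexample to the theorem as stated, so your proposal should be read as a correct proof of a repaired statement rather than of the paper's. Concretely, the vertex $x_1=x_2=a_1\oplus a_2$, which maximally violates the GYNI inequality of type $(0,1,0,1)$, also maximally violates the LGYNI inequality of type $(\alpha'_0,\alpha'_1,\beta'_0,\beta'_1)=(1,0,1,0)$, since
\begin{equation}
a_1(x_1\oplus a_2\oplus 1)=a_1(a_1\oplus 1)=0,\qquad a_2(x_2\oplus a_1\oplus 1)=a_2(a_2\oplus 1)=0
\end{equation}
hold identically in $(a_1,a_2)$. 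This contradicts the ``only if'' direction of the theorem, and also the third example given right after the paper's proof (which asserts that this very vertex violates no LGYNI inequality maximally, relying on the same uniformity assumption). Pushing your two-case analysis to completion shows that \emph{every} one of the $16$ GYNI vertices maximally violates exactly four LGYNI inequalities; what singles out the type-$(\alpha_0,0,\beta_0,0)$ GYNI inequalities is only that the four LGYNI inequalities their vertices saturate form the matched-offset family $(\alpha_0,\alpha'_1,\beta_0,\beta'_1)$. Your repaired biconditional---matching the offsets and demanding saturation of the $(\alpha'_1,\beta'_1)=(0,0)$ member---is true, and your proof of it (forward direction by making the second factors vanish identically, converse by evaluating $(a_1\oplus 0)\,\alpha_1 a_1=0$ at $a_1=1$) is sound. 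The gap here lies in the paper's proof, not in yours.
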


\begin{proof}
	Consider a vertex $P_{\vec{X}|\vec{A}}\in V_{\rm GYNI}$ violating a GYNI inequality parameterized by  $(\alpha_0,\alpha_1,\beta_0,\beta_1)$ according to Eq.~\eqref{paramgyni}. We then have that $p(\vec{x}|\vec{a})=1$ for 
	\begin{align}
		x_1&=a_2\oplus\alpha_1a_1\oplus\alpha_0,\\
		x_2&=a_1\oplus\beta_1a_2\oplus\beta_0.
	\end{align}	
	Substituting these conditions in Eq.~\eqref{paramlgyni}, we obtain that for $P_{\vec{X}|\vec{A}}\in V_{\rm GYNI}$ to maximally violate an LGYNI inequality parameterized by $(\alpha'_0,\alpha'_1,\beta'_0,\beta'_1)$, at least one of the following four sets of equations must hold
	\begin{align}
		&\alpha'_1=a_1, \beta'_1=a_2,\\
		&\alpha'_1=a_1,\beta'_0=\beta_0\oplus \beta_1a_2,\\
		&\alpha'_0=\alpha_0\oplus \alpha_1a_1, \beta'_1=a_2,\\
		&\alpha'_0=\alpha_0\oplus \alpha_1a_1, \beta'_0=\beta_0\oplus \beta_1a_2.
	\end{align}
	Since the first three of these conditions require, respectively, $(\alpha'_1,\beta'_1)=(a_1,a_2)$, $\alpha'_1=a_1$, and $\beta'_1=a_2$, none of them can be satisfied for \textit{all} $(a_1,a_2)$---which is required for a maximal violation from a set of equations---since LGYNI inequality fixes $(\alpha'_1,\beta'_1)$ to take exactly one value in $\{(0,0),(0,1),(1,0),(1,1)\}$. Hence, the only feasible way to violate the LGYNI inequality maximally is to have $(\alpha'_1,\beta'_1)$ satisfy the fourth set of equations (for all $(a_1,a_2)$), namely,
	\begin{align}\label{maxlgyni}
		&\alpha'_0=\alpha_0\oplus \alpha_1a_1, \beta'_0=\beta_0\oplus \beta_1a_2,
	\end{align}
	and this is only possible if the parameters of the inequality are independent of the setting choices $(a_1,a_2)$. This independence is \textit{only} achieved when $\alpha_1=\beta_1=0$. Hence, the GYNI inequality should, in fact, be of the type $(\alpha_0,0,\beta_0,0)$ for a vertex maximally violating it to also maximally violate an LGYNI inequality which must, in turn, be of the type $(\alpha_0,\alpha'_1,\beta_0,\beta'_1)$, where $(\alpha'_1,\beta'_1)\in\{(0,0),(0,1),(1,0),(1,1)\}$. 
\end{proof}
Some examples:
\begin{enumerate}
	\item The vertex in Eq.~\eqref{gyni_0} that maximally violates the canonical GYNI inequality (Eq.~\eqref{gyni}) also maximally violates the canonical LGYNI inequality (Eq.~\eqref{lgyni}). 
	
	\item The vertex in Eq.~\eqref{lgyni_0} that maximally violates the canonical LGYNI inequality does not violate the canonical GYNI inequality.
	
	\item The vertex
	\begin{equation}
		x_1=a_1\oplus a_2, x_2=a_1\oplus a_2
	\end{equation}
	maximally violates the GYNI inequality of type $(0,1,0,1)$ but it does not maximally violate any LGYNI inequality. This can be verified by substituting $x_1$ and $x_2$ for this vertex in Eq.~\eqref{lgyni}, which results in constraints on $(\alpha'_0,\alpha'_1,\beta'_0,\beta'_1)$ that are not independent of $(a_1,a_2)$.		
\end{enumerate}

We now characterize the sets of vertices $V_{\rm GYNI}$ and $V_{\rm LGYNI}$ that (respectively) maximally violate the different versions of the GYNI and LGYNI inequalities. Generic GYNI and LGYNI inequalities, following the winning conditions of Eqs.~\eqref{paramgyni} and \eqref{paramlgyni}, are given by 

\begin{align}
	&\textrm{Generic GYNI:}\nonumber\\
	&\frac{1}{4}\sum_{\vec{a},\vec{x}}\delta_{x_1,a_2\oplus\alpha_1a_1\oplus\alpha_0}\delta_{x_2,a_1\oplus\beta_1a_2\oplus\beta_0}p(\vec{x}|\vec{a})\leq\frac{1}{2}\label{gengyni}\\
	&\textrm{Generic LGYNI:}\nonumber\\
	&\frac{1}{4}\sum_{\vec{a},\vec{x}}\delta_{(a_1\oplus\alpha'_1)(x_1\oplus \alpha'_0\oplus a_2),0}\delta_{(a_2\oplus \beta'_1)(x_2\oplus\beta'_0\oplus a_1),0}p(\vec{x}|\vec{a})\leq\frac{3}{4}.\label{genlgyni}
\end{align}
\textbf{Characterizing $V_{\rm GYNI}$:} A generic GYNI inequality of type $(\alpha_0,\alpha_1,\beta_0,\beta_1)$ (Eq.~\eqref{gengyni}) is maximally violated by the unique vertex corresponding to Eq.~\eqref{paramgyni}, \textit{i.e.}, the vertex given by
\begin{align}
	&x_1=a_2\oplus \alpha_1 a_1\oplus \alpha_0,\nonumber\\
	&x_2=a_1\oplus \beta_1 a_2\oplus\beta_0. 
\end{align}
Hence, there are exactly $16$ noncausal vertices in $V_{\rm GYNI}$, one corresponding to each GYNI inequality. Of these $16$ vertices, each of the following $4$ violates $4$ LGYNI inequalities maximally (from Theorem \ref{vgynitolgyni}): 
\begin{align}
	&x_1=a_2,x_2=a_1\\
	&x_1=a_2,x_2=a_1\oplus 1\\
	&x_1=a_2\oplus1,x_2=a_1\\
	&x_1=a_2\oplus1,x_2=a_1\oplus1
\end{align}
\textbf{Characterizing $V_{\rm LGYNI}$:} We now characterize the vertices that maximally violate the LGYNI inequalities maximally. Note that there are a total of $144$ noncausal vertices and $16$ of them violate the GYNI inequality maximally. Hence, there exist $128$ other vertices that violate LGYNI inequalities maximally but not the GYNI inequalities.
\begin{theorem}
	A generic LGYNI inequality of type $(\alpha'_0,\alpha'_1,\beta'_0,\beta'_1)$ (Eq.~\eqref{genlgyni}) is maximally violated by exactly $16$ vertices. Exactly one of these $16$ vertices is also a GYNI vertex, \textit{i.e.}, it violates the GYNI inequality of type $(\alpha'_0,0,\beta'_0,0)$ maximally. 
\end{theorem}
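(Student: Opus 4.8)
The plan is to directly characterize the set of vertices $P_{\vec{X}|\vec{A}}$ achieving the value $1$ for the generic LGYNI inequality of type $(\alpha'_0,\alpha'_1,\beta'_0,\beta'_1)$ in Eq.~\eqref{genlgyni}. The winning condition, from Eq.~\eqref{paramlgyni}, is that both $(a_1\oplus\alpha'_1)(x_1\oplus\alpha'_0\oplus a_2)=0$ and $(a_2\oplus\beta'_1)(x_2\oplus\beta'_0\oplus a_1)=0$ hold for every $(a_1,a_2)$. The key observation is that each of these two constraints is \emph{conditional}: the first imposes a hard requirement on $x_1$ only when $a_1=\bar{\alpha}'_1$ (i.e., $a_1\oplus\alpha'_1=1$), and leaves $x_1$ completely free when $a_1=\alpha'_1$; symmetrically for $x_2$ via $a_2$ relative to $\beta'_1$. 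So first I would split the four input settings $(a_1,a_2)$ according to whether each party's input equals its ``active'' value, giving a $2\times 2$ partition of constraint-types.

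\textbf{The free-choice count.} Next I would tally the degrees of freedom. For the two settings with $a_1=\bar{\alpha}'_1$, the value $x_1$ is forced to $\alpha'_0\oplus a_2$; for the two settings with $a_1=\alpha'_1$, the value $x_1$ is unconstrained, contributing one free bit each, hence $2$ free bits from the $x_1$ side. By the identical argument applied to $x_2$ via $\beta'_1,\beta'_0$, there are $2$ more free bits. Since the four output bits $x_1$ (over the four settings) and the four output bits $x_2$ are constrained independently, the total number of free binary choices is $2+2=4$, giving exactly $2^4=16$ distinct vertices that saturate the inequality. This establishes the first claim of the theorem.

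\textbf{Identifying the unique GYNI vertex among them.} For the second claim, I would ask which of these $16$ LGYNI-maximal vertices also maximally violates a GYNI inequality. A GYNI winning condition (Eq.~\eqref{paramgyni}) forces $x_1$ and $x_2$ for \emph{all} four settings, not just the ``active'' half, so a GYNI vertex is one in which the free bits get pinned down to the values consistent with the unconditional relations $x_1=a_2\oplus\alpha_1 a_1\oplus\alpha_0$ and $x_2=a_1\oplus\beta_1 a_2\oplus\beta_0$. Invoking Theorem~\ref{vgynitolgyni}, the GYNI inequality in question must be of type $(\alpha_0,0,\beta_0,0)$ with $\alpha_0=\alpha'_0$ and $\beta_0=\beta'_0$, i.e.\ $(\alpha'_0,0,\beta'_0,0)$. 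I would then verify that requiring the free bits to coincide with $x_1=a_2\oplus\alpha'_0$ on the settings $a_1=\alpha'_1$ (and analogously for $x_2$) fixes each of the four free bits to a unique value, so exactly one of the $16$ vertices is a GYNI vertex. The main obstacle, and the step that needs care, is the bookkeeping in this last verification: one must confirm that the two constraints already active on the ``forced'' half of the settings are \emph{automatically} compatible with the GYNI relations $x_1=a_2\oplus\alpha'_0$, $x_2=a_1\oplus\beta'_0$ (they are, since setting $\alpha_1=\beta_1=0$ makes the forced LGYNI values $\alpha'_0\oplus a_2$ coincide with the GYNI prescription), so that no over-determination or inconsistency arises and the GYNI vertex genuinely lies in the $16$-element LGYNI-maximal set. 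Once that compatibility is checked, uniqueness follows immediately from the fact that the GYNI relations pin all four free bits.
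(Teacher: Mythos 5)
Your proof is correct and follows essentially the same route as the paper's: analyze the eight winning constraints setting by setting, observe that the constraint on $x_1$ is active only when $a_1=\bar{\alpha}'_1$ (and on $x_2$ only when $a_2=\bar{\beta}'_1$), so that four of the eight output bits are forced and four are free, yielding exactly $2^4=16$ maximally violating vertices---precisely the paper's bookkeeping, just organized per party rather than per pair of equations. For the second claim your argument (invoking Theorem~\ref{vgynitolgyni} to pin the GYNI type to $(\alpha'_0,0,\beta'_0,0)$, checking that the forced LGYNI values $\alpha'_0\oplus a_2$ and $\beta'_0\oplus a_1$ agree with the GYNI prescription, and noting that the GYNI relations fix all four free bits uniquely) is actually more explicit than the paper's own proof, which stops after establishing the count of $16$ and only identifies the unique GYNI vertex by example afterwards.
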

\begin{proof}
	A generic LGYNI inequality of type $(\alpha'_0,\alpha'_1,\beta'_0,\beta'_1)$ (Eq.~\eqref{genlgyni}) is maximally violated by exactly $16$ vertices. Recall that a vertex maximally violating this inequality must satisfy Eq.~\eqref{genlgyni}, \textit{i.e.},
	\begin{align}
		(a_1\oplus\alpha'_1)(x_1\oplus \alpha'_0 \oplus a_2)&=0,\nonumber\\
		(a_2\oplus\beta'_1)(x_2\oplus \beta'_0 \oplus a_1)&=0.
	\end{align}
	We will now argue that for any correlation with deterministic outcomes given by 
	\begin{align}
		&(x_1,x_2|a_1,a_2)\in\{(x_1^{(00)},x_2^{(00)}|0,0), (x_1^{(01)},x_2^{(01)}|0,1),\nonumber\\
		&(x_1^{(10)},x_2^{(10)}|1,0), (x_1^{(11)},x_2^{(11)}|1,1)\},
	\end{align}
	and which maximally violates the LGYNI inequality, only $4$ of the $8$ variables in  $\{x_1^{(a_1a_2)},x_2^{(a_1a_2)}\}_{a_1a_2}$ are independent of each other. The condition of maximal violation requires that
	\begin{align}
		&\alpha'_1(x_1^{(00)}\oplus\alpha'_0)=0,\quad \beta'_1(x_2^{(00)}\oplus\beta'_0)=0,\label{eq:maxlgyni_1}\\
		&\alpha'_1(x_1^{(01)}\oplus\bar{\alpha}'_0)=0,\quad \bar{\beta}'_1(x_2^{(01)}\oplus\beta'_0)=0,\label{eq:maxlgyni_2}\\
		&\bar{\alpha}'_1(x_1^{(10)}\oplus\alpha'_0)=0,\quad \beta'_1(x_2^{(10)}\oplus\bar{\beta}'_0)=0,\label{eq:maxlgyni_3}\\
		&\bar{\alpha}'_1(x_1^{(11)}\oplus\bar{\alpha}'_0)=0,\quad \bar{\beta}'_1(x_2^{(11)}\oplus\bar{\beta}'_0)=0.\label{eq:maxlgyni_4}
	\end{align}
	For any value of $(\alpha'_1,\beta'_1)$, exactly one of the $4$ pairs of equations fixes $2$ of the $8$ variables, namely, 
	\begin{align}
		&x_1^{(\bar{\alpha}'_1\bar{\beta}'_1)}=\alpha'_0\oplus\bar{\beta}'_1,\label{eq:fixed_1}\\ &x_2^{(\bar{\alpha}'_1\bar{\beta}'_1)}=\beta'_0\oplus\bar{\beta}'_1,\label{eq:fixed_2}
	\end{align}
	\textit{e.g.}, for $(\alpha'_1,\beta'_1)=(1,1)$, Eqs.~\eqref{eq:maxlgyni_1} fix $x_1^{(00)}=\alpha'_0, x_2^{(00)}=\beta'_0$. That leaves us with at most $6$ independent variables. 
	
	Since $\alpha'_1$ appears in another pair of equations (where $\beta'_1$ is flipped in the other equation in the pair), it also fixes one of the remaining $6$ variables, namely, 
	\begin{align}
		x_1^{(\bar{\alpha}'_1\beta'_1)}=\alpha'_0\oplus\beta'_1,\label{eq:fixed_3}	
	\end{align}
	\textit{e.g.},  for $(\alpha'_1,\beta'_1)=(1,1)$, Eq.~\eqref{eq:maxlgyni_2} fixes $x_1^{(01)}=\bar{\alpha}'_0$. Similarly, since $\beta'_1$ also appears in another pair of equations (where $\alpha'_1$ is flipped in the other equation in the pair), it fixes another variable, namely, 
	\begin{align}
		x_2^{(\alpha'_1\bar{\beta}'_1)}=\beta'_0\oplus\alpha'_1,\label{eq:fixed_4}
	\end{align}
	\textit{e.g.}, $x_2^{(10)}=\bar{\beta}'_0$ for $(\alpha'_1,\beta'_1)=(1,1)$. There are no further linear dependencies for any given $(\alpha'_1,\beta'_1)$, hence we have that $4$ of the $8$ variables are independent, namely,
	\begin{align}\label{eq:freetuple}
		x_2^{(\bar{\alpha}'_1\beta'_1)}, x_1^{(\alpha'_1\bar{\beta}'_1)}, x_1^{(\alpha'_1\beta'_1)}, x_2^{(\alpha'_1\beta'_1)}\in\{0,1\},
	\end{align}
	\textit{e.g.}, for $(\alpha'_1,\beta'_1)=(1,1)$, these are the variables $x_2^{(01)}, x_1^{(00)}, x_1^{(11)}, x_2^{(11)}\in\{0,1\}$. Hence, we have $2^4=16$ vertices maximally violating the given LGYNI inequality, each corresponding to a particular choice values in Eq.~\eqref{eq:freetuple}, with the remaining values fixed by Eqs.~\eqref{eq:fixed_1}, \eqref{eq:fixed_2}, \eqref{eq:fixed_3}, and \eqref{eq:fixed_4}.
\end{proof}

\textbf{Example:} There are $16$ noncausal vertices violating the canonical LGYNI inequality (Eq.~\eqref{canlgyni}) maximally, given by assignments of the type
\begin{align}
	&(x_1,x_2|a_1,a_2)\nonumber\\
	\in&\{(x_1^{(00)},x_2^{(00)}|0,0), (x_1^{(01)},0|0,1), (0,x_2^{(10)}|1,0),\nonumber\\
	&(1,1|1,1)\},
\end{align}
where $x_1^{(00)},x_2^{(00)},x_1^{(01)},x_2^{(10)}\in\{0,1\}$. 

The assignment $(x_1^{(00)},x_2^{(00)}|0,0)$ corresponds to the case where no GYNI condition is required of either party (so, four possibilities for this assignment). The assignment $(x_1^{(01)},0|0,1)$ requires the GYNI condition of the second party (so, two possibilities) and the assignment $(0,x_2^{(10)}|1,0)$ requires the GYNI condition of the first party (so, two possibilities). Hence, in all, there are $16$ possible assignments for all four settings $(a_1,a_2)\in\{(0,0),(0,1),(1,0),(1,1)\}$, each constituting a LGYNI vertex.  Of these $16$ vertices, the specific assignment corresponding to $x_1^{(00)}=0,x_2^{(00)}=0,x_1^{(01)}=1,x_2^{(10)}=1$ is the vertex violating the canonical GYNI inequality (Eq.~\eqref{cangyni}) maximally. 

Hence, the unique noncausal vertex $\mathbb{P}_{\vec{X},\vec{A}}$ violating the canonical GYNI inequality maximally (and therefore also the canonical LGYNI inequality maximally) is
\begin{align}
	P_{\vec{X}|\vec{A}}=\begin{pmatrix}
		1&0&0&0\\
		0&0&1&0\\
		0&1&0&0\\
		0&0&0&1
	\end{pmatrix}.
\end{align}

\section{Classification of all tripartite vertices in the $(3,2,2)$ scenario}\label{app:322vertices}
In the tripartite scenario where each party has a binary choice of inputs $a_i\in\{0,1\}$, each resulting in a binary outcome $x_i\in\{0,1\}$ ($i=1,2,3$), the object of interest is the conditional probability distribution $p(x_1,x_2,x_3|a_1,a_2,a_3)$. We will represent this correlation as an $8\times 8$ column-stochastic matrix $P_{\vec{X}|\vec{A}}$ with entries $P_{\vec{X}|\vec{A}}(\vec{x}|\vec{a}):=p(x_1,x_2,x_3|a_1,a_2,a_3)$, where $\vec{x},\vec{a}\in\{000,001,010,011,100,101,110,111\}$. The vertices correspond to $\{0,1\}$-valued column-stochastic matrices, \textit{i.e.}, there are $8^8=16,777,216$ vertices. We can now classify them according to their signalling classes and determine whether they are causal or noncausal, following Eq.~\eqref{eq:causalcorrdet}.

\subsection{Bell, \textit{i.e.}, non-signalling, vertices}

Here no party signals to any other, \textit{i.e.}, 
\begin{align}
	x_1=f_1(a_1), x_2=f_2(a_2),
	x_3=f_3(a_3).
\end{align}
That is, we have $4^3=64$ Bell vertices, since there are four choices of $f_i\in\{\textrm{Id}, \textrm{Flip}, 0,1\}$ for each $i\in\{1,2,3\}$. The signalling class is a $3$-node DAG with no edges, \textit{i.e.}, $\{\{\circ\},\{\star\},\{\bullet\}\}$.

There is only one signalling graph in this class, \textit{i.e.}, relabelling parties doesn't change the signalling relations between the parties.

\subsection{Bipartite causal signalling vertices}
Here, 
one party, $S_i$, is a by-stander, not communicating with the other two in any way and for each choice of $i\in\{1,2,3\}$ and $f_i\in\{\textrm{Id}, \textrm{Flip},0,1\}$, we have two families of signalling graphs:
\begin{itemize}
	\item $S_{i+1}\rightarrow S_{i+2}$, \textit{i.e.},
	\begin{align}
		x_i&=f_i(a_i),\nonumber\\
		x_{i+1}&=f_{i+1}(a_{i+1}),\nonumber\\ x_{i+2}&=f_{i+2}(a_{i+1},a_{i+2}),
	\end{align} 
	There are $48$ of vertices with this signalling graph, namely, one-way signalling vertices inherited from the bipartite case.
	
	\item $S_{i+2}\rightarrow S_{i+1}$, \textit{i.e.}, 
	\begin{align}
		x_i&=f_i(a_i),\nonumber\\
		x_{i+1}&=f_{i+1}(a_{i+1},a_{i+2}),\nonumber\\
		x_{i+2}&=f_{i+2}(a_{i+2}).
	\end{align}
	Again, there are $48$ of these vertices inherited from the bipartite case.
\end{itemize}

The signalling class of these vertices is a $3$-node disconnected DAG given by  $\{\{\star\}, \{\circ\rightarrow \bullet\}\}$. There are $6$ signalling graphs in this class under relabellings: there are $3$ choices of by-stander node, and for each such choice, $2$ choices of signalling direction for the remaining two nodes.

Each signalling graph contributes $4\times48=192$ bipartite causal signalling vertices since the by-stander party, $S_i$, has $4$ choices of functions $f_i$. Hence, the total number of bipartite causal signalling vertices is $6\times 4\times 48=1152$.
\subsection{Bipartite noncausal signalling vertices}
Here again 
one party, $S_i$, is a by-stander, not communicating with the other two in any way and for each choice of $i\in\{1,2,3\}$ and $f_i\in\{\textrm{Id}, \textrm{Flip},0,1\}$, we have $S_{i+1}\leftrightarrow S_{i+2}$, so that
\begin{align}
	x_i&=f_i(a_i),\\
	x_{i+1}&=f_{i+1}(a_{i+1},a_{i+2}),\\
	x_{i+2}&=f_{i+2}(a_{i+1},a_{i+2})
\end{align}
There are $144$ choices of functions $f_{i+1}$ and $f_{i+2}$ inherited from the bipartite case. Hence, each signalling graph contributes $4\times 144=576$ vertices.

The signalling class is a $3$-node directed graph (DG) given by $\{\{\star\}, \{\circ\leftrightarrow \bullet\}\}$. There are $3$ signalling graphs in this class under all possible labellings: one each for a choice of by-stander node. Hence, total number of bipartite noncausal signalling vertices is $3\times4\times 144=1728$.

\subsection{Tripartite fixed-order causal signalling vertices}
In the case of tripartite signalling vertices, we have that 
no party is a by-stander: all of them communicate by sending or receiving signals. Further, for tripartite \textit{fixed-order causal} signalling vertices, the signalling graph is a connected DAG. There are $4$ tripartite fixed-order causal signalling classes associated with the following DAGs:
\begin{enumerate}
	\item $\{\{\circ\rightarrow\star\}, \{\star\rightarrow\bullet\}\}$ 
	\item $\{\{\circ\leftarrow\star\}, \{\star\rightarrow\bullet\}\}$
	\item $\{\{\circ\rightarrow\star\}, \{\star\leftarrow\bullet\}\}$
	\item $\{\{\circ\leftarrow\star\}, \{\star\rightarrow\bullet\}, \{\circ\rightarrow\bullet\} \}$
\end{enumerate}

We can now characterize the vertices in each class.
\begin{enumerate}
	\item $\{\{\circ\rightarrow\star\}, \{\star\rightarrow\bullet\}\}$: There are $6$ signalling graphs in this class under all possible labellings (each permutation of parties counts).
	
	For each ordered set $(i_1,i_2,i_3)$, the vertices are given by
	\begin{align}
		x_{i_1}&=f_{i_1}(a_{i_1}),\\
		x_{i_2}&=f_{i_2}(a_{i_1},a_{i_2}),\\
		x_{i_3}&=f_{i_3}(a_{i_2},a_{i_3}),
	\end{align}
	with $2^2=4$ choices of $f_{i_1}$, $2^4-2^2=12$ choices of $f_{i_2}$ (excluding the cases where $x_{i_2}$ is independent of $a_{i_1}$), and $2^4-2^2=12$ choices of $f_{i_3}$ (excluding the cases where $x_{i_3}$ is independent of $a_{i_2}$).
	
	Hence, we have $6\times 4\times 12\times 12=3456$ vertices of this type.
	
	\item $\{\{\circ\leftarrow\star\}, \{\star\rightarrow\bullet\}\}$: There are $3$ signalling graphs in this class, one each for a choice of common cause node `$\star$'.
	
	For each choice of common cause node $i_2\in\{1,2,3\}$, the vertices are given by
	\begin{align}
		x_{i_1}&=f_{i_1}(a_{i_1},a_{i_2}),\\
		x_{i_2}&=f_{i_2}(a_{i_2}),\\
		x_{i_3}&=f_{i_3}(a_{i_2},a_{i_3}),
	\end{align}
	with $2^2=4$ choices of $f_{i_2}$, $2^4-2^2=12$ choices of $f_{i_1}$, and $2^4-2^2=12$ choices of $f_{i_3}$.
	
	Hence, we have $3\times 4\times 12\times 12=1728$ vertices of this type.
	
	\item $\{\{\circ\rightarrow\star\}, \{\star\leftarrow\bullet\}\}$: There are $3$ signalling graphs in this class, one each for a choice of common effect node `$\star$'.
	
	For each choice of common effect node $i_2\in\{1,2,3\}$, the vertices are given by
	\begin{align}
		x_{i_1}&=f_{i_1}(a_{i_1}),\\
		x_{i_2}&=f_{i_2}(a_{i_1},a_{i_2},a_{i_3}),\\
		x_{i_3}&=f_{i_3}(a_{i_3}),
	\end{align}
	with $2^2=4$ choices of $f_{i_1}$, $2^8-2^2-2\times 12=256-28=228$ choices of $f_{i_2}$,\footnote{Excluding the cases where $f_{i_2}$ is a function only of $a_{i_2}$ or only of $\{a_{i_1},a_{i_2}\}$ or only of $\{a_{i_2},a_{i_3}\}$} and $2^2=4$ choices of $f_{i_3}$.
	
	Hence, we have $3\times 4\times 228\times 4=10944$ vertices of this type.
	
	\item $\{\{\circ\leftarrow\star\}, \{\star\rightarrow\bullet\}, \{\circ\rightarrow\bullet\} \}$: There are $6$ signalling graphs in this class, one for each choice of common cause node (`$\star$') and the direction of signalling between the remaining nodes.
	
	For each choice of common cause node $i_1\in\{1,2,3\}$ and direction of signalling $S_{i_2}\rightarrow S_{i_3}$ between the remaining nodes, the vertices are given by
	\begin{align}
		x_{i_1}&=f_{i_1}(a_{i_1}),\\
		x_{i_2}&=f_{i_2}(a_{i_1},a_{i_2}),\\
		x_{i_3}&=f_{i_3}(a_{i_1},a_{i_2},a_{i_3}),
	\end{align}
	with $2^2=4$ choices of $f_{i_1}$, $2^4-2^2=12$ choices of $f_{i_2}$, and $2^8-2^2-2\times 12=256-28=228$ choices of $f_{i_3}$.\footnote{Excluding the cases where $f_{i_3}$ is a function only of $a_{i_3}$ or only of $\{a_{i_1},a_{i_3}\}$ or only of $\{a_{i_2},a_{i_3}\}$.}
	
	Hence, we have $6\times 4\times 12\times 228=65664$ vertices of this type.
\end{enumerate}
\subsection{Tripartite (dynamical-order) causal or Tripartite noncausal signalling vertices}
These vertices correspond to the following connected and cyclic directed graph representing their signalling class:
$\{\{\circ\leftarrow\star\}, \{\star\rightarrow\bullet\}, \{\circ\leftrightarrow\bullet\}\}$. 

We now characterize the vertices in this class. There are $3$ signalling graphs in this class, one for each choice of $2$-cycle, `$\{\circ\leftrightarrow\bullet\}$'. For each choice of $2$-cycle $(i_1, i_2)\in\{(1,2),(1,3), (2,3)\}$, the vertices are given by
\begin{align}
	x_{i_1}&=f_{i_1}(a_{i_1},a_{i_2},a_{i_3}),\\
	x_{i_2}&=f_{i_2}(a_{i_1},a_{i_2},a_{i_3}),\\
	x_{i_3}&=f_{i_3}(a_{i_3}),
\end{align}
with $2^8-2^2-2\times 12=256-4-24=228$ choices of $f_{i_1}$, $228$ choices of $f_{i_2}$, and $2^2=4$ choices of $f_{i_3}$.

Hence, we have $3\times228\times228\times4=623808$ vertices in this signalling class. Note that these vertices are neither all causal nor all noncausal: in particular, consider the following vertex: 
\begin{align}
	x_{i_1}&=a_{i_2}.a_{i_3},\nonumber\\
	x_{i_2}&=a_{i_1}.\bar{a}_{i_3},\nonumber\\
	x_{i_3}&=0.
\end{align}
This describes a classical switch and is, therefore, a (dynamical-order) causal signalling vertex: for $a_{i_3}=0$, we have $S_{i_1}\rightarrow S_{i_2}$ and for $a_{i_3}=1$, we have $S_{i_2}\rightarrow S_{i_1}$. This can also be verified by checking that this vertex satisfies the definition of a deterministic causal correlation in Eq.~\eqref{eq:causalcorrdet}.

On the other hand, consider another vertex:

\begin{align}
	x_{i_1}&=a_{i_2}\oplus a_{i_3},\nonumber\\
	x_{i_2}&=a_{i_1}\oplus a_{i_3},\nonumber\\
	x_{i_3}&=0.
\end{align}
This is a noncausal vertex since it fails the characterization of Eq.~\eqref{eq:causalcorrdet}: the proper subset $\{S_{i_1},S_{i_2}\}$ contains no party whose outcome is independent of the other party's setting in the subset, given \textit{any} setting of the party $S_{i_3}$ outside this subset. As noted previously, noncausal vertices in this signalling class (including the particular one here) are not achievable by a process matrix on account of Corollary \ref{cor:causalityofvertices}.
\subsection{Tripartite noncausal signalling vertices}
Tripartite noncausal signalling classes correspond to connected and cyclic DGs and there are $8$ of them:
\begin{enumerate}
	
	\item $\{\{\circ\rightarrow\star\}, \{\star\rightarrow\bullet\}, \{\circ\leftrightarrow\bullet\}\}$
	
	\item $\{\{\circ\rightarrow\star\}, \{\star\leftarrow\bullet\}, \{\circ\leftrightarrow\bullet\}\}$
	
	\item $\{\{\circ\rightarrow\star\}, \{\star\leftrightarrow\bullet\}\}$
	
	\item $\{\{\circ\leftarrow\star\}, \{\star\leftrightarrow\bullet\}\}$
	
	\item $\{\{\circ\leftrightarrow\star\}, \{\star\leftrightarrow\bullet\}\}$
	
	\item $\{\{\circ\leftrightarrow\star\}, \{\star\leftrightarrow\bullet\},\{\circ\rightarrow\bullet\}\}$
	
	\item $\{\{\circ\leftrightarrow\star\}, \{\star\leftrightarrow\bullet\},\{\circ\leftrightarrow\bullet\}\}$
	
	\item $\{\{\circ\rightarrow\star\}, \{\star\rightarrow\bullet\}, \{\bullet\rightarrow\circ\}\}$
\end{enumerate}
We can now characterize the vertices in each class.

\begin{enumerate}
	
	\item $\{\{\circ\rightarrow\star\}, \{\star\rightarrow\bullet\}, \{\circ\leftrightarrow\bullet\}\}$: There are $6$ signalling graphs in this class, one for each choice of $2$-cycle ($3$ such choices) and for each choice of which node in the cycle receives a signal from the external node ($2$ such choices; the node sending a signal to the external node is fixed by this choice).
	
	For each choice of $2$-cycle $(i_1, i_2)\in\{(1,2),(1,3), (2,3)\}$ and reception node $i_1$ (\textit{i.e.}, $S_{i_3}\rightarrow S_{i_1}$ and $S_{i_2}\rightarrow S_{i_3}$), the vertices are given by
	\begin{align}
		x_{i_1}&=f_{i_1}(a_{i_1},a_{i_2},a_{i_3}),\\
		x_{i_2}&=f_{i_2}(a_{i_1},a_{i_2}),\\
		x_{i_3}&=f_{i_3}(a_{i_2},a_{i_3}),
	\end{align}
	with $2^8-2^2-2\times 12=256-4-24=228$ choices of $f_{i_1}$, $2^4-2^2=12$ choices of $f_{i_2}$, and $2^4-2^2=12$ choices of $f_{i_3}$.
	
	Hence, we have $6\times228\times 12\times 12=196992$ vertices in this signalling class. All these vertices are noncausal: this follows from noting that the directed graph contains no root node, \textit{i.e.}, there does not exist any party whose outcome is independent of the settings of all the other parties; hence, all signalling graphs in this class fail to satisfy this necessary condition for a deterministic correlation to be causal (\textit{cf.}~Eq.~\eqref{eq:causalcorrdet}).
	
	From Corollary \ref{cor:siboncyc}, none of the vertices in this class is realizable by a process matrix.
	
	\item $\{\{\circ\rightarrow\star\}, \{\star\leftarrow\bullet\}, \{\circ\leftrightarrow\bullet\}\}$: There are $3$ signalling graphs in this class, one for each choice of $2$-cycle.
	
	For each choice of $2$-cycle $(i_1, i_2)\in\{(1,2),(1,3), (2,3)\}$, the vertices are given by
	\begin{align}
		x_{i_1}&=f_{i_1}(a_{i_1},a_{i_2}),\\
		x_{i_2}&=f_{i_2}(a_{i_1},a_{i_2}),\\
		x_{i_3}&=f_{i_3}(a_{i_1},a_{i_2},a_{i_3}),
	\end{align}
	with $2^4-2^2=12$ choices of $f_{i_1}$, $12$ choices of $f_{i_2}$, and $228$ choices of $f_{i_3}$.
	
	Hence, we have $3\times12^2\times 228=98496$ vertices of this type. All these vertices are noncausal since the signalling class contains no root nodes.
	
	From Corollary \ref{cor:siboncyc}, none of the vertices in this signalling class is realizable by a process-matrix.
	
	\item $\{\{\circ\rightarrow\star\}, \{\star\leftrightarrow\bullet\}\}$: There are $6$ signalling graphs in this class, one for each choice of $2$-cycle and of the node in the $2$-cycle that receives signal from the external node.
	
	For each choice of $2$-cycle $(i_1, i_2)\in\{(1,2),(1,3), (2,3)\}$ and of the reception node (say $i_1$), the vertices are given by
	\begin{align}
		x_{i_1}&=f_{i_1}(a_{i_1},a_{i_2},a_{i_3}),\\
		x_{i_2}&=f_{i_2}(a_{i_1},a_{i_2}),\\
		x_{i_3}&=f_{i_3}(a_{i_3}),
	\end{align}
	with $228$ choices of $f_{i_1}$, $12$ choices of $f_{i_2}$, and $4$ choices of $f_{i_3}$.
	
	Hence, we have $6\times228\times12\times4=65664$ vertices in this signalling class. This signalling class contains a root node, namely party $S_{i_3}$, so it satisfies this necessary condition for its vertices to be causal. However, the non-trivial dependence of $x_{i_1}$ on $a_{i_2}$ and $a_{i_3}$ entails that there exists \textit{at least one} value of $a_{i_3}$ such that $x_{i_1}$ depends non-trivially on $a_{i_2}$ for all the vertices in this signalling class. On the other hand, for all values of $a_{i_3}$, $x_{i_2}$ depends non-trivially on $a_{i_1}$ for all the vertices in this signalling class. Hence, we have that all vertices in this signalling class are noncausal since the proper subset of parties $\{S_{i_1},S_{i_2}\}$ does not admit any party whose outcome is independent of the setting of the other party for all settings of party $S_{i_3}$: in particular, there is at least one setting of $S_{i_3}$ for which this independence cannot hold.
	
	From Corollary \ref{cor:siboncyc}, none of the vertices in this signalling class is achievable by a process matrix.
	
	\item $\{\{\circ\leftarrow\star\}, \{\star\leftrightarrow\bullet\}\}$: There are $6$ signalling graphs in this class, one for each choice of $2$-cycle and of the node in the $2$-cycle that signals to the external node.
	
	For each choice of $2$-cycle $(i_1, i_2)\in\{(1,2),(1,3), (2,3)\}$ and of the sender node (say $i_1$), the vertices are given by
	\begin{align}
		x_{i_1}&=f_{i_1}(a_{i_1},a_{i_2}),\\
		x_{i_2}&=f_{i_2}(a_{i_1},a_{i_2}),\\
		x_{i_3}&=f_{i_3}(a_{i_1},a_{i_3}),
	\end{align}
	with $12$ choices of $f_{i_1}$, $12$ choices of $f_{i_2}$, and $12$ choices of $f_{i_3}$.
	
	Hence, we have $6\times12^3=10368$ vertices in this signalling class. All these vertices are noncausal since the directed graph admits no root node.
	
	From Corollary \ref{cor:siboncyc}, none of the vertices in this signalling class is achievable by a process matrix.
	
	\item $\{\{\circ\leftrightarrow\star\}, \{\star\leftrightarrow\bullet\}\}$: There are $3$ signalling graphs in this class, one for each choice of common node (that sends and receives signals from the remaining two nodes).
	
	For each choice of common node $i_1\in\{1,2,3\}$, the vertices are given by
	\begin{align}
		x_{i_1}&=f_{i_1}(a_{i_1},a_{i_2},a_{i_3}),\\
		x_{i_2}&=f_{i_2}(a_{i_1},a_{i_2}),\\
		x_{i_3}&=f_{i_3}(a_{i_1},a_{i_3}),
	\end{align}
	with $228$ choices of $f_{i_1}$, $12$ choices of $f_{i_2}$, and $12$ choices of $f_{i_3}$.
	
	Hence, we have $3\times228\times 12^2=98496$ vertices in this signalling class. All these vertices are noncausal since the directed graph admits no root node.
	
	From Corollary \ref{cor:siboncyc}, none of the vertices in this signalling class is achievable by a process matrix.
	
	\item $\{\{\circ\leftrightarrow\star\}, \{\star\leftrightarrow\bullet\},\{\circ\rightarrow\bullet\}\}$: There are $6$ signalling graphs in this class, one for each choice of common node (that sends and receives signals from the remaining two nodes) and the choice of signalling direction between the remaining nodes.
	
	For each choice of common node $i_1\in\{1,2,3\}$ and signalling direction between the remaining nodes (say $S_{i_2}\rightarrow S_{i_3}$), the vertices are given by
	\begin{align}
		x_{i_1}&=f_{i_1}(a_{i_1},a_{i_2},a_{i_3}),\\
		x_{i_2}&=f_{i_2}(a_{i_1},a_{i_2}),\\
		x_{i_3}&=f_{i_3}(a_{i_1},a_{i_2},a_{i_3}),
	\end{align}
	with $228$ choices of $f_{i_1}$, $12$ choices of $f_{i_2}$, and $228$ choices of $f_{i_3}$.
	
	Hence, we have $6\times228^2\times 12=3,742,848$ vertices of this type. All these vertices are noncausal since the directed graph admits no root node.
	
	From Corollary \ref{cor:siboncyc}, none of the vertices in this signalling class is achievable by a process matrix.
	
	\item $\{\{\circ\leftrightarrow\star\}, \{\star\leftrightarrow\bullet\},\{\circ\leftrightarrow\bullet\}\}$: There is $1$ signalling graph in this class (all labellings are equivalent).
	
	The vertices are given by
	\begin{align}
		x_{i_1}&=f_{i_1}(a_{i_1},a_{i_2},a_{i_3}),\\
		x_{i_2}&=f_{i_2}(a_{i_1},a_{i_2},a_{i_3}),\\
		x_{i_3}&=f_{i_3}(a_{i_1},a_{i_2},a_{i_3}),
	\end{align}
	with $228$ choices for each $f_{i_k}$, $k\in\{1,2,3\}$.
	
	Hence, we have $228^3=11,852,352$ vertices in this signalling class. All these vertices are noncausal since the directed graph admits no root node.
	
	This is the only tripartite noncausal signalling class that admits vertices realizable by the process-matrix framework: this follows from our Theorem \ref{thm:det} and the result of Ref.~\cite{TB22} that the directed graph associated with this signalling class is the unique one that is an admissible causal structure in the process-matrix framework and also realizes a noncausal vertex.
	
	In particular, the AF/BW correlation \cite{BW16} falls in this class and is given by
	\begin{align}
		&f_{i_1}(a_{i_1},a_{i_2},a_{i_3})=(a_{i_2}\oplus 1)a_{i_3},\\ 
		&f_{i_2}(a_{i_1},a_{i_2},a_{i_3})=(a_{i_3}\oplus 1)a_{i_1},\\ 
		&f_{i_3}(a_{i_1},a_{i_2},a_{i_3})=(a_{i_1}\oplus 1)a_{i_2}.
	\end{align} 
	
	\item $\{\{\circ\rightarrow\star\}, \{\star\rightarrow\bullet\}, \{\bullet\rightarrow\circ\}\}$: There are $2$ signalling graphs in this class (cyclic permutations of the labels are equivalent).
	
	For cyclic permutations of $S_1\rightarrow S_2\rightarrow S_3\rightarrow S_1$, the vertices are given by
	\begin{align}
		x_{1}&=f_{1}(a_{1},a_{3}),\\
		x_{2}&=f_{2}(a_{1},a_{2}),\\
		x_{3}&=f_{3}(a_{2},a_{3}),
	\end{align}
	with $12$ choices for each $f_{k}$, $k\in\{1,2,3\}$. There are $12^3=1728$ vertices with this signalling graph.
	
	For cyclic permutations of $S_1\rightarrow S_3\rightarrow S_2\rightarrow S_1$, the vertices are given by
	\begin{align}
		x_{1}&=f_{1}(a_{1},a_{2}),\\
		x_{2}&=f_{2}(a_{2},a_{3}),\\
		x_{3}&=f_{3}(a_{1},a_{3}),
	\end{align}
	with $12$ choices for each $f_{k}$, $k\in\{1,2,3\}$. There are $12^3=1728$ vertices with this signalling graph.
	
	Overall, we have $2\times 1728=3456$ vertices in this signalling class. All these vertices are noncausal since the directed graph admits no root node.
	
	From Corollary \ref{cor:siboncyc}, none of the vertices in this signalling class is achievable by a process matrix.
\end{enumerate}
As can be readily verified, the total number of vertices from the above classification scheme adds up to $8^8$, \textit{i.e.}, the total number of vertices in the $(3,2,2)$ correlational scenario.
\end{document}